\newtheorem{theorem}{Theorem}[section]
\newtheorem{theorem*}{Theorem}
\newtheorem{proposition}[theorem]{Proposition}
\newtheorem{lemma}[theorem]{Lemma}
\newtheorem{claim}[theorem]{Claim}
\newtheorem{corollary}[theorem]{Corollary}
\newtheorem{remark}[theorem]{Remark}
\theoremstyle{definition}
\newtheorem{definition}[theorem]{Definition}
\newcommand{\beq}{\begin{eqnarray}}
\newcommand{\eeq}{\end{eqnarray}}
\newcommand{\code}{\mathscr{C}}
\newcommand{\strategy}{\mathscr{S}}
\newcommand{\Tr}{\mbox{\rm Tr}}
\newcommand{\Id}{\ensuremath{I}}
\DeclareMathOperator*{\Expectation}{\mathbb{E}}
\newcommand{\Es}[1]{\Expectation_{#1}}
\newcommand{\field}{\mathbb{F}_2}
\newcommand{\C}{\ensuremath{\mathbb{C}}}
\newcommand{\N}{\ensuremath{\mathbb{N}}}
\newcommand{\complex}{\ensuremath{\mathbb{C}}}
\newcommand{\dlS}{\ensuremath{\rm dlS}}
\newcommand{\F}{\ensuremath{\mathbb{F}}}
\newcommand{\ld}{\textsc{ld}}
\newcommand{\com}{\textsc{com}}
\newcommand{\sq}{\textsc{sq}}
\newcommand{\R}{\ensuremath{\mathbb{R}}}
\newcommand{\Z}{\ensuremath{\mathbb{Z}}}
\newcommand{\mA}{\ensuremath{\mathcal{A}}}
\newcommand{\mB}{\ensuremath{\mathcal{B}}}
\newcommand{\mC}{\ensuremath{\mathcal{C}}}
\newcommand{\mE}{\ensuremath{\mathcal{E}}}
\newcommand{\mF}{\ensuremath{\mathcal{F}}}
\newcommand{\mH}{\ensuremath{\mathcal{H}}}
\newcommand{\mK}{\ensuremath{\mathcal{K}}}
\newcommand{\mM}{\ensuremath{\mathcal{M}}}
\newcommand{\mI}{\ensuremath{\mathcal{I}}}
\newcommand{\cM}{\ensuremath{\mathcal{M}}}
\newcommand{\mP}{\ensuremath{\mathcal{P}}}
\newcommand{\mR}{\ensuremath{\mathcal{R}}}
\newcommand{\mU}{\ensuremath{\mathcal{U}}}
\newcommand{\mX}{\ensuremath{\mathcal{X}}}
\newcommand{\RM}{\ensuremath{\textsc{RM}}}
\newcommand{\RS}{\ensuremath{\textsc{RS}}}
\newcommand{\bRM}{\ensuremath{\textsc{RM2}}}
\newcommand{\Had}{\ensuremath{\textsc{Had}}}
\newcommand{\cc}{\mathrm{com}}
\newcommand{\ac}{\mathrm{ac}}
\DeclareMathOperator{\poly}{poly}
\newcommand{\had}{\textsc{Had}}
\newcommand{\eps}{\varepsilon}
\newcommand{\mN}{\mathcal{N}}
\DeclareMathOperator{\polylog}{polylog}
\DeclareMathOperator{\tr}{tr}
\newcommand{\eq}{\mathrm{eq}}
\newcommand{\var}{\mathrm{var}}
\newcommand{\game}{\mathfrak{G}}
\newcommand{\gamestyle}[1]{\ensuremath{\textsc{#1}}\xspace}
\newcommand{\pauli}{\gamestyle{P}}
\newcommand{\labelstyle}[1]{\ensuremath{\textsc{#1}}\xspace}
\newcommand{\alice}{\labelstyle{A}}
\newcommand{\bob}{\labelstyle{B}}
\newcommand{\class}[1]{\ensuremath{\mathsf{#1}}\xspace}
\newcommand{\QMIP}{\class{QMIP}} %
\newcommand\QMIP*{\ensuremath{\class{QMIP}^*}} %
\newcommand{\MIP}{\class{MIP}} %
\newcommand{\RE}{\class{RE}} %
\newcommand\MIP*{\ensuremath{\class{MIP}^*}} %
\renewcommand{\cal}[1]{\mathcal{#1}}
\mathchardef\mhyphen="2D
\newenvironment{gamespec}{
  \begin{mdframed}[style=figstyle]}{
  \end{mdframed}}
\newcommand{\tnote}[1]{}
\begin{document}

\begin{frontmatter}[classification=text]

\title{Efficiently Stable Presentations from Error-Correcting Codes} 

\author[mc]{Michael Chapman\thanks{Supported by the National Science Foundation under Grant No. DMS-2424441 and Simons Foundation grant N. 965535}}
\author[tv]{Thomas Vidick\thanks{Supported by the Swiss State Secretariat for Education, Research and Innovation (SERI). Parts of this work was completed while at the Weizmann Institute, supported by a research grant from the Center for New Scientists at the Weizmann Institute of Science, AFOSR Grant No. FA9550-22-1-0391, and ERC Consolidator Grant VerNisQDevS (101086733).}}
\author[hy]{Henry Yuen\thanks{Supported by AFOSR award FA9550-23-1-0363, CCF-2144219, and the Sloan Foundation.}}

\begin{abstract}
We introduce a notion of \emph{efficient stability} for finite presentations of groups. Informally, a finite presentation using generators $S$ and relations $R$ is \emph{stable} if any map from $S$ to unitaries 
that approximately satisfies the relations (in the tracial norm) is close to the restriction of a representation of $G$ to the subset $S$. This notion and variants thereof have been extensively studied in recent years, in part motivated by connections to property testing in computer science. The novelty in our work is the focus on \emph{efficiency}, which, informally, places an onus on small presentations --- in the sense of encoding length.  
The goal in this setup is to achieve non-trivial tradeoffs between the presentation length and its modulus of stability.

With this goal in mind we analyze various natural examples of presentations. We provide a general method for constructing presentations of $\Z_2^k$ from linear error-correcting codes. We observe that the resulting presentation has a weak form of stability exactly when the code is  \emph{testable}. This raises the question of whether testable codes give rise to genuinely stable presentations using this method.  While we cannot show that this is the case in general, we leverage recent results in the study of non-local games in quantum information theory (Ji et al., Discrete Analysis 2021) to show that a specific instantiation of our construction, based on the Reed-Muller family of codes, leads to a stable presentation of $\Z_2^k$ of size $\poly\log(k)$ only. As an application, we combine this result with recent work of de la Salle (arXiv:2204.07084) to re-derive  the quantum low-degree test of Natarajan and Vidick (IEEE FOCS'18), which is a key building block in the recent refutation of Connes' Embedding Problem via complexity theory (Ji et al., arXiv:2001.04383). 
\end{abstract}
\end{frontmatter}

\section{Introduction}

\paragraph{Motivation.}
A linear error-correcting code $\code$ is a $k$-dimensional subspace of the vector space $\F^n$ over a finite field $\F$ that has certain combinatorial properties. The foremost of these is the \emph{minimal distance} $d$, which is defined as the smallest Hamming weight (number of nonzero coordinates) $|c|$ of a nonzero vector  $c\in\code$. In general one would like to design families of codes of increasing length $n$, such that both $k$ and $d$ are bounded below by a positive linear function of $n$. Such codes are referred to as ``good'' codes.

A finer property which concerns us here is the \emph{soundness} of the code, a parameter that is connected to the notion of \emph{testability}. A code can be (non-uniquely) specified through a \emph{parity-check matrix} $h\in\F^{m\times n}$ as $\code = \ker h$. The rows of $h$ are thought of as constraints (``parity checks'') that specify $\code$ as a subspace of $\F^n$. A code is called testable \emph{with soundness $\rho$} if for every $x\in \F^n$, $\frac{1}{m}|hx|\geq \rho \, \frac{1}{n}\, d(x,\code)$, where $d(x,\code)$ denotes the minimum of $|x-c|$ over $c\in \code$.\footnote{In the literature, the notion of \emph{local} testability is emphasized, where in addition the parity-check matrix is required to have rows of low Hamming weight. This requirement is less important for us, and so we de-emphasize it.}  Ideally one would like to design families of good codes such that in addition $\rho$ is bounded below by a constant independent of $n$.\footnote{Note that in principle a code can have a small minimal distance $d$, and still be testable with soundness $\rho>0$. So a family of codes can be ``testable'' without being ``good.''} This is a challenging task, and the construction of families of good (locally) testable codes was only achieved very recently~\cite{LTC_DELLM,LTC_Panteleev_Kalachev}. 

The terminology ``testable'' comes from an interpretation of $\frac{1}{m}|hx|$ as the probability of rejection of a natural ``tester'' for $\code$, i.e.\ an algorithm that on input $x$ checks a randomly chosen row $h_i$ of $h$ and accepts if and only if $h_i\cdot x = \sum_j h_{ij} x_j =0$.  Thus a code is called testable if words that are far from the code have a high probability of being rejected according to this tester. This notion (when accompanied by the locality constraint) plays a central role in applications of codes to complexity theory~\cite{babai1991non,PCP_thm}, and continues to be actively studied. See e.g.~\cite{LTC_DELLM,LTC_Panteleev_Kalachev} for a recent breakthrough on the topic. 

We make an observation that connects the study of testable codes to questions of stability in group theory and motivates our work. Let $\code=\ker h$ be a linear code as above, and suppose for simplicity that $\F=\F_2$ is the binary field. Consider the finitely presented group 
\begin{align}
 G(h) \,=\, \langle S:R\rangle \,=\, \big\langle &x_1,\ldots,x_n \,:\quad  x_j^2=e\quad \forall 1\leq j\leq n\;,\notag\\
& \quad \prod_{1\leq j \leq n} x_j^{h_{ij}} = e\;,\quad [x_{j_1},x_{j_2}]^{h_{ij_1} h_{ij_2}}=e\quad \forall 1\leq i\leq m\,\forall 1\leq j_1,j_2\leq n\,\big\rangle\;.\label{eq:gh-intro}
\end{align}
Here a commutation relation between two generators  $[x_{j_1},x_{j_2}]=e$ is imposed only when needed for the relations $\prod_{1\leq j \leq n} x_j^{h_{ij}} = e$ to make sense, i.e.\ two generators are required to commute only if they both take part in the same equation, which is the case if and only if $h_{ij_1}h_{ij_2}=1$ for some $i$. While one could add all pairwise commutation relations, forcing $G(h)$ to be abelian --- and indeed we will do this in some cases later --- we choose the specific presentation~\eqref{eq:gh-intro} for consistency with the literature on non-local games, and particularly so-called \emph{linear constraint system games}, which we review in more detail below. 

We observe that $1$-dimensional representations, i.e.\ maps from $S=\{x_1,\ldots,x_n\}$ to $\{-1,1\}$ that satisfy all relations $R$, are in one-to-one correspondence with elements of $\code$. Moreover, \emph{approximate} $1$-dimensional representations, i.e.\ maps from  $S=\{x_1,\ldots,x_n\}$ to $\{-1,1\}$ that satisfy a fraction $1-\eps$ of the matrix relations for small $\eps$, can be identified with words $x\in\F_2^n$ such that $\frac{1}{m}|hx|\leq \eps$.\footnote{We conditioned only on the matrix relations since the commutation and involution relations are automatically satisfied by our choice of range $\{\pm1\}$. In principle we could allow the map to range over $U(\C)$ instead of $\{-1,1\}$, and in general we will allow this. But, for the purposes of this introduction, it is simpler to restrict to the $\{-1,1\}$-valued case.} In particular, we notice that $\code$ is testable with soundness $\rho$ if and only if $\eps$-approximate $1$-dimensional representations of $G(h)$ are $\nicefrac{C\eps}{\rho}$-close to genuine $1$-dimensional representations, where $C$ depends on the distribution we choose over the relations. E.g., if we check an involution relation with probability $\nicefrac{1}{3}$, a commutation relation with probability $\nicefrac{1}{3}$, and a matrix relation with probability $\nicefrac{1}{3}$, then $C=3$.

\paragraph{Group stability.}
This observation immediately raises many questions. The problem of relating approximate representations of a group to exact representations of it is termed \emph{stability} in group theory, and has a long history. There are of course many flavors of the problem, depending on how one defines closeness (should it be the operator norm or the Hilbert-Schmidt norm? Should closeness hold for every relation, or every pair of group elements, or is it sufficient that it holds on average? Etc.) We will review some relevant results in this area below. For now, we mention Voiculescu's famous counter-example~\cite{voiculescu1983asymptotically} about approximately commuting unitaries, which was motivated by a question of Halmos on pairs of approximately commuting Hermitian operators~\cite{halmos1976some}. Using the terminology of stability, Voiculescu showed that the presentation $\Z^2 = \langle x,y:[x,y]=e\rangle$ is \emph{not} stable with respect to the operator norm. However, much more recently Glebsky~\cite{glebsky2010almost} showed that the same presentation \emph{is} stable with respect to the normalized Hilbert-Schmidt norm. 

This example and many others show that the notion of stability is, in general, highly sensitive to the notion of closeness considered. Returning to our main concern, so far we have argued that 
stability of approximate \emph{$1$-dimensional} representations of the presentation $G(h)$ is connected to local testability of the code $\ker h$. In the case of $1$-dimensional representations of course the choice of norm does not matter; however, the choice of measuring the error on average over relations, as opposed to e.g.\ taking the maximum, is one in which we depart from most of the literature. We will motivate this choice further below; but before we can continue we must pause to introduce the key definitions that our work builds on. For the purposes of the introduction we focus the discussion on finite-dimensional representations; in the main paper we handle the general case of representations in a tracial von Neumann algebra. Let $\mU(\C^d)$ denote the unitary operators on $\C^d$, and for a set $S$ let $\mF(S)$ denote the free group generated by the elements of $S$. For $X\in \C^{d\times d}$ let $\|X\|_{hs}^2 = \frac{1}{d}\Tr(X^* X)$ denote the (normalized) Hilbert-Schmidt norm.\footnote{This norm is also commonly called the normalized \emph{Frobenius} norm. Following \cite{dechiffre_glebsky_lubotzky_thom_2020}, it became common in stability theory to call the normalized version Hilbert--Schmidt and the un-normalized version Frobenius. In any case, in the main part of this paper, we relate to it as the tracial norm because of the von Neumann algebraic framework we use.}

\begin{definition}[Almost homomorphism]\label{def:approx-hom-intro}
Let $G = \langle S:R\rangle $ be a finitely presented group and $\mu_R$ a distribution on $R$. An $(\eps,\mu_R)$-almost homomorphism of $G$ is a homomorphism $\phi:\mF(S)\to\mU(\C^d)$ for some $d\geq 1$ such that
\[ \Es{r\sim \mu_R} \big\|  \phi(r) - \Id \big \|_{hs}^2 \,\leq\, \eps\;.\]
\end{definition}

As already mentioned this definition makes two important choices: firstly, to measure closeness in the Hilbert-Schmidt norm, and secondly, to measure it on average over the choice of a relation. Next we give our definition for a finitely presented group to be stable; see Definition~\ref{def:eff-stab} for the general setting. 

\begin{definition}[Stability]\label{def:eff-stab-intro}
Let $G = \langle S:R\rangle $ be a finitely presented group, $\mu_S$ a distribution on $S$ and $\mu_R$ a distribution on $R$. For $\delta:[0,1]\to[0,1]$ such that $\lim_{t\to 0}\delta(t)=0$ and an integer $d\geq 1$ we say that the presentation $G=\langle S:R\rangle$ is $(\delta,\mu_S,\mu_R,d)$-stable if for every $(\eps,\mu_R)$-almost homomorphism $\phi: \mF(S) \to \mU(\C^d)$ there is a unitary representation $\psi: G \to \mU(\C^d)$ such that
\[ \Es{s\sim \mu_S} \big\|  \phi(s) - \psi(s) \big \|_{hs}^2 \,\leq\, \delta(\eps)\;.\]
We refer to any function $\delta$ satisfying the above as a ``modulus of stability'' of the presentation.
\end{definition}

\begin{remark}[The $L^\infty$ analogue]\label{rem:L^infty_analogue_defn}
    As mentioned  before, it is common to call a homomorphism $\phi\colon \mF(S)\to \mU(\complex^d)$ an $\eps$-almost homomorphism of $G=\langle S\colon R\rangle,$  if $\max_{r\in R}\Vert \phi(r)-\Id\Vert_{hs}^2\leq \eps$. Furthermore, the distance between two homomorphisms $\phi,\psi\colon \mF(S)\to \mU(\complex^d)$ is ususally taken to be $\max_{s\in S}\Vert \phi(s)-\psi(s)\Vert_{hs}^2$. The notion of stability induced by these definitions of `almost' and `close' is more commonly used. Note that when studying a fixed finitely presented group, and without caring about the exact modulus of stability, there is no difference between the two definitions.  But, when one cares about the modulus of stability, which is the case  when viewing stability as a property testing problem, our framework is the more natural one. 
\end{remark}

In this paper we also consider a version of \emph{flexible} stability, i.e.\ the representation $\psi$ is allowed to range in $U_{d'}(\C)$ for some $d'\neq d$; see Definition~\ref{def:eff-stab} for the general definition. This requires a more careful definition of closeness; for now we restrict our attention to the simpler definition. 

We can now ask the question: is $G(h)$, the group presentation defined in~\eqref{eq:gh-intro}, stable according to Definition~\ref{def:eff-stab-intro}? 
One can verify that with $\mu_R$ which chooses with probability $\nicefrac{1}{3}$ whether to check an involution, a commutation or a matrix row,  and $\mu_S$ the uniform distribution, $G(h)$ is $(\delta,\mu_S,\mu_R,1)$-stable for $\delta=3\eps/\rho$ if and only if $\code = \ker h$ is testable with soundness $\rho>0$. But what about higher-dimensional approximate representations? Are these also stable, or does one need to make further requirements on $\code$ beyond local testability? We do not yet have a comprehensive answer to these questions. However, the answer cannot be straightforward: even deducing basic properties about the group $G(h)$ given its presentation -- let alone its stability properties -- appears to be a challenging task. For example, there are examples of parity check matrices $h$ for which the group $G(h)$ is non-abelian or even non-amenable; see Remark~\ref{rk:non-abelian}. In fact, every finitely generated group can be embedded into $G(h)$ for some $h$ (see \cite{slofstra2019set}).

\paragraph{Efficient stability.}
Faced with the apparent difficulty of studying the general question, it is time to refine our focus and formulate the question which we \emph{do} address. To start, let us explicitly note that 
while stability has previously (for the most part) been studied as a question about a group, our formulation makes it a question about a \emph{presentation} of the group. In particular it is known~\cite{gowers2017inverse,de2019operator} that for finite groups $G$ (which are the only groups we consider in this paper) the multiplication table presentation, which has $|G|$ generators, one  for every group element, and $|G|^2$ relations, one for every pairwise product, is $\delta(\eps)=C\eps$-flexibly stable\footnote{Here we assume that $\mu_R,\mu_S$ are uniform over the relations and generators of the multiplication table presentation, respectively. Furthermore, the results of~\cite{gowers2017inverse,de2019operator} apply to a notion of \emph{flexible} stability, where the nearby exact representation may act on a space of larger, but not too much larger, dimension. See  Definition \ref{def:eff-stab}.} for some constant $C$ that is independent of the group. This holds even for approximate representations in arbitrary tracial von Neumann algebras. 

To simplify the problem let us consider the following presentation for an obviously finite and abelian group
\begin{align}
  \widetilde{G(h)} \,=\, \langle S:R\rangle \,=\, \big\langle &x_1,\ldots,x_n \,:\quad  x_j^2=e\quad \forall 1\leq j\leq n\;,\notag\\
 & \quad \prod_{1\leq j \leq n} x_j^{h_{ij}} = e\;,\quad [x_{j_1},x_{j_2}]=e\quad \forall 1\leq i\leq m\,\forall 1\leq j_1,j_2\leq n\,\big\rangle\;.\label{eq:wtgh-intro}
 \end{align}
This is the same as $G(h)$, except that all pairwise commutations have been added. As we show formally later (see Lemma~\ref{lem:com-code}), it is not hard to check that $\widetilde{G(h)}$ is a presentation of the group $\Z_2^k$, for $k=\dim\ker h$. Our most important results pertain to the stability of the presentation $\widetilde{G(h)}$; although we will later (Section~\ref{sec:braiding}) consider some non-abelian extensions of it. 

\newcommand{\wtG}{\widetilde{G(h)}}

Importantly for us, the results of~\cite{gowers2017inverse,de2019operator} do not imply the same quantitative stability bounds for $\wtG$ with respect to its defining presentation~\eqref{eq:wtgh-intro}. The reason to prefer the presentation~\eqref{eq:wtgh-intro} as opposed to the multiplication table presentation of $\wtG$ is that~\eqref{eq:wtgh-intro} is much more succinct: if $n,k$ and $m$ are linearly related then it has $\textrm{poly}(k)$ generators and relations, as opposed to $2^k$ and $2^{2k}$ respectively. Such a gain is essential when one recalls the interpretation of the presentation $\wtG$ as a ``tester'' for $G$ --- the size of the presentation is then directly related to the amount of \emph{randomness} required by the tester (to sample a random relation); and in computer science applications randomness is seen as an essential resource (we discuss this more below, in the context of quantum computing). 


Can such ``efficient'' presentations of $G(h)$ still be stable? The naive approach, of extending an $\eps$-approximate homomorphism of $G(h)=\langle S:R\rangle$ into an $\eps'$-approximate homomorphism of the multiplication table presentation of $\Z_2^k$, leads to $\eps'=\Omega(k^2\eps)$ and hence a logarithmic dependence of the modulus of stability on the group size. Is it possible to do better? Glebsky's result for the case of $\Z^2$, which is an infinite group, suggests that in some cases the modulus of stability can be independent of the group size. It is therefore not clear what if any of the known group parameters should play a role in it in general.  

The main result of this paper is to exhibit presentations of $\Z_2^k$ (and of slightly more complex $2$-groups built on them, e.g. the \emph{Pauli}\footnote{This group is commonly referred to as the (multi-dimensional) \emph{Heisenberg} group over $\field$, or  the \emph{Weyl--Heisenberg} group.} group ubiquitous in quantum information theory) that are \emph{efficiently stable}: the size of the presentation is quasi-polynomial in $k$ (as opposed to exponential), yet the modulus of stability only depends poly-logarithmically on $k$. These presentations are constructed from specific error-correcting codes, namely the Reed--Muller polynomial codes, which are known to have good local testability properties~\cite{babai1991non}. Our main result on group stability can be stated as follows (see Theorem~\ref{thm:z2-stab} for the precise statement).

\begin{theorem*}[Main, informal]\label{thm:main-inf}
For every integer $k\geq 1$ there is a presentation $\Z_2^k = \langle S_k:R_k\rangle$ such that $|S_k|,|R_k| = 2^{\poly\log(k)}$ and furthermore this presentation is $(\delta,\mu_{S_k},\mu_{R_k},d)$-stable for all $d \in \N$, where $\delta(\eps)=\poly(\log k,\eps)$,\footnote{We use the notation $f(a,b,c,\ldots)=\poly(a,b,c,\ldots)$ to mean that there exists constants $C,c_1,c_2,c_3,\ldots$, all positive, such that $|f(a,b,c,\ldots)|\leq C a^{c_1} b^{c_2} c^{c_3}\cdots$ for all $a,b,c,\ldots$ in their range.\label{ft:poly}} $\mu_{S_k}$ is uniform over the generators and $\mu_{R_k}$ is some distribution over the relations. 
\end{theorem*}

Most of the technical legwork required to prove the theorem is due to prior work in the study of nonlocal games in quantum computing, and in particular~\cite{ji2020quantum} (we explain this connection below). Our contribution in the present work is to make an explicit connection with stability and show how the former results can be ``imported'' to obtain new stability results such as the one stated in our main theorem above. 

We do not know of any other presentation of $\Z_2^k$, arguably one of the simplest groups one could think of, that is stable with similar parameters as the ones stated in the theorem. It would be very interesting to discover different such presentations, built from testable error-correcting codes or not, for this group or others. 

\paragraph{Nonlocal games in quantum information theory.}
To motivate our focus on \emph{efficient} stability we now sketch a connection between our results and problems in quantum complexity theory and in particular the theory of nonlocal games that motivate us. As a result we will recover a key technical result used in the proof of the complexity result $\sf MIP^* = \sf RE$~\cite{ji2020mip} and its corresponding resolution of the Connes' Embedding Problem.

 A \emph{nonlocal game} $\game$ is specified by the following data: finite question and answer sets $\mX$ and $\mA$ respectively, a distribution $\mu$ on $\mX\times \mX$, and a decision predicate $D:\mX\times \mX\times \mA\times \mA\to \{0,1\}$ (see Section~\ref{sec:nl-games} for details). The interpretation of $\game=(\mX,\mu,\mA,D)$ as a game is as follows. A ``referee'' is imagined to sample a pair of ``questions'' $(x,y)\sim \mu$. Each question is sent to a different player, who is tasked with responding with an answer $a,b\in \mA$ respectively. Finally, the referee decides that the players win the game if and only if $D(x,y,a,b)=1$. Interestingly, the maximum success probability of the players in this game, where the probability is over the referee's choice of questions and any randomness in the player's strategy, and the maximum is taken over all allowed strategies, depends on whether one relies on ``classical'' or ``quantum'' interpretations of the game to determine appropriate mathematical formalization of the set of strategies that the players may employ. While a classical viewpoint naturally models a strategy as a pair of functions $f,g:\mX\to\mA$, one for each player, quantum mechanics invites one to consider a broader set of strategies in which an additional form of coordination between the players is allowed in the form of \emph{shared quantum entanglement}. Understanding when there is a gap between the resulting maxima, and how large this gap can be, is of great interest in the foundations of quantum mechanics. To study this question one is drawn to investigate the structure of optimal quantum strategies in a game, and how to design games that enforce a specific structure --- informally, forcing as much ``non-classicality'' in winning strategies as possible. Beyond their foundational appeal, the theory of nonlocal games has had a very large impact in quantum cryptography (such as the analysis of device-independent quantum key distribution protocols~\cite{vazirani2019fully,arnon2019simple}) and quantum complexity, in particular the theory of multiprover interactive proof systems~\cite{cleve2004consequences}.

For concreteness let us focus on a class of games called \emph{linear constraint system} (LCS) games. These games were introduced in~\cite{cleve2014characterization} and their study plays a central role in the celebrated result by Slofstra showing non-closure of the set of quantum correlations~\cite{slofstra2019set}. A linear constraint system game is parametrized by a matrix $h\in \F^{m\times n}$. In the game $\game_h$, the referee selects a pair $(i,j)\in \{1,\ldots,m\}\times\{1,\ldots,n\}$ by first sampling $i$ uniformly at random, and then sampling $j$ uniformly at random conditioned on $h_{ij}=1$. They send $i$ to the first player and $j$ to the second. The first player returns values in $\F$ for \emph{each} $j'$ such that $h_{ij'}\neq 0$, whereas the second players returns a single value in $\F$. The players win if the first player's answers satisfy the parity constraint, and the players' answers are consistent (the first player's answer associated with index $j$ matches the second player's answer). 

Now we see that to each matrix $h$ we have associated a group $G(h)$, and a game $\game_h$.\footnote{A similar correspondence holds between $\wtG$ and a natural associated game $\widetilde{\game}_h$; our main results apply to the latter, but the present discussion is more general and applies to both.} Moreover, and quite interestingly, there is a one-to-one correspondence between representations of $G(h)$ and \emph{perfect strategies} in $\game_h$, i.e.\ strategies that have success probability $1$ in the game.\footnote{This correspondence was established for finite-dimensional strategies, and finite-dimensional representations, in~\cite{cleve2014characterization}. Extensions to infinite-dimensional strategies and representations have appeared in~\cite{cleve2017perfect}. See also e.g.~\cite{kim2018synchronous} for generalizations to the broader class of \emph{synchronous games}.}
This correspondence enables one to ``embed'' group representations into quantum strategies, thereby forcing them to demonstrate a high level of complexity; this is the approach at the heart of~\cite{slofstra2019set}. Going further, for applications one is often required to understand not only optimal but also near-optimal strategies, whose success probability is e.g.\ $1-\eps$. The same correspondence associates to such strategies approximate homomorphisms of $G(h)$~\cite{slofstra2018entanglement}. Here once can see that measuring closeness on average over the choice of relation is a natural choice, which is all but forced by the definition of a game, where the questions are selected according to some pre-specified distribution. 

To summarize, approximate stability results for $G(h)$ enable one to obtain structural results about near-optimal strategies in $\game_h$ --- such a result is known as a \emph{rigidity} result in quantum information. A rigidity result about a game called the quantum low-degree test introduced in~\cite{natarajan2018low} (with a flawed analysis later corrected in~\cite{ji2020quantum,ji2022quantum}), is at the heart of the proof of $\sf MIP^* = \sf RE$. The analysis of this test requires an efficient stability result for the Pauli group. Informally, the reason that the stability result needs to be for an efficient presentation of the Pauli group, as opposed to e.g.\ the multiplication table presentation, is because to obtain the final result it is necessary that the ``complexity'' of the test, or game, is smaller than the ``complexity'' of the object, or group being tested. Of course here we are loose about what we mean by ``complexity,'' and refer to~\cite{ji2021mip,vidickmip} for high-level explanations. The quantum low-degree test is described and analyzed in Section~\ref{sec:pbt}. We end by mentioning an open question: if one was able to obtain $|S_k|,|R_k|=\poly(k)$ while also having $\delta(\eps)=\poly(\eps)$ in the informal theorem stated above, then this result would likely, through the connection we just described, have  consequences for the efficient verification of quantum computations in the framework of interactive proof systems---see e.g.~\cite{coladangelo2019verifier,natarajan2023bounding} for a sample of known results in this direction. 

\paragraph{Related works on stability.} The general question of group stability was first formulated by Ulam~\cite{ulam1960collection}, and later studied  by Kazhdan for the case of the operator norm~\cite{kazhdan1982e}. See the introduction of \cite{ioana2020stability} for a thorough history of these problems through the lens of approximate commutation (cf.  \cite{von1942approximative,voiculescu1983asymptotically,glebsky2010almost}). A major contribution was done by \cite{hadwin2018stability}, in which they characterized the stable amenable groups according to properties of their space of characters. 

Stability of finite groups, with respect to the multiplication table representation, is shown in~\cite{gowers2017inverse}. 
One can also consider representations in permutations, see~\cite{glebsky2009almost,becker2022stability}.  Specifically, the analogous problem of efficient stability in permutations is still open. See the open problems section of \cite{CL_part2}. Both stability in permutations and in unitaries equipped with the Hilbert--Schmidt metric are closely related to the notions of sofic and hyperlinear groups, cf. \cite{glebsky2009almost,becker2020group}.

\paragraph{Open questions.}
We leave many questions open. A natural direction is to determine if there are simple sufficient conditions on the matrix $h$ that guarantee that the presentation $G(h)$ (or $\widetilde{G(h)}$) is stable. As we discussed, the condition that $\ker h$ is a testable code is equivalent to stability for $1$-dimensional permutations. It thus seems likely that testability is not a sufficient condition in general; can testability be determined by a combinatorial parameter of $h$? More generally, finding other examples of efficient stability seems of intrinsic interest, besides potential applications to property testing and the construction of nonlocal games. In a different direction, as mentioned in the previous paragraph an analogue of Theorem~\ref{thm:main-inf} for the case of representations in permutations could have important implications towards showing the existence of non-sofic groups~\cite{CL_part1,CL_part2,BCLV_subgroup_tests}.

\paragraph{Outline.}
We start in Section~\ref{sec:efficient} by giving a precise definition of stability that we work with, and give examples to motivate and illustrate the definition. In Section~\ref{sec:pres-codes} we give a general method for constructing presentations from codes, and apply the method to the spacial case of the Reed-Muller code. This leads in Section~\ref{sec:eff-z2k} to the statement and proof of our main result, an efficiently stable presentation for $\Z_2^k$. Finally in Section~\ref{sec:quantum} we elaborate on the connection with the theory of nonlocal games and detail our applications to this area.

\paragraph{Acknowledgments.} 
We would like to thank John Wright for his remarks on an early draft of this paper. \tnote{added:}We are grateful to an anonymous referee for many constructive comments, including the inclusion of Lemma~\ref{lem:stab-group} and Lemma~\ref{lem:l1}. 
MC acknowledges with gratitude the Simons Society of Fellows and is supported by a grant from the Simons Foundation (N. 965535). 
TV is supported by a research grant from the Center for New Scientists at the Weizmann Institute of Science, a Simons Investigator award, AFOSR Grant No. FA9550-22-1-0391, and ERC Consolidator Grant VerNisQDevS (101086733). HY is supported by AFOSR award FA9550-21-1-0040, NSF CAREER award CCF-2144219, and the Sloan Foundation.

\section{Efficient stability}
\label{sec:efficient}

In this section we give definitions associated with the notion of ``efficient stability'' used in the paper. We reformulate some previously known results in this framework, and give examples that will be used later on.

\subsection{Algebra background and notation}

  Here we call \emph{tracial von Neumann algebra} a pair $(\mM,\tau)$ of a von Neumann algebra $\mM$ together with a normal faithful tracial state $\tau$ on $\mM$, which we often refer to as the \emph{trace}. The main example of interest is $\mM=M_n(\C)$, the algebra of $n\times n$ complex matrices, with $\tau$ the dimension-normalized trace, which we denote $\tr(M)=\frac{1}{n}\Tr(M)$. 	We write $\|x\|_\tau=\tau(x^*x)^{1/2}$ to denote the $2$-norm on $\mM$ with respect to $\tau$ --- which agrees in the case of $M_n(\complex)$ with the Hilbert--Schmidt norm $\|x\|_{hs}$ discussed in the introduction.
	
	Let $B(\ell_2)$ be the von Neumann algebra of bounded operators on $\ell_2$, the Hilbert space of square-convergent sequences in $\C^\Z$ equipped with the usual Euclidean norm (for which we let $(e_i)_{i \in \Z}$ denote the standard basis). We denote $\mM_\infty = \mM \overline{\otimes} B(\ell_2)$, where the overline denotes closure for the weak operator topology. $\mM_\infty$ is a von Neumann algebra equipped with the (infinite) trace $\tau_\infty = \tau \otimes \Tr$, with $\Tr(X)=\sum_{i\in \Z} e_i^T X e_i$ the trace on $B(\ell_2)$. We generally identify $\mM$ with the ``corner'' $\mM\otimes \Id_{1}\subset \mM_\infty$, where $\Id_1$ is the projection on the $1^{\rm st}$ coordinate in $\complex^\Z$.

\subsection{Efficiently stable presentations}

Suppose we are given a finite presentation of a (possibly infinite) group $G$ using generators $S$ and relations $R$. Informally, we say that the presentation is \emph{stable} if any map from $S$ to unitaries that approximately respects the relations $R$ is close, in an appropriate sense, to a representation of $G$. Furthermore, we will say that the presentation is \emph{efficient} if it is stable and provides a good trade-off between its size (the number of relations used and their length) and how the closeness to a representation depends on the error in satisfying the relations $R$. All the notions referred to informally in the preceding sentences --- ``approximately,'' ``close,'' good trade-off,'' etc., can be formalized in a variety of ways, leading to generally incomparable definitions. Here we present the formalization that is most natural to us, and is motivated by applications to quantum information and complexity.

Given a set $S$, we let $\mF(S)$ denote the free group generated by $S$. We identify functions from $S$ to $H$, where $H$ is any group, with homomorphisms from $\mF(S)$ to $H$. If $R$ is a subset of $\mF(S)$ then the quotient of $\mF(S)$ by the normal subgroup generated by $R$ is denoted $\langle S:R\rangle$. 

We start with the notion of an almost-homomorphism, which formalizes what it means for a map defined on $S$ to approximately satisfy the relations $R$. The notion we give is a small variant of the notion of $\eps$-\emph{almost homomorphism} from a finitely presented group to a unital tracial $C^*$-algebra $\mA$ introduced in~\cite[Section 2]{hadwin2018stability}. 
We give a variant of their definition that quantifies the error in an average sense. Below, when $\mu$ is a distribution over a finite set $\mX$ and $f:\mX\to \R$ we write $\Es{x\sim \mu} f(x)$ for the expectation of $f$ under $\mu$. 

\begin{definition}[Almost homomorphism]\label{def:approx-hom}
Let $G = \langle S:R\rangle $ be a finitely presented group, $\mu$ a distribution on $R$, and $(\mM,\tau)$ a tracial von Neumann algebra. An $(\eps,\mu)$-almost homomorphism of $G$ on $(\mM,\tau)$ is a homomorphism $\phi:\mF(S)\to\mU(\mM)$ such that
\[ \Es{r\sim \mu} \big\|  \phi(r) - \Id \big \|_\tau^2 \,\leq\, \eps\;.\]
\end{definition}

We note that this notion depends on the presentation $\langle S:R\rangle$ of $G$, not only on the group itself. 
When the distribution $\mu$ is uniform over the set $R$, we simply write $\eps$-homomorphism. The definition is consistent with the usual notion of a homomorphism that factors through $G$, which is recovered when $\eps=0$ as long as $\mu$ is fully supported. 

A stability result is a statement that $\eps$-homomorphisms are close to homomorphisms. To measure the distance between homomorphisms into different algebras we make the following definition.

\begin{definition}[Closeness for unitaries]\label{def:close}
Let $\{U_i\}\subseteq \mM$ and $\{V_i\}\subseteq \mN$ be two families of unitaries on  tracial algebras $(\mM,\tau^\mM)$ and $(\mN,\tau^\mN)$ respectively, indexed by the same set $\mI$. For $\delta\geq0$ and $\mu$ a measure on $\mI$ we say that $\{U_i\}$ and $\{V_i\}$ are $(\delta,\mu)$-close if there exists a projection $P\in\mM_\infty$ of finite trace such that $\mN=P\mM_\infty P$ and $\tau^\mN=\tau_\infty/\tau_\infty(P)$, and a partial isometry $w\in P \mM_\infty \Id_\mM$ such that 
\[ \Es{i\sim\mu} \big\| U_i - w^* V_i w \big\|_{\tau}^2 \,\leq\,\delta,\ \footnote{We denote the norm by $\|\cdot\|_{\tau}$ and not $\|\cdot\|_{\tau^\mM}$ for clarity of the notations. The von Neumann algebra in which we take the norm should be understood from context, and will usually be $\mM$. }\]
and 
\[\max\big\{ \tau^\mM(\Id_\mM-w^*w)\,,\; \tau^\mN(P-ww^*)\big\} \,\leq\, \delta\;.\]
If $\phi:\mI\to \mU(\mM)$ and $\psi:\mI\to \mU(\mN)$ then we say that $\phi$ and $\psi$ are $(\delta,\mu)$ close if the families $\{\phi(i)\}$ and $\{\psi(i)\}$ are. 
If the measure $\mu$ is omitted then it is understood to be the uniform measure on $\mI$.
\end{definition}

We now give our definition of stability.

\begin{definition}[Stability]\label{def:eff-stab}
Let $G = \langle S:R\rangle $ be a finitely presented group. Let $\mC$ be a class of tracial von Neumann algebras. Let $\mu_S$ be a distribution on $S$ and $\mu_R$ a distribution on $R$. Let $\delta:[0,1]\to[0,1]$ be a function satisfying $\lim_{t\to 0}\delta(t)=0$. The presentation $G=\langle S:R\rangle$ is $(\delta,\mu_S,\mu_R,\mC)$-stable if for every $(\cM,\tau)$ in $\mC$, every $(\eps,\mu_R)$-almost homomorphism of $G$ is $(\delta(\eps),\mu_S)$-close to a unitary representation of $G$ on some $(\mN,\tau^\mN)\in \mC$. We refer to the function $\delta$ as the \emph{modulus of stability} of the presentation.\footnote{Every function that satisfies this condition is a modulus of stability for $G=\langle S\colon R\rangle$, but we occasionally refer that way to the best possible $\delta$ in the definition.} 
\end{definition}

\begin{remark}
    Definition \ref{def:eff-stab} is often referred to in the literature as ``flexible pointwise Hilbert--Schmidt stability (in the class $\mC$)''. There are other versions of stability, such as: the non-flexible one (cf.\ \cite{becker2019stability, hadwin2018stability}), in which one does not allow to compare representations of different dimensions; uniform stability, in which almost representations are with respect to  the (often) infinite multiplication table presentation (cf.\ \cite{kazhdan1982e,becker2022stability}); stability in permutations, where instead of studying approximate representations one studies approximate actions (cf.\ \cite{becker2019stability,becker2020group, ioana2020stability}); and so on. In the cited papers above there are thorough literature surveys that we avoid herein.
\end{remark}

While the quantitative aspects of this definition of stability depend on the choice of presentation, qualitatively a finitely presented group $G$ is stable with respect to some presentation if and only if it is stable with respect to any presentation. This is shown in Lemma~\ref{lem:stab-group} below, which includes some rough quantitative estimates. We start with a simple but useful calculation.
\tnote{Added the lemma. The proof is pretty annoying. Michael, did you say there are references for it?}

For a set of generators $S$ and an element $r\in\mF(S)$, we let $|r|$ denote the length of the word $r\in \mF(S)$ written in the basis $S\cup S^{-1}$.\footnote{Another popular notion of length is the bit length of the encoding of $\langle S\colon R\rangle$, where exponents in $r\in R$ can be written in binary. Up to a factor of $\log |S|$, our notion is stricter than that. See \cite{babai1997short}.}\tnote{previously, we had just $S$. But we also need to allow inverses, right?}

\begin{lemma}\label{lem:g-phi}
Let $G=\langle S:R\rangle$ and $\mu_S$ a distribution on $S$.  Let $\phi:\mathcal{F}(S)\to \mathcal{U}(\cM)$ and $g:G\to \mathcal{U}(\mN)$ a representation such that $g$ is $(\delta,\mu_S)$ close to $\phi$. Then for any $r\in\mathcal{F}(S)$, it holds that 
\[ \|\phi(r)-w^*g(r)w\|_\tau \,\leq\, (|r|+1)\sqrt{\delta} + \sum_{i=1}^{|r|}  \| w^* g(r_i)w -\phi(r_i)\|_\tau\;,\]
where $r_i$ denotes the $i$-the element of $r$ (which is an element of $S\cup S^{-1}$).
\end{lemma}

\begin{proof}
  Let $r$ be a word in the elements of $S$ and their inverses. For $0\leq i \leq |r|$ we write $r_{\leq i}$ for the length-$i$ prefix of $r$ (which is $1_G$ if $i=0$), and $r_{> i}$ for its length-$(|r|-i)$ suffix (which is $1_G$ if $i=|r|$). 
  Using the triangle inequality, 
  \begin{align*}  
    \|\phi(r)-w^*g(r)w\|_\tau &\leq \sum_{i=1}^{|r|} \| w^*{g}(r_{\leq i}) w{\phi}(r_{>i})-w^* {g}(r_{<i})w\phi(r_{\geq i}))\|_\tau  + \|(w^*w-I_\cM)\phi(r)\|_\tau \\
    &\leq \sum_{i=1}^{|r|} \Big(\| w^*g(r_{<i})w(w^* g(r_i)w -\phi(r_i))\phi(r_{>i})\|_\tau \\
    &\qquad + \|w^* g(r_{<i}) (ww^*-P) g(r_i)w \phi(r_{>i})\|_{\tau^\mN} \Big)+ \|w^*w-I_\cM\|_\tau \\ 
    &\leq \sum_{i=1}^{|r|}\| w^* g(r_i)w -\phi(r_i)\|_\tau + |r|\|ww^*-P\|_{\tau^{\mN}} + \|w^*w-I_\mM\|_\tau\;.
    \end{align*}
    Here, for the last inequality we used that $g(r_i)$, $\phi(r_i)$ are unitaries, and $w$ an isometry, so either of them contracts the norm.  
\end{proof}

\begin{lemma}\label{lem:stab-group}
Suppose that $G=\langle S:R\rangle$ is $(\delta,\mu_S,\mu_R,\mC)$-stable as in Definition~\ref{def:eff-stab}. Let $G = \langle S':R'\rangle$ be another presentation. Then for any presentation $G=\langle S':R'\rangle$, and any distribution $\mu_{R'}$ on $R'$ that has full support, $G=\langle S':R'\rangle$ is $(\delta',\mu_{S'},\mu_{R'},\mC)$-stable for any distribution $\mu_{S'}$ and a modulus $\delta'$ that satisfies $\delta'(\eps)\leq C\delta(C'\eps)+C''\eps$ for some constants $C,C',C''$ depending on the presentations.  
\end{lemma}

\begin{proof}
Let $(\cM,\tau)\in \mC$ and $\phi'$ an $(\eps,\mu_{R'})$-almost homomorphism of $\langle S':R'\rangle$. Let $S=\{s_1,\ldots,s_t\}$, where $t=|S|$. For each $i\in\{1,\ldots,t\}$ there is a word $w_i\in \mathcal{F}(S')$ in the generators $S'$ and their inverses such that $s_i=w_i$ in $G$. Define $\phi(s_i)=\phi'(w_i)$ and extend $\phi$ to $\mathcal{F}(S)$ in the obvious way.

Fix an arbitrary $r\in R$. Then $r$ can be expressed as $r=s_{i_1}^{\eps_1}\cdots s_{i_\ell}^{\eps_\ell}=w_{i_1}^{\eps_1}\cdots w_{i_\ell}^{\eps_\ell}$ in $G$, where $i_1,\ldots,i_\ell\in \{1,\ldots,t\}$ and $\eps_1,\ldots,\eps_\ell\in\{\pm 1\}$. Define $w(r) = w_{i_1}^{\eps_1}\cdots w_{i_\ell}^{\eps_\ell}$ as an element of $\mathcal{F}(S')$. Since $r=1$ in $G$, also $w(r)=1$ in $G$. Moreover, we have the estimate 
\[ |w(r)| \leq |w_{i_1}|+\cdots+|w_{i_\ell}|\leq \ell \max_{i\in \{1,\ldots,t\}}|w_i|\;.\]
Let $D'=\max_{i\in \{1,\ldots,t\}}|w_i|$. 

Let $\Delta'$ be the Dehn function of $\langle S':R'\rangle$. Then by definition, since $w(r)=1$ it can be written as a product 
\[ w(r)\,=\, u_1 (r'_{1})^{\eps'_1} u_1^* \cdots u_L (r'_{L})^{\eps'_L}u_L^* \]
of $L\leq \Delta'(|w(r)|) \leq \Delta'(\ell D')$ conjugates of the $r'\in R'$ and their inverses. Here, $u_i\in \mathcal{F}(S')$ and $r'_i\in R'$, for each $1\leq i \leq L$. 

Using the definition of $\phi$, followed by the triangle inequality, 
\begin{align*}
\|\phi(r)-I\|_\tau &= \|\phi'(w(r))-I\|_\tau\\
&= \|\phi'( u_1 (r'_{1})^{\eps'_1}u_1^* \cdots u_L (r'_{L})^{\eps'_L} u_L^* ) - I \|_\tau\\
&\leq  \sum_{j=1}^L \|\phi'( u_j (r'_{j})^{\eps'_j}u_j^* ) - I \|_\tau\\
&=  \sum_{j=1}^L \|\phi'( r'_{j}) - I \|_\tau\\
&\leq L \max_{r'\in R'} \|\phi'(r')-1\|_\tau\;.
\end{align*}
 From the above it follows that $\phi$ is an $(\eps',\mu_R)$-almost homomorphism of $\langle S:R\rangle$ where 
\begin{equation}\label{eq:sg-1a}
\eps' \leq  L^2 \eps (\min_{r'}\mu_{R'}(r'))^{-1} \;.
\end{equation}
Applying the assumption, $\phi$ is $(\delta(\eps'),\mu_S)$-close to a unitary representation $g$ of $G$ on some $(\mathcal{N}, \tau^\mN)\in \mC$. 
For $s'\in S'$, write $s' = w'$ where $w'\in\mathcal{F}(S)$ is a word in the generators from $S$ and their inverses. Further replacing each generator $s_i$ from $S$ by the word $w_i$ identified at the start of the proof gives $s'=w''$ where $w''\in \mathcal{F}(S')$ is a product of the $w_i$ and their inverses. 

Using the same reasoning as in the first part, it follows that 
\begin{equation}\label{eq:sg-1}
   \|\phi'(w''(s')^{-1})-I\|_\tau \leq (LL')^2 \eps (\min_{r'}\mu_{R'}(r'))^{-1}\;,
\end{equation}
where $L'=\Delta(DD')$ with $D=\max_{s'\in S'}|w'|$ and $\Delta$ the Dehn function of $\langle S:R\rangle$.
 But $\phi'(w''(s')^{-1}) = \phi'(w'')\phi'((s')^{-1})=\phi(w')\phi'(s')^{-1}$. Using Lemma~\ref{lem:g-phi} we deduce that 
\begin{align}
 \|\phi(w')-g(w')\| &\leq \sum_{i=1}^{|w'|}  \| w^* g(w'_i)w -\phi(w'_i)\|_\tau + (|w'|+1)\sqrt{\eps'}\notag\\
 &\leq \delta(\eps')^{1/2}  (\min_{s\in S} \mu(s))^{-1} + (|w'|+1)\sqrt{\eps'}\;.\label{eq:sg-2}
\end{align}
It follows from~\eqref{eq:sg-1},~\eqref{eq:sg-2} and~\eqref{eq:sg-1a} that $\phi'$ is $\delta'(\eps)$-close to $g$, where $\delta'(\eps)$ scales as claimed in the lemma. 
\end{proof}

We now give a simple lemma that shows that homomorphisms on $\mathcal{F}(S)$ that are $(\delta,\mu_S)$ close to representations are naturally almost homomorphisms, for any presentation $\langle S:R\rangle$ and a distribution on $R$ that is naturally defined from the distribution $\mu_S$.
\tnote{This is Lemma 1 by the reviewer.}

\begin{lemma}\label{lem:l1}
  Let $G=\langle S:R\rangle$ be a finite presentation, and $\mu_R$ a distribution on $R$. For a relation $r\in R$, let $|r|$ denote its length as a word in $\mathcal{F}(S)$. Let $\mu_S$ be the probability measure on $S$ defined by
  \begin{equation}\label{eq:mus}
     \mu_S \,=\, \frac{1}{\sum_{r\in R} \mu_R(r)|r|^2} \sum_{r\in R} |r|\mu_R(r) \sum_{i=1}^{|r|} \delta_{r_i}\;,
  \end{equation}
  where $\delta_{r_i}$ assigns probability $1$ to the $i$-th element of $r$, or its inverse, depending which lies in $S$. Then for any tracial von Neumann algebra $(\cM,\tau)$ and homomorphism $\phi:\mathcal{F}(S)\to \mathcal{U}(\cM)$, if $\phi$ is $(\delta,\mu_S)$ close to a representation of $G$ then $\phi$ is also an $(O(\delta),\mu_R)$-almost homomorphism. (Here the constant implicit in the $O(\cdot)$ depends linearly on $\Es{r\sim \mu_R}|r|^2$.)
  \end{lemma}
  
  \begin{proof}
    Let $\phi:\mathcal{F}(S)\to \mathcal{U}(\cM)$ be as in the lemma, and let $g$ be a $(\delta,\mu_S)$ close representation of $G$.
   Using the triangle inequality and $g(r)=1_\mN$ and then Lemma~\ref{lem:g-phi}, 
  \begin{align*}
  \Es{r\sim \mu_R}\|\phi(r)-\Id\|_\tau^2 &\leq \Es{r\sim \mu_R} 2\|\phi(r)-w^*g(r)w\|^2_\tau + 2\|w^*w-I_\mM\|^2_\tau\\
  &\leq \Es{r\sim \mu_R}\sum_{i=1}^{|r|}4|r|\| w^* g(r_i)w -\phi(r_i)\|_\tau^2 + O\Big(\Es{r\sim \mu_R}|r|^2\Big)\delta \\
  &=  \sum_{s\in S} \sum_{r\in R} \mu(R) \sum_{i=1}^{|r|} 1_{s\in \{r_i,r_i^{-1}\}} 4|r|\| w^* g(r_i)w -\phi(r_i)\|_\tau^2 +  O\Big(\Es{r\sim \mu_R}|r|^2\Big)\delta \\
  &\leq 4\Big(\sum_{r\in R} \mu_R(r)|r|^2\Big)\Es{s\sim \mu_S } \| w^* g(s)w -\phi(s)\|_\tau^2 +   O\Big(\Es{r\sim \mu_R}|r|^2\Big)\delta \\
  &=  O\Big(\sum_{r\in R} \mu_R(r)|r|^2\Big)\delta\;,
  \end{align*}
  where 
	for we used $\Es{r\sim \mu_R}|r|^2\geq 1$. 
  \end{proof}
  
  \tnote{changed remark below based on referee comment}
\begin{remark}\label{rk-mus}
  Throughout this text, whenever $\mu_R$ is specified by not $\mu_S$ we mean that $\mu_S$ is chosen as the induced distribution defined in~\eqref{eq:mus}.
\end{remark}

Definition~\ref{def:eff-stab} specifies what it means for a presentation to be stable, but not when the presentation is \emph{efficiently} stable. 
As mentioned in the introduction, the goal is to optimise the tradeoff between the encoding length of the presentation and the resulting modulus of stability. We now choose a complexity measurement for presentations:
\begin{definition}[Length of a presentation]\label{def:length_of_pres}
    Let $G=\langle S\colon R\rangle$ be a finite presentation. The \emph{length} of the presentation  will be  $$\ell(G)=|S|+ \sum_{r\in R}|r|\;.$$ 
\end{definition}

The search for the shortest possible presentations of finite groups, and in particular of finite simple groups, contained many twists and turns. Clearly, the shortest presentation of some groups, e.g. $G=\Z_2^k$, needs to be of size at least $\poly\log|G|$. On the other hand, it turns out that there are finite simple groups with minimal presentation length $\poly\log\log|G|$, which was a big surprise (see \cite{guralnick2008presentations} and the references therein). Also, all finite groups (without ${}^2G_2(q)$ composition factors) have a presentation of length $O(\log^3(|G|))$. Thus, we can say that a presentation is ``efficient'', if its length is not much larger than the shortest possible one. This can be phrased in various ways, and certain choices of parameters can be applied in different ways. Our exact choice of tradeoff parameters between length and modulus of stability, which we shall refer to as efficiently stable,  is motivated by natural examples which we outline next.

\subsection{General results}

As was shortly reviewed in the introduction, and specifically in the related works subsection, many results about various notions of stability are known.  Here, we give two results that are most relevant to our work. First of all, 
for a finite group $G$ we can always write $G=\langle S:R\rangle$ where $S = G$ and $R=\{ g\cdot h \cdot (gh)^{-1} =e \}$. We refer to this presentation as the \emph{multiplication table presentation}. If we let $\mu_S$ and $\mu_R$ be the uniform distribution on $S$ and on $R$ respectively then Definition~\ref{def:eff-stab} reduces to a widely used notion of \emph{flexible (Hilbert-Schmidt) stability}. In particular, for finite groups the following result is known~\cite{gowers2017inverse,de2019operator}. We adopt the formulation from~\cite[Theorem 1.4]{de2022spectral}.

\begin{theorem}\label{thm:gh}
Let $G$ be a finite group and $\mC$ the class of all tracial von Neumann algebras. Let $\mu_S$ and $\mu_R$ be the uniform distribution on $S=G$ and $R=\{ g\cdot h \cdot (gh)^{-1}=e \}$ respectively. Then $G=\langle S:R\rangle$ is $(c\eps,\mu_S,\mu_R,\mC)$-stable, where $c>0$ is a universal constant independent of $G$.
\end{theorem}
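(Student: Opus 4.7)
The plan is to apply the Gowers--Hatami averaging strategy~\cite{gowers2017inverse}, upgraded to the tracial von Neumann algebra setting as in~\cite{de2019operator,de2022spectral}. Unpacking Definition~\ref{def:approx-hom} for the multiplication table presentation, an $(\varepsilon,\mu_R)$-almost homomorphism is a map $\phi\colon G\to\mathcal{U}(\mathcal{M})$ satisfying $\mathbb{E}_{g,h\in G}\|\phi(g)\phi(h)-\phi(gh)\|_\tau^2\leq\varepsilon$. The goal is to produce an exact unitary representation $\psi$ of $G$ on a finite-trace corner of $\mathcal{M}_\infty$, together with a partial isometry intertwining it with $\phi$, realising the $(c\varepsilon,\mu_S)$-closeness required by Definition~\ref{def:close}.

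First I would fix the target exact representation. Identify $\ell^2(G)\hookrightarrow\ell_2$ and let $\lambda\colon G\to\mathcal{U}(B(\ell^2(G)))$ be the left regular representation. Setting $\mathcal{N}=\mathcal{M}\,\overline{\otimes}\,B(\ell^2(G))\subseteq\mathcal{M}_\infty$, the map $\psi(g)=\Id_\mathcal{M}\otimes\lambda(g)$ is an exact unitary representation of $G$ on $\mathcal{N}$. The projection $P=\Id_\mathcal{M}\otimes p_{\ell^2(G)}$ has finite $\tau_\infty$-trace equal to $|G|$, and $\tau^\mathcal{N}=\tau_\infty/|G|$ supplies the normalisation demanded by Definition~\ref{def:close}.

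Second, I would analyse the Fourier-type averaging operator
\[
\Phi\ =\ \frac{1}{|G|}\sum_{g\in G}\phi(g)\otimes\lambda(g)\ \in\ \mathcal{N}.
\]
The key calculation is that a direct expansion of $\Phi^2-\Phi$, followed by the substitution $h\mapsto g^{-1}k$ inside the double sum, rewrites it as a weighted average of the multiplicativity defects $\phi(g)\phi(g^{-1}k)-\phi(k)$, so the hypothesis on $\phi$ yields $\|\Phi^2-\Phi\|_{\tau^\mathcal{N}}^2 = O(\varepsilon)$. A parallel computation produces the approximate equivariance $\psi(g)\Phi\approx \Phi\cdot(\phi(g)\otimes\Id)$ on average over $g$, again with error $O(\varepsilon)$. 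Functional calculus applied to the self-adjoint part of $\Phi$ then yields a genuine projection $Q\in\mathcal{N}$ with $\|Q-\Phi\|_{\tau^\mathcal{N}}^2 = O(\varepsilon)$, and polar-decomposing the product of $Q$ with a rank-one projection in $\mathcal{N}$ (onto the trivial vector $e_e\in\ell^2(G)$) produces a partial isometry $w\in P\,\mathcal{M}_\infty\,\Id_\mathcal{M}$ with $\tau(\Id_\mathcal{M}-w^*w)$ and $\tau^\mathcal{N}(P-ww^*)$ both $O(\varepsilon)$. Propagating the approximate equivariance through $w$ and rearranging delivers $\mathbb{E}_{g\sim\mu_S}\|\phi(g)-w^*\psi(g)w\|_\tau^2 = O(\varepsilon)$, which is the required closeness.

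The main obstacle is keeping every error estimate free of any factor depending on $|G|$, which is exactly what guarantees a universal constant $c$: the normalisations of $\Phi$, $\tau_\infty$ and $P$ must be carefully aligned so that all bounds collapse to the raw defect $\mathbb{E}_{g,h}\|\phi(g)\phi(h)-\phi(gh)\|_\tau^2$ appearing in the hypothesis, with the $|G|$ factors introduced by the regular representation exactly cancelling those introduced by the normalisation of $\tau^\mathcal{N}$. A secondary but substantive technical issue, handled by~\cite{de2019operator}, is executing the spectral calculus and polar decomposition in an arbitrary type $\mathrm{II}_1$ algebra rather than in a matrix algebra. Once these are in place, the universality of $c$ follows since the entire construction depends on $G$ only through its regular representation.
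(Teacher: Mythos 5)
The paper does not prove this theorem: it is imported from \cite{gowers2017inverse,de2019operator} in the formulation of \cite[Theorem 1.4]{de2022spectral}, so strictly speaking there is no internal proof to compare against, and your task is to reconstruct the Gowers--Hatami argument in the tracial setting. Your sketch has the right overall shape, but there is a structural flaw in the choice of corner that no amount of constant-chasing fixes. You take $\mN = \mM\overline{\otimes}B(\ell^2(G))$ and $P = \Id_\mM\otimes p_{\ell^2(G)}$, so $\tau_\infty(P)=|G|$. But for any partial isometry $w\in P\mM_\infty\Id_\mM$ one has $\tau_\infty(ww^*)=\tau_\infty(w^*w)\leq\tau_\infty(\Id_\mM)=1$, hence
\[
\tau^\mN(P-ww^*) \,=\, 1-\frac{\tau_\infty(ww^*)}{\tau_\infty(P)} \,\geq\, 1-\frac{1}{|G|}\,,
\]
which is bounded away from $0$ and can never be $O(\eps)$. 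Definition~\ref{def:close} forces $\tau_\infty(P)\leq 1/(1-\delta)$, i.e.\ the target corner must have $\tau_\infty$-trace close to $1$. The correct target is $\mN = Q\mM_\infty Q$ for a projection $Q$ of $\tau_\infty$-trace $\approx 1$, not the whole of $\mM\overline{\otimes}B(\ell^2(G))$.

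A secondary issue is that your averaging operator $\Phi=\frac{1}{|G|}\sum_g\phi(g)\otimes\lambda(g)$ does not (even for an exact homomorphism $\phi$) commute with the candidate representation $\Id_\mM\otimes\lambda(k)$: one checks $\Phi(\Id\otimes\lambda(k))=\frac{1}{|G|}\sum_h\phi(hk^{-1})\otimes\lambda(h)$ while $(\Id\otimes\lambda(k))\Phi=\frac{1}{|G|}\sum_h\phi(k^{-1}h)\otimes\lambda(h)$, and these differ unless $\phi$ is a class function. Consequently functional calculus on your $\Phi$ produces a projection whose corner is not invariant under $\Id\otimes\lambda$, and you cannot define the exact representation $\psi$ by compression. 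In the actual argument one either uses the right regular representation in the second tensor factor (so that $\Phi$ commutes with $\Id\otimes\lambda(k)$ exactly), or, following \cite{gowers2017inverse}, one works with the isometry $V=|G|^{-1/2}\sum_g\phi(g)^*\otimes\delta_g$, symmetrizes $VV^*$ by conjugation under $\Id\otimes\lambda$, and rounds that to the projection $Q$; $Q$ then exactly commutes with $\Id\otimes\lambda$, has $\tau_\infty(Q)\approx 1$, and $\psi(g)=Q(\Id\otimes\lambda(g))Q$ is the desired representation on the small corner. Your closing paragraph on aligning normalizations acknowledges that $|G|$-factors must cancel, but that framing understates the problem: the fix is to change the corner and the averaging scheme, not merely to track constants.
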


\begin{remark}\label{rk:linear-modulus}\tnote{adding:}
  We remark that, using Lemma~\ref{lem:stab-group}, Theorem~\ref{thm:gh} implies that any finite presentation $G=\langle S:R\rangle$ of a finite group $G$ is $(O(\eps),\mu_S,\mu_R,\mathcal{C})$-stable, where $\mu_R$ is any distribution with full support and $\mu_S$ is e.g.\ the distribution induced from $\mu_R$ as in Lemma~\ref{lem:l1}. However, the constant implicit in the $O(\eps)$ modulus of stability may of course depend on the presentation, and this will be our focus later on. 
\end{remark}

Next we state for later use a result from~\cite{de2022spectral} which allows us to combine stability results. The results in~\cite{de2022spectral} are rather general and apply to direct products and certain central extensions of a class of finite groups. Here, we will only use the following specialization to the case of the central extension of $\Z_2^k \times \Z_2^k$ by $\{-1,1\}$ given by $\gamma(a,b)=(-1)^{a\cdot b}$, with $a\cdot b$ the inner product modulo $2$. For a measure $\mu$ on $\Z_2^k$, define its inverse spectral gap 
\[ \kappa = \max_{a\neq 0} \frac{1}{1-\Es{b\sim\mu}(-1)^{a\cdot b}}\;.\footnote{Note that this is indeed the inverse of the spectral gap of the random walk operator on $\Z_2^k$ induced by $\mu$. Equivalently, by viewing $\mu$ as an element of the group ring $\complex[\Z_2^k]$, it acts on it from the left, and thus has a spectral decomposition and a spectral gap.}\] 

\begin{theorem}[\cite{de2022spectral} Corollary 2.6]\label{thm:dls-gap}
Let $\mu$ be a measure on $\Z_2^k$ with inverse spectral gap $\kappa$. Let $\phi_X,\phi_Z: \Z_2^k \to \mU(\mM)$ be two homomorphisms such that
\[ \Es{a,b\sim \mu} \big\| \phi_X(a)\phi_Z(b)-(-1)^{a\cdot b} \phi_Z(b)\phi_X(a)\big\|_\tau^2 \,\leq\,\eps\;.\]
Then there is an $\mN=P\mM_\infty P$ and homomorphisms $U_X,U_Z:\Z_2^k\to\mU(\mN)$ and $\delta=O(\kappa^2\eps)$ such that $\phi_X$ and $U_X$ are $(\delta,\mu)$-close, $\phi_Z$ and $U_Z$ are $(\delta,\mu)$-close, and moreover $U_X(a)U_Z(b)=(-1)^{a\cdot b}U_Z(b)U_X(a)$ for all $a,b\in\Z_2^k$.
\end{theorem}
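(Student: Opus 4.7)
The plan is to interpret the hypothesis as providing an approximate unitary representation of the Heisenberg $2$-group $H_k$---the central extension of $\Z_2^k \oplus \Z_2^k$ by $\Z_2$ via the Weyl cocycle $((a,b),(a',b'))\mapsto (-1)^{a'\cdot b}$---and then to deduce the conclusion by invoking Theorem~\ref{thm:gh} (Gowers--Hatami) applied to this finite group. Defining $\Phi\colon H_k \to \mU(\mM)$ by $\Phi(a,b,c) = (-1)^c \phi_X(a)\phi_Z(b)$, a direct computation shows that the homomorphism defect $\Phi(g_1)\Phi(g_2) - \Phi(g_1 g_2)$ factors as $(-1)^{c_1+c_2}\phi_X(a_1)\bigl(\phi_Z(b_1)\phi_X(a_2) - (-1)^{a_2\cdot b_1}\phi_X(a_2)\phi_Z(b_1)\bigr)\phi_Z(b_2)$, so its tracial norm equals the commutator error of $\phi_X(a_2)$ with $\phi_Z(b_1)$, independent of the remaining coordinates. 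Applying (a measure-weighted version of) Theorem~\ref{thm:gh} to $\Phi$ with respect to the product measure $\nu = \mu\otimes\mu\otimes U$ on $H_k$ then produces an exact representation $\Psi\colon H_k \to \mU(\mN)$ on some $\mN = P\mM_\infty P$ that is $(O(\kappa^2\eps), \nu)$-close to $\Phi$, provided the average homomorphism defect of $\Phi$ on $H_k \times H_k$ is bounded by $O(\kappa^2\eps)$. Restricting $\Psi$ to the two commuting copies of $\Z_2^k$ inside $H_k$ yields the required $U_X, U_Z$ satisfying $U_X(a)U_Z(b) = (-1)^{a\cdot b}U_Z(b)U_X(a)$ exactly and $(\delta,\mu)$-close to $\phi_X, \phi_Z$.

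The central technical step is thus to bootstrap the $\mu\otimes\mu$-average bound of the hypothesis into the corresponding uniform-average bound, losing only a factor $O(\kappa^2)$. For this I would pass to the spectral picture. Since $\phi_X$ is an exact abelian representation, conjugation $\alpha_a(Y) = \phi_X(a) Y \phi_X(a)^{-1}$ defines a unitary representation of $\Z_2^k$ on the GNS space $L^2(\mM, \tau)$, with spectral projections $\Pi_y$ onto the $y$-isotypic components. Expanding in this basis yields the orthogonal identity
\[
\bigl\|\phi_X(a)\phi_Z(b) - (-1)^{a\cdot b}\phi_Z(b)\phi_X(a)\bigr\|_\tau^2 \,=\, 2\sum_{y\in\Z_2^k}\bigl(1 - (-1)^{a\cdot(y+b)}\bigr)\bigl\|\Pi_y\phi_Z(b)\bigr\|_\tau^2 ,
\]
so averaging over $a\sim\mu$ and invoking $1 - \widehat{\mu}(w)\geq\kappa^{-1}$ for $w\neq 0$ produces the diagonal concentration $\sum_y \E_{b\sim\mu}\mathbf{1}[y\neq b]\,\|\Pi_y\phi_Z(b)\|_\tau^2 \lesssim \kappa\eps$, paying one factor of $\kappa$. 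A symmetric analysis with the roles of $\phi_X$ and $\phi_Z$ exchanged, combined with a second application of the spectral gap to the marginal in $b$ via Cauchy--Schwarz, converts this $\mu$-weighted diagonal concentration into the corresponding uniform-weighted one, paying the second factor of $\kappa$. Substituting back into the displayed identity delivers the uniform-average bound on the commutator error needed to feed into Gowers--Hatami.

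The most delicate point is this second use of the spectral gap: a naive manipulation only gives a $\kappa\eps$-weighted bound in one variable or introduces a factor of $|\Z_2^k|$, so the $\kappa^2$ dependence advertised in the theorem requires that the two symmetric spectral expansions (one along $\mathrm{Ad}(\phi_X)$, one along $\mathrm{Ad}(\phi_Z)$) be combined with Cauchy--Schwarz in a precise manner. This combinatorial balancing is the content of the spectral argument of~\cite{de2022spectral}, which is carried out there in the broader generality of central extensions of amenable groups by a finite abelian group equipped with a measure of bounded spectral gap, and specializes to the statement above.
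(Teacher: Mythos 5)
The paper does not prove Theorem~\ref{thm:dls-gap} itself --- it is imported verbatim as Corollary~2.6 of~\cite{de2022spectral}, so there is no internal proof to compare against. Your route, which packages $\phi_X,\phi_Z$ into a single map $\Phi$ on the Heisenberg group $H_k$ and reduces to Theorem~\ref{thm:gh}, is a reasonable self-contained alternative, and the crucial $\kappa^2$ step can be made considerably cleaner than what you wrote. Set $\mathrm{err}(a,b)=\|\phi_X(a)\phi_Z(b)-(-1)^{a\cdot b}\phi_Z(b)\phi_X(a)\|_\tau^2$. Your isotypic expansion along $\mathrm{Ad}(\phi_X)$ gives, for every \emph{fixed} $b$, the pointwise estimate
\[
\Es{a}\,\mathrm{err}(a,b)\;=\;2\sum_{y\neq b}\|\Pi_y\phi_Z(b)\|_\tau^2\;\leq\;\kappa\cdot 2\sum_{y\neq b}\Big(1-\Es{a'\sim\mu}(-1)^{a'\cdot(y+b)}\Big)\|\Pi_y\phi_Z(b)\|_\tau^2\;=\;\kappa\,\Es{a\sim\mu}\mathrm{err}(a,b)\,,
\]
where the unsubscripted expectation is uniform, and the symmetric expansion along $\mathrm{Ad}(\phi_Z)$ gives $\Es{b}\mathrm{err}(a,b)\leq\kappa\,\Es{b\sim\mu}\mathrm{err}(a,b)$ pointwise in $a$. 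Chaining these two via Fubini immediately gives $\Es{a,b}\mathrm{err}\leq\kappa^2\Es{a,b\sim\mu}\mathrm{err}\leq\kappa^2\eps$. No Cauchy--Schwarz is involved and there is nothing to ``convert''; likewise, you do not need any ``measure-weighted version'' of Theorem~\ref{thm:gh} --- once the uniform defect bound is in hand the plain statement applies. Two real gaps remain in your sketch. First, the representation $\Psi$ of $H_k$ produced by Theorem~\ref{thm:gh} need not send the central element $(0,0,1)$ to $-\Id_\mN$, and without that the restrictions $U_X,U_Z$ do not satisfy the Weyl relations; since $\Phi(0,0,1)=-\Id_\mM$ and $\Psi$ is close to $\Phi$ on average, the $+1$-eigenprojection of $\Psi(0,0,1)$ has small trace and can be cut away, but this has to be said. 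Second, Theorem~\ref{thm:gh} gives closeness averaged uniformly over $H_k$, while the statement wants $(\delta,\mu)$-closeness along each copy of $\Z_2^k$; you need the standard averaging-over-translates argument to upgrade the $L^1$ bound over $H_k$ to a pointwise bound (this uses the fact that the \emph{uniform} defect of $\Phi$ is itself $O(\kappa^2\eps)$, which is exactly what the spectral-gap step supplied), and then restrict to $\mu$. With those repairs your outline is correct, though it does not reproduce the argument of~\cite{de2022spectral}, only its conclusion.
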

\begin{corollary}
    Theorem \ref{thm:dls-gap} essentially tells us the following: if we are given an almost homomorphism of the Pauli group \eqref{eq:defn_Pauli_as_Heisenberg}, then we can fix it to a homomorphism in two steps. First, fix its restriction to the $X$ and $Z$ observables independently. Then, apply the theorem to get a homomorphism from the whole Pauli group. This idea is spelled out in detail in the proof sketch of Corollary \ref{cor:Pauli-brading_is_stable}.
\end{corollary}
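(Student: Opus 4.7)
The plan is to exploit the presentation of the Pauli group as the central extension of $\Z_2^k\times\Z_2^k$ by $\{\pm1\}$ with cocycle $(a,b)\mapsto(-1)^{a\cdot b}$, reducing stability of the whole group to two separate stability problems for $\Z_2^k$ that are afterwards glued together via Theorem~\ref{thm:dls-gap}. Concretely, take as generators two families $\{X(a)\}_{a\in\Z_2^k}$ and $\{Z(b)\}_{b\in\Z_2^k}$, each generating a copy of $\Z_2^k$, together with the braiding relations $X(a)Z(b)=(-1)^{a\cdot b}Z(b)X(a)$, and equip the relation set with a distribution $\mu_R$ that puts some mass on the $\Z_2^k$-internal relations of each side and some mass on the braiding relations.

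Given an $(\eps,\mu_R)$-almost homomorphism $\phi$ of this presentation, the first step is to restrict $\phi$ to the $X$-subgroup and to the $Z$-subgroup. Each restriction is an almost homomorphism of $\Z_2^k$ with error controlled by a constant multiple of $\eps$, where the constant accounts for the fraction of mass that $\mu_R$ puts on the internal relations. Apply the stability result for $\Z_2^k$ proved earlier in the paper (Theorem~\ref{thm:z2-stab}) independently to each restriction; this yields, on a (possibly enlarged) tracial algebra $(\mN,\tau^{\mN})$, genuine unitary representations $\phi_X$ and $\phi_Z$ of $\Z_2^k$ that are tracially close, in the sense of Definition~\ref{def:close}, to the conjugates of $\phi|_X$ and $\phi|_Z$ by appropriate partial isometries.

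The second step is to check that the straightened pair $(\phi_X,\phi_Z)$ continues to approximately satisfy the braiding relation. Because $\phi$ itself approximately satisfies the braiding, and $\phi_X,\phi_Z$ are tracially close to the corresponding restrictions of $\phi$, a triangle inequality gives a bound of the form
\[ \Es{a,b\sim\mu}\big\|\phi_X(a)\phi_Z(b)-(-1)^{a\cdot b}\phi_Z(b)\phi_X(a)\big\|_\tau^2 \,\leq\, O\big(\eps+\delta_0(\eps)\big)\;,\]
where $\delta_0$ is the modulus of stability provided for $\Z_2^k$ in the first step. Feeding this bound into Theorem~\ref{thm:dls-gap}, with $\mu$ chosen as the generator-distribution on $\Z_2^k$, produces genuine homomorphisms $U_X,U_Z:\Z_2^k\to\mU(\mN')$ on a further enlarged algebra that are tracially close to $\phi_X,\phi_Z$ and that \emph{exactly} satisfy $U_X(a)U_Z(b)=(-1)^{a\cdot b}U_Z(b)U_X(a)$. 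These assemble into a representation $\psi$ of the whole Pauli group by $\epsilon X(a)Z(b)\mapsto \epsilon U_X(a)U_Z(b)$, and chaining the two closeness estimates yields an overall modulus of stability polynomial in $\eps$, $\delta_0(\eps)$, and the inverse spectral gap $\kappa$ from Theorem~\ref{thm:dls-gap}.

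The main obstacle will be the bookkeeping on the partial isometries realizing closeness in the two restriction steps: in order to transfer the approximate braiding inequality from $\phi$ to $(\phi_X,\phi_Z)$, one must ensure that both $\phi_X$ and $\phi_Z$ can be placed inside the \emph{same} enlarged algebra $\mN$ and that a single pair of partial isometries aligns both with the original $\phi$ coherently. Once this alignment is arranged by embedding both sides into a common corner of $\mM_\infty$, the rest of the argument is a routine triangle-inequality computation followed by a direct appeal to Theorem~\ref{thm:dls-gap}.
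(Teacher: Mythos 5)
Your proposal captures the intended two-step scheme — straighten each of the $X$- and $Z$-subgroups to a genuine $\Z_2^k$-representation, verify the pair still approximately satisfies the braiding, then feed the pair into Theorem~\ref{thm:dls-gap} — but there are two points worth flagging.

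First, the obstacle you identify in your last paragraph is real, and your proposed resolution ("embedding both sides into a common corner of $\mM_\infty$") is not how the paper handles it, nor is it obvious that it works as stated. After fixing $\phi|_X$ and $\phi|_Z$ independently you obtain genuine representations on two \emph{a priori different} corners $P_X\mM_\infty P_X$ and $P_Z\mM_\infty P_Z$, while Theorem~\ref{thm:dls-gap} requires $\phi_X$ and $\phi_Z$ to live in the \emph{same} tracial algebra. The paper's device for this is Lemma~\ref{lem:pull-back}: it orthonormalizes the conjugated projective measurements $w^*T_a w$ back to \emph{genuine} projective measurements on the original $\mM$ itself, so that the straightened $X$- and $Z$-observables both act on $\mM$ and Theorem~\ref{thm:dls-gap} can be invoked directly. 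This is the mechanism used in the soundness argument of Theorem~\ref{thm:braiding} and, implicitly, in Lemma~\ref{lem:eff-pauli}, and it should be named rather than left as bookkeeping.

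Second, the citation is off. Theorem~\ref{thm:z2-stab} appears in Section~\ref{sec:eff-z2k}, \emph{after} this corollary, so it is not "the stability result for $\Z_2^k$ proved earlier." In the context of Corollary~\ref{cor:Pauli-brading_is_stable} the $X$- and $Z$-subgroups carry the multiplication-table presentation, so the appropriate input is Theorem~\ref{thm:gh} (Gowers--Hatami); in the context of Lemma~\ref{lem:eff-pauli} it is Lemma~\ref{lem:eff-z2}. It is also worth noting that the paper's own proof sketch of Corollary~\ref{cor:Pauli-brading_is_stable} actually applies Theorem~\ref{thm:gh} \emph{once} to an induced approximate homomorphism of the full multiplication-table presentation of $\pauli_k$, rather than performing the two-step fix; the two-step route you describe is realized in Lemma~\ref{lem:eff-pauli} and Theorem~\ref{thm:braiding}. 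Your proof is thus closer in spirit to those than to the sketch of Corollary~\ref{cor:Pauli-brading_is_stable}.
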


	\subsection{Measurements and orthonormalization}
	\label{sec:measurements}
	
	Before giving some examples, we introduce the notion of a \emph{positive operator-valued measure} (POVM), or more simply 
 a \emph{measurement}. This is a notion that comes from quantum mechanics and will be useful to formulate some of our statements.  
  If $\mM$ is a tracial von Neumann algebra, a measurement on $\mM$ with outcome set $\mA$ is a finite collection of positive semidefinite operators $\{P_a\}_{a\in \mA}$ such that $\sum_a P_a = \Id_\mM$. A measurement is \emph{projective} if for all $a$, $P_a$ is a projection. 
	
	In our results we will make use of the following elementary but powerful result, which allows us to ``pull back'' projective measurements through an isometry. The result is an application of \emph{orthonormalization}, which transforms a nearly-orthogonal measurement to a nearby orthogonal measurement. See e.g.~\cite{kempe2011parallel,ji2020quantum} or~\cite[Theorem 1.2]{de2021orthogonalization} for the version that we use here. 
	
\begin{lemma}\label{lem:pull-back}
Let  $(\mM,\tau^\mM)$ be a tracial von Neumann algebra, $P\in\mM_\infty$ a projection of finite trace, $\mN=P\mM_\infty P$ and $\tau^\mN=\tau_\infty/\tau_\infty(P)$, and $w\in P \mM_\infty \Id_\mM$ a partial isometry. Let 
\[ \eps = \max\big\{ \tau^\mM\big(\Id_\mM - w^* w\big)\,,\;\tau^\mN\big( P- w w^*\big)\big\}\;.\] 
 Then for any projective measurement $\{T_a\}_{a \in \mA}$ on $\mN$, where $\mA$ is a finite set, there is a projective measurement $\{Q_a\}_{a \in \mA}$ on $\mM$ such that 
\begin{equation}
\label{eq:pull-back} \sum_{a \in \mA} \big\| Q_a - w^* T_a w\big\|_{\tau}^2 \,\leq \ 56\eps\;.
\end{equation}
\end{lemma}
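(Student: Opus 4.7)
The plan is to define the candidate measurement by pulling the $T_a$ back through $w$, namely $Q_a' = w^* T_a w \in \mM$, show that $\{Q_a'\}$ is an approximate projective measurement on $\mM$, and then invoke the orthonormalization result cited in the text (e.g. \cite[Theorem 1.2]{de2021orthogonalization}) to round $\{Q_a'\}$ to a genuine projective measurement $\{Q_a\}$ while paying only a constant factor in $\eps$. Conceptually, what makes the plan go through is that $w$ is a partial isometry whose initial projection $w^*w$ is nearly $\Id_\mM$ and whose final projection $ww^*$ is nearly the unit $P$ of $\mN$; up to these two defects, $Q_a'$ behaves like a projection.

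First I would verify near-completeness: since $w = Pw$ and $\sum_a T_a = P$, we get $\sum_a Q_a' = w^* P w = w^* w$, so the hypothesis $\tau^\mM(\Id_\mM - w^*w) \leq \eps$ gives $\|\Id_\mM - \sum_a Q_a'\|_\tau^2 \leq \eps$ (using $0\leq \Id_\mM - w^*w \leq \Id_\mM$). Next I would verify near-projectivity. Using $T_a = PT_a = T_aP$ and $T_a^2 = T_a$, a direct computation yields
\[ Q_a' - (Q_a')^2 \;=\; w^* T_a w w^* T_a w - w^* T_a w \;=\; -\, w^* T_a (P - ww^*) T_a w \;\leq\; 0\]
on one side, and bounding in absolute value after summing over $a$ and applying the trace gives
\[ \sum_a \tau^\mM\bigl(|Q_a' - (Q_a')^2|\bigr) \;\leq\; C \,\tau^\mN\bigl(P - ww^*\bigr) \;\leq\; C\eps, \]
where $C$ simply tracks the normalization constant $\tau_\infty(P)$ relating $\tau^\mN$ and $\tau_\infty$.

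The final step is to apply the orthonormalization lemma: it takes any tuple of positive contractions on a tracial von Neumann algebra satisfying near-completeness and near-orthogonality with parameter $\eps$, and returns a projective measurement with the same index set that is within $L^2$-distance $O(\sqrt{\eps})$ from the original, so that squaring and summing recovers the bound $\sum_a \|Q_a - Q_a'\|_\tau^2 = O(\eps)$. Chaining this with the definition $Q_a' = w^* T_a w$ yields \eqref{eq:pull-back}.

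The main (non-)obstacle I foresee is bookkeeping rather than substance: one has to ensure that the orthonormalization result is stated in a form that applies to (possibly non-finite) tracial von Neumann algebras, and one has to propagate the normalization between $\tau^\mM$, $\tau^\mN$, and $\tau_\infty$ correctly when passing from $\sum_a \tau_\infty(w^* T_a (P-ww^*) T_a w)$ to a bound controlled by $\eps$. Both points are treated in the cited references; the explicit constant $56$ is then just a consequence of tracking the constants through the orthonormalization step and is not a conceptual difficulty.
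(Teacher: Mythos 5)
Your proposal follows the same broad strategy as the paper --- pull back the $T_a$'s through $w$ and round via de la Salle's orthogonalization theorem --- but it has one genuine gap that is more than bookkeeping. The family $Q_a' = w^* T_a w$ is \emph{not} a POVM on $\mM$: as you observe, $\sum_a Q_a' = w^*w$, which is only close to $\Id_\mM$, not equal. The orthogonalization result \cite[Theorem 1.2]{de2021orthogonalization} takes as input a genuine POVM satisfying $\sum_a \tau(\tilde Q_a^2) \geq 1-\delta$ and returns a nearby PVM; it does not accept a sub-POVM as input. The paper handles this by first \emph{repairing} the completeness defect, defining $\tilde Q_a = w^* T_a w + \frac{1}{|\mA|}(\Id_\mM - w^*w)$, so that $\sum_a \tilde Q_a = \Id_\mM$ exactly, and only then applies the orthogonalization theorem. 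This correction term is not a nuisance --- it is what makes the hypotheses of the cited theorem hold, and it is where some of the extra error (hence the constant $56$) enters, via the triangle inequality at the end. Without it, your plan to ``invoke the orthonormalization result'' does not have a theorem to invoke. You would need either to add the correction term yourself, or to prove a near-complete variant of the orthogonalization lemma --- neither of which appears in your sketch.

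Two smaller points. First, the sign in your near-projectivity computation is reversed: using $T_a P T_a = T_a$ one gets
\[
Q_a' - (Q_a')^2 \,=\, w^* T_a w - w^* T_a w w^* T_a w \,=\, w^* T_a (P - w w^*) T_a w \,\geq\, 0,
\]
since $P - ww^* \geq 0$ (note $ww^* = Pww^*P \leq P$); this is the standard inequality $Q_a' \geq (Q_a')^2$ for a positive contraction, not the opposite one you wrote. Second, the argument as stated only works for small $\eps$; the paper opens with the observation that for $\eps \geq \tfrac12$ the bound $56\eps$ is trivially satisfied by any projective measurement, and this case needs to be disposed of before you can, e.g., write $\frac{\eps}{1-\eps}$.
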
	

\begin{proof}
If $\eps\geq \frac{1}{2}$ the conclusion is straightforward. This is because, whatever projective measurement $\{Q_a\}_{a\in \mA}$ one chooses, we have
\[
\|Q_a-w^*T_aw\|_\tau^2\leq 2\|Q_a\|_\tau^2+2\|w^*T_aw\|_\tau^2,
\] 
and 
\[
\sum_{a\in \mA} \|Q_q\|_\tau^2,\ \sum_{a\in \mA} \|w^*T_aw\|_\tau^2\leq 1.
\]
So, assume $\eps<\frac{1}{2}$. 
Define 
\[\tilde{Q}_a = w^* T_a w  + \frac{1}{|\mA|}\big(\Id_\mM - w^* w\big) \in \mM\;.\]
Then $\{\tilde{Q}_a\}$ is a POVM on $\mM$. Moreover, 
\begin{align*}
\sum_a \tau^\mM \big( \tilde{Q}_a^2 \big) &\geq \sum_a \tau^\mM \big( \big(w^* T_a w \big)^2 \big) \\
&= \sum_a \tau^\mM \big(  w^* T_a w w^*T_a w \big)\\
&= \sum_a \tau^\mM \big(  w^* T_a  P T_a w \big) - \sum_a \tau^\mM \big( w^* T_a  ( P - w w^*) T_a w \big)\\
&\geq 1 - \eps -  \sum_a \tau_\infty \big( w^* T_a  ( P - w w^*) T_a w \big)\\
&\geq 1 - \eps -  \tau_\infty\Big(\big( P - w w^*\big)\Big(\sum_a  T_a w w^* T_a\Big)\Big)\\ 
&\geq 1- \eps- \tau_\infty\big( P- w w^*\big)\;,
\end{align*}
where the third line uses that $T_aPT_a=T_a$, $\sum_a T_a = \Id_\mN$ and the definition of $\eps$ for the first term, and for the second the fact that for $A\in\mM$, $\tau^\mM(A)=\tau_\infty(A)$ by definition of $\tau_\infty$ and the identification of $\mM$ with a ``corner'' in $\mM_\infty$, the fourth line uses cyclicity of the trace for the second, and the last uses $\|ww^*\|_\infty,\|\sum_a T_a\|_\infty\leq 1$.\footnote{The notation $\|\cdot\|_\infty$ refers to the operator norm.} By assumption, 
\begin{align}
\tau_\infty\big( P- w w^*\big) \,\leq\, \eps\, \tau_\infty(P)\,\leq \frac{\eps}{1-\eps}\;,\label{eq:bound_on_P-ww*_in_trace_infty}
\end{align}
where the last inequality is because by definition, $\tau^N(P)=1$, thus
\[1-\eps \,\leq\, \tau^\mN(ww^*) \,=\, \frac{\tau_\infty(ww^*)}{\tau_\infty(P)}\,=\, \frac{\tau_\infty(w^*w)}{\tau_\infty(P)}  \,\leq\, \frac{1}{\tau_\infty(P)}\]
since $\tau_\infty(w^* w) = \tau^\mM(w^* w)$ and $w^*w\leq I_\mM$. Overall, 
\begin{equation}\label{eq:square_greater_1-3eps}
    \sum_a \tau^\mM \big( \tilde{Q}_a^2 \big) \,\geq\, 1-\eps-\frac{\eps}{1-\eps}\,\geq\, 1-3\eps\;.
\end{equation}
To conclude we apply~\cite[Theorem 1.2]{de2021orthogonalization} to obtain a projective measurement $\{Q_a\}$ on $\mM$ such that 
\begin{equation*}
\sum_a \big\|{Q}_a - \tilde{Q}_a \big\|^2_\tau \,=\, 27\eps\;.
\end{equation*}
Finally,
\begin{align*}
\sum_a \big\|{Q}_a - w^*{T}_a w\big\|^2_\tau &= \sum_a \Big\|{Q}_a - \tilde{Q}_a  + \frac{1}{|\mA|}\big(I_\mM - w^* w\big) \Big\|^2_2\\
&\leq  \sum_a 2\big\|{Q}_a - \tilde{Q}_a\big\|_\tau^2  + 2\frac{1}{|\mA|}\big\|I_\mM - w^* w\big\|_\tau^2 \\
&\leq 54 \eps + 2 \tau^\mM( (I_\mM - w^* w)^2 ) \\
&\leq 54 \eps + 2 \tau^\mM(I_\mM - w^* w ) \\
&\leq 56 \eps\;,
\end{align*}
where the second line is by the triangle inequality, the fourth line is due to the fact that $I_\mM - w^* w$ is positive and has operator norm at most $1$, and the last line is by $\tau^\mM(I_\mM - w^* w ) \leq \eps$.
\end{proof}

\subsection{Examples}
\label{sec:examples}

As a first example we spell out the application of Theorem~\ref{thm:gh} to the case of $G=\Z_2^k$. 
Below, when we write $\Es{i\in \mX}$ where $\mX$ is a finite set, we mean the expectation over $i$ chosen uniformly at random from $\mX$, i.e.\ $\frac{1}{|\mX|} \sum_{i\in \mX}$. 

\begin{corollary}\label{cor:lin-test} 
Let $(\mM,\tau)$ be a tracial von Neumann algebra and $\phi:\Z_2^k \to \mU(\mM)$ such that 
\[ \Es{x,y\in \Z_2^k} \big\| \phi(x)\phi(y)-\phi(x+y) \big\|_{\tau}^2 \,\leq\,\eps\;.\]
Then there is a 
projective measurement $\{P_u\}_{u\in \Z_2^k}$ on $\mM$ such that 
\[ \Es{x\in \Z_2^k} \Big\| \phi(x) -\Big(\sum_u (-1)^{u\cdot x} P_u\Big)  \Big\|_{\tau}^2 \,=\, O(\eps)\;.\]
\end{corollary}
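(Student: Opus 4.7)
The plan is to combine Theorem~\ref{thm:gh}, the character theory of $\Z_2^k$, and Lemma~\ref{lem:pull-back}, then tie everything together via a Plancherel-style computation. The hypothesis says exactly that $\phi$ is an $(\eps,\mu_R)$-almost homomorphism of the multiplication table presentation of $\Z_2^k$, with $\mu_R$ uniform over the relations $r_{x,y} = xy(x+y)^{-1}$. Theorem~\ref{thm:gh} therefore produces a genuine unitary representation $\psi : \Z_2^k \to \mU(\mN)$, where $\mN = P\mM_\infty P$ for some finite-trace projection $P$, and a partial isometry $w \in P\mM_\infty \Id_\mM$ such that
\[
\Es{x\in \Z_2^k} \big\| \phi(x) - w^* \psi(x) w \big\|_\tau^2 \;\leq\; c\eps,
\qquad
\max\big\{\tau^\mM(\Id_\mM - w^*w),\,\tau^\mN(P - ww^*)\big\} \;\leq\; c\eps.
\]

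Next I would diagonalize $\psi$. Since $\Z_2^k$ is finite abelian and each $\psi(x)$ is a self-adjoint unitary with $\psi(x)^2 = \Id_\mN$, all the $\psi(x)$ commute and are simultaneously diagonalizable in terms of the characters $\chi_u(x) = (-1)^{u\cdot x}$. Concretely, setting
\[
T_u \;=\; \frac{1}{2^k} \sum_{x \in \Z_2^k} (-1)^{u\cdot x}\, \psi(x) \quad (u \in \Z_2^k)
\]
gives a projective measurement $\{T_u\}_{u \in \Z_2^k}$ on $\mN$ satisfying $\psi(x) = \sum_u (-1)^{u\cdot x} T_u$.

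Now I invoke Lemma~\ref{lem:pull-back} on the projective measurement $\{T_u\}$ and the partial isometry $w$, with $\eps' \leq c\eps$. This yields a projective measurement $\{P_u\}_{u\in\Z_2^k}$ on $\mM$ with
\[
\sum_{u \in \Z_2^k} \big\| P_u - w^* T_u w \big\|_\tau^2 \;\leq\; 56\, c\eps.
\]

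The final step is a short orthogonality calculation. By the triangle inequality it suffices to bound $\Es{x} \big\| w^*\psi(x) w - \sum_u (-1)^{u\cdot x} P_u\big\|_\tau^2$. Writing $X_u = w^* T_u w - P_u$, expanding $\tau$ of the square, and using $\Es{x\in\Z_2^k}(-1)^{(u+v)\cdot x} = \delta_{u,v}$ gives a Plancherel-type identity
\[
\Es{x \in \Z_2^k} \Big\| \sum_u (-1)^{u\cdot x} X_u \Big\|_\tau^2
\;=\; \sum_{u,v} \Es{x}(-1)^{(u+v)\cdot x}\, \tau(X_u^* X_v) \;=\; \sum_u \|X_u\|_\tau^2 \;\leq\; 56 c\eps,
\]
so combining the two $O(\eps)$ contributions yields the desired bound. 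The only step that requires any care is this last cross-term computation: a naive triangle inequality over $u \in \Z_2^k$ would pick up an exponential $|\Z_2^k|$ factor, and the orthogonality of characters under the uniform measure on $x$ is precisely what avoids it; everything else is a direct application of the two earlier results.
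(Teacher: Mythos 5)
Your proof is correct and follows essentially the same route as the paper's: apply Theorem~\ref{thm:gh} to the multiplication-table presentation, Fourier-diagonalize the resulting exact representation of $\Z_2^k$ into a projective measurement $\{T_u\}$, pull it back through the partial isometry via Lemma~\ref{lem:pull-back}, and conclude by the orthogonality of characters (Parseval) to avoid the exponential blow-up a naive triangle inequality would incur. The only difference is notational (the paper writes $Q_u$ where you write $T_u$), so there is nothing to add.
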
 

\begin{proof}
Any $\phi$ as in the corollary statement is an $(\eps,U_R)$-almost homomorphism of $\Z_2^k$ into $(\mM,\tau)$ for the multiplication table presentation. Applying Theorem~\ref{thm:gh}, $\phi$ is $O(\eps)$-close to a homomorphism from $\Z_2^k$ to some $(\mN,\tau^\mN)$. Because $\Z_2^k$ is Abelian, such a homomorphism is given by commuting unitaries $(U_x)_{x\in\Z_2^k}$ on $\mN$. Moreover, since $\Z_2^k$ is a $2$-group, each $U_x$ satisfies $U_x^2=\Id$, hence $U_x=U_x^*$.

For every $u\in  \Z_2^k$ let $Q_u = \Es{x} (-1)^{u\cdot x} U_x$. Then each $Q_u$ is a projection on $\mN$ such that $\sum_u Q_u=\Id$, and $U_x = \sum_u (-1)^{u\cdot x} Q_u$. Furthermore, by the conclusion of Theorem~\ref{thm:gh} it holds that 
\begin{equation}\label{eq:lin-test-1}
   \Es{x\in \Z_2^k} \Big\| \phi(x) - w^*\Big(\sum_u (-1)^{u\cdot x} Q_u\Big)w  \Big\|_{\tau}^2 \,=\, O(\eps)\;,
\end{equation}
for some partial isometry $w\in P\mM_\infty I_\mM$ as in Definition~\ref{def:close}.
Using Lemma~\ref{lem:pull-back}, we find a projective measurement $\{P_u\}$ on $\mM$ that satisfies 
\begin{equation}\label{eq:lin-test-1b}
 \sum_u \big\| P_u - w^* Q_u w \big\|_\tau^2 \,=\, O(\eps)\;.
\end{equation}
Thus
\begin{align}
\Es{x\in \Z_2^k}  \Big\| \sum_u (-1)^{u\cdot x} \big(P_u - w^* Q_u w\big) \Big\|_\tau^2
&=\Es{x\in \Z_2^k} \sum_{u,v} (-1)^{(u+v)\cdot x} \tau\big(\big(P_u - w^* Q_u w\big)\big(P_v - w^* Q_v w\big)\big) \notag\\
&= \sum_u \big\| P_u - w^* Q_u w\big\|_\tau^2\notag\\
&= O(\eps)\;,\label{eq:lin-test-2}
\end{align}
where the second line uses $\Es{x} (-1)^{w\cdot x} = 0$ if $w\neq 0$, and $1$ otherwise, and the last line is by~\eqref{eq:lin-test-1b}. Plugging back into~\eqref{eq:lin-test-1} and using the triangle inequality shows the corollary.  
\end{proof}

The multiplication table presentation of $\Z_2^k$ is quite long,  in the sense of Definition \ref{def:length_of_pres}. It has as many generators as the group size, and quadratically as many relations, which gives a length of $O(2^{2k})$. 
There are much shorter presentations of $\Z_2^k$, for example the straightforward

\begin{equation}\label{eq:z2-efficient}
 \Z_2^k = \langle x_1,\ldots,x_k : [x_i,x_j]=e, x_i^2=e \; \forall i\neq j \rangle\;,
\end{equation}
where $[x_i,x_j]=x_ix_jx_i^{-1}x_j^{-1}$ is the group commutator. 
Its length is $O(k^2)$, which is polylogarithmic in the group size instead compared to the polynomial length of the multiplication table presentation. By \cite{hadwin2018stability}, every presentation of a finite group has \textbf{some} modulus of stability.
The following two lemmas show that, though the presentation \eqref{eq:z2-efficient} has a linear modulus of stability (as expected, see Remark~\ref{rk:linear-modulus}), it deteriorates by a constant factor $k$. In some sense, these two examples are on opposite sides of the efficient stability tradeoof: \eqref{eq:z2-efficient} is short but has a bad modulus of stability, while the multiplication table presentation is very long, while having an essentially optimal modulus of stability. As we keep recalling, the goal of this paper is to provide an example of a somewhat short presentation of $\Z_2^k$, which has a good enough modulus of stability to deduce the main technical results needed for $\MIP^*=\RE$~\cite{ji2020mip}.

\begin{lemma}[Lemma 3.8 in~\cite{slofstra2019set}]\label{lem:eff-z2}
Let $\mC$ be the class of tracial von Neumann algebras. Let $\mu_R$ be the equal mixture of the uniform distribution on all words $[x_i,x_j]$ ($i\neq j$) and the uniform distribution on all words $x_i^2$. Let $\mu_S$ be the uniform distribution on $\{x_1,\ldots,x_k\}$. Then 
  for every $k$, there is a $\delta_k = O_k(\eps)$ such that the presentation~\eqref{eq:z2-efficient} is $(\delta_k,\mu_S,\mu_R,\mC)$-stable. Furthermore, the close representation can be taken on the same algebra.
\end{lemma}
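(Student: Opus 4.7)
The plan is to reduce the claim to Corollary~\ref{cor:lin-test}, which provides the analogous stability result for the multiplication table presentation of $\Z_2^k$. Writing $U_i := \phi(x_i)$, extend $\phi$ to a map $\tilde\phi:\Z_2^k \to \mU(\mM)$ defined on $\alpha \in \{0,1\}^k$ by the ordered product
\[ \tilde\phi(\alpha) \,=\, U_1^{\alpha_1}\cdots U_k^{\alpha_k}\;. \]
The key technical step is to show that $\tilde\phi$ is approximately multiplicative in the multiplication table sense of $\Z_2^k$, with only a $\poly(k)$ blowup of the error.

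To bound the multiplicative defect of $\tilde\phi$, one reorders the product $\tilde\phi(\alpha)\tilde\phi(\beta)$ into $\tilde\phi(\alpha+\beta)$: moving each factor $U_i^{\beta_i}$ leftward past $U_{i+1}^{\alpha_{i+1}},\ldots,U_k^{\alpha_k}$ requires at most $\sum_{i<j}\alpha_j\beta_i$ adjacent transpositions, each contributing tracial error $\|[U_i,U_j]-\Id\|_\tau$, followed by at most $k$ reductions $U_i^2\to\Id$, each contributing $\|U_i^2-\Id\|_\tau$. Applying Cauchy--Schwarz to the resulting telescoping bound (with at most $k^2$ terms) and averaging over $\alpha,\beta\in\Z_2^k$, where each specific swap is activated with probability $\nicefrac{1}{4}$, yields
\[ \Es{\alpha,\beta\in\Z_2^k}\big\|\tilde\phi(\alpha)\tilde\phi(\beta) - \tilde\phi(\alpha+\beta)\big\|_\tau^2 \,\leq\, O(k^4\eps)\;. \]

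Applying Corollary~\ref{cor:lin-test} to $\tilde\phi$ then yields a projective measurement $\{P_u\}_{u\in\Z_2^k}$ on the \emph{same} algebra $\mM$ with $\Es{\alpha}\|\tilde\phi(\alpha) - \sum_u(-1)^{u\cdot\alpha}P_u\|_\tau^2 = O(k^4\eps)$. Define $\psi:\Z_2^k\to \mU(\mM)$ by $\psi(\alpha) = \sum_u(-1)^{u\cdot\alpha}P_u$; this is a genuine representation, since the relations of $\Z_2^k$ are enforced by the projective measurement structure of the $\{P_u\}$. Bounding the generator average by the full group average with the crude factor $2^k/k$ gives
\[ \Es{i\sim\mu_S}\big\|U_i - \psi(x_i)\big\|_\tau^2 \,\leq\, \frac{2^k}{k}\cdot O(k^4\eps) \,=\, O_k(\eps)\;, \]
establishing the claim, with $\psi$ defined on the same algebra $\mM$.

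The main technical obstacle is the combinatorial bookkeeping of commutator errors through the reordering argument, ensuring that the distribution-weighted bound remains correctly aligned with the hypothesis coming from $\mu_R$. The exponential dependence on $k$ in the resulting modulus is wasteful but permitted by the $O_k(\eps)$ target; a polynomial-in-$k$ modulus would likely require a more direct spectral construction of the $P_u$'s from approximate joint eigenspaces of the $U_i$'s, bypassing the detour through the multiplication-table presentation.
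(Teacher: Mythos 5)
Your proof is correct and takes exactly the route the paper sketches in the paragraph immediately following the lemma (the paper itself only cites Slofstra and remarks that "expressing each element of $\Z_2^k$ as a product of generators in the natural way, and applying Corollary~\ref{cor:lin-test}" gives the result). The reordering/telescoping bound $O(k^4\eps)$ on the multiplicative defect of $\tilde\phi$ is correct, and the final crude step of bounding the average over $k$ generators by $2^k/k$ times the average over the full group is valid and suffices for the $O_k(\eps)$ target, though as you note it is far from the $O(k^2\eps)$ or $O(k\eps)$ moduli mentioned in the surrounding text.
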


As one would expect from Lemma~\ref{lem:stab-group}, the dependence of $\delta_k$ on $k$ depends on both the diameter of $\Z_2^k$ for the presentation~\eqref{eq:z2-efficient}, and certain values of its \emph{Dehn function} --- namely, the minimal volume of a Van Kampen diagram with perimeters of length $3$. By expressing each element of $\Z_2^k$ as a product of generators in the natural way, and by applying Corollary~\ref{cor:lin-test}, it is possible to show that $\delta_k=O(k^2\eps)$ in Lemma~\ref{lem:eff-z2}. With more work one can get $\delta_k=O(k\eps)$. This was shown in~\cite[Theorem 3.2]{chao2017overlapping} for the case of the operator norm. We show it for the dimension-normalized Frobenius norm in Appendix~\ref{sec:proj}, using an iterative rounding argument that (partially) parallels the one in~\cite{chao2017overlapping}.  This turns out to be  tight, as the next lemma shows. 

\begin{lemma}\label{lem:lower_bound_on_stability_rate_standard_presentation_Z_2^k}
    Let $\mu_R$ and $\mu_S$ be as in Lemma~\ref{lem:eff-z2}. 
    Then, for every $1\leq c\leq \frac{k}{2}$, there is a $(\nicefrac{c}{\binom{k}{2}},\mu_R)$-almost homomorphism of $\langle S\colon R\rangle$ which is at least  $(\nicefrac{c}{16k},\mu_S)$-far from any homomorphism (according to Definition~\ref{def:close}). 
\end{lemma}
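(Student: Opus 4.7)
The strategy is to exhibit an explicit counterexample $\phi$ that violates only about $c$ of the $\binom{k}{2}$ commutation relations while being provably far from every honest representation of $\Z_2^k$. Set $m = \lfloor c/2 \rfloor$ and $\mM = M_{2^m}(\C) \cong M_2(\C)^{\otimes m}$, equipped with the normalized trace $\tau$. For $i \in \{1,\dots,m\}$ let $X_i,Z_i \in \mM$ be the Pauli operators acting as $X$ (resp.\ $Z$) on the $i$-th tensor factor and as the identity elsewhere, and define
\[
\phi(x_{2i-1}) = X_i,\quad \phi(x_{2i}) = Z_i \quad (1\leq i\leq m),\qquad \phi(x_j) = \Id_\mM \quad (2m < j \leq k),
\]
extended to a homomorphism on $\mF(S)$. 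Since $c \leq k/2$ we have $2m \leq k$, so this is well defined.

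Next I check the bound on $\eps$. All involution relations $\phi(x_j)^2 = \Id$ hold exactly, and a commutation relation $[x_a,x_b]$ is satisfied exactly unless $\{a,b\} = \{2i-1,2i\}$ for some $i$, in which case $\phi([x_{2i-1},x_{2i}]) = X_iZ_iX_iZ_i = -\Id$, giving $\|\phi([x_{2i-1},x_{2i}]) - \Id\|_\tau^2 = 4$. Therefore
\[
\Es{r\sim \mu_R}\|\phi(r) - \Id\|_\tau^2 \;=\; \tfrac{1}{2}\cdot \tfrac{4m}{\binom{k}{2}} \;=\; \tfrac{2m}{\binom{k}{2}} \;\leq\; \tfrac{c}{\binom{k}{2}},
\]
so $\phi$ is a $(c/\binom{k}{2},\mu_R)$-almost homomorphism.

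For the lower bound on the distance, suppose $\psi\colon\Z_2^k\to\mU(\mN)$ is $(\delta,\mu_S)$-close to $\phi$, with $\mN = P\mM_\infty P$ and partial isometry $w\in P\mM_\infty \Id_\mM$, and write $\delta_j = \|\phi(x_j) - w^*\psi(x_j)w\|_\tau^2$, so that $\frac{1}{k}\sum_j \delta_j = \delta$. For each anticommuting pair $(a,b)=(2i-1,2i)$ one computes $\|\phi(x_a)\phi(x_b)-\phi(x_b)\phi(x_a)\|_\tau = \|2X_iZ_i\|_\tau = 2$. On the other hand, $\psi(x_a)\psi(x_b)=\psi(x_b)\psi(x_a)$ because $\Z_2^k$ is abelian, so substituting $\phi(x_a) \approx w^*\psi(x_a)w$ and $\phi(x_b)\approx w^*\psi(x_b)w$ and bounding the cross terms (using also $\|P-ww^*\|_\tau^2 \leq \delta$, as in the proof of Lemma~\ref{lem:pull-back}) gives a triangle-inequality bound
\[
2 \;\leq\; C\bigl(\sqrt{\delta_a}+\sqrt{\delta_b}+\sqrt{\delta}\bigr)
\]
for an absolute constant $C$. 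Cauchy--Schwarz then yields $\delta_a+\delta_b+\delta \geq 4/(3C^2)$. Summing over the $m$ disjoint pairs and using $\sum_{j=1}^{2m}\delta_j \leq \sum_{j=1}^k\delta_j = k\delta$ produces $(k+m)\delta \geq 4m/(3C^2)$, and since $m\leq c/2\leq k/4$ this entails $\delta \geq \Omega(c/k)$.

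The principal (and essentially the only) obstacle is the bookkeeping: one has to track constants carefully enough through the triangle-inequality expansion of $\phi(x_a)\phi(x_b)-\phi(x_b)\phi(x_a)$---most delicately, the compression error $w^*\psi(x_a)(ww^*-P)\psi(x_b)w$, which is controlled using~\eqref{eq:bound_on_P-ww*_in_trace_infty}---to arrive at the explicit constant $1/16$ stated in the lemma rather than merely $\Omega(c/k)$.
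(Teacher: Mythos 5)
Your construction is genuinely different from the paper's. You use Pauli $X/Z$ observables on $M_{2^m}(\C)$, whereas the paper builds $\{A_i\}$ as permutations of $\F_2^{[k]\cup E}$ from a graph $([k],E)$ (a NOT on the $i$-th bit composed with CNOTs onto the edge bits), and then specializes $E$ to a matching with $c$ edges. Both proofs share the same high-level mechanism: a ``bad'' commutator forces the nearby exact representation to move by a constant on the two variables involved, and one sums over disjoint pairs.

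There is, however, a genuine gap. With Pauli observables each bad pair contributes $\|\phi([x_{2i-1},x_{2i}])-\Id\|_\tau^2 = \|-2\Id\|_\tau^2 = 4$, so keeping the relation error at most $c/\binom{k}{2}$ forces you down to $m=\lfloor c/2\rfloor$ anticommuting pairs. For $c=1$ this gives $m=0$, hence $\phi\equiv\Id$ is a genuine homomorphism and cannot be $(1/(16k),\mu_S)$-far from every homomorphism, so the statement fails for your $\phi$. The paper avoids this because $A_iA_jA_iA_j$ has no fixed points yet nonzero off-diagonal overlap, giving $\|A_iA_j-A_jA_i\|_\tau^2=2$ rather than $4$; this factor of two is precisely what allows a matching with $c$ (not $\lfloor c/2\rfloor$) edges and is what yields the stated constant $1/16$ for all $1\le c\le k/2$. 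If you want to keep the Pauli picture, one fix is to direct-sum each anticommuting $2\times 2$ block with an equally-weighted trivial block, which brings the per-pair error down to $2$ and lets you take $m=c$; but then the effective anticommutation $\|\phi(x_a)\phi(x_b)-\phi(x_b)\phi(x_a)\|_\tau$ drops to $\sqrt{2}$, and you would still need to carry out the triangle-inequality and compression estimates (using $\tau_\infty(P-ww^*)\le\delta/(1-\delta)$) explicitly to recover a constant no worse than $1/16$ --- which you yourself flag as not yet done. The $c=1$ failure is the concrete obstruction; the constant-tracking is secondary.
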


Lemma~\ref{lem:lower_bound_on_stability_rate_standard_presentation_Z_2^k} implies that the modulus of stability of \eqref{eq:z2-efficient} is $\Omega(k\eps)$ whenever $\eps\leq\frac{1}{k-1}$. This is because, for any $0<\eps<\nicefrac{1}{k-1}$, one can add trivial representations of $\Z_2^k$ to copies of the approximate representation described in Lemma~\ref{lem:lower_bound_on_stability_rate_standard_presentation_Z_2^k} (choosing $c=\nicefrac{k}{2}$), such that the resulting map is an $(\eps,\mu_R)$-approximate representation which is at least $(\nicefrac{(k-1)\eps}{32},\mu_S)$-away from any actual representation of $\Z_2^k$. We record this observation as the following corollary.

\begin{corollary}\label{cor:delta-z2k}
The presentation of $\Z_2^k$ given in~\eqref{eq:z2-efficient}, together with the distributions $\mu_R$ and $\mu_S$ specified in Lemma~\ref{lem:eff-z2}, has a modulus of stability  $\delta_k(\eps)$ with respect to the class $\mathcal{C}$ of tracial von Neumann algebras that satisfies
\[ \frac{1}{C} \min(k\eps,1)\,\leq\,\delta_k(\eps)\,\leq\, C\min(k\eps,1)\;,\]
where $C>0$ is some constant.
\end{corollary}

The proof of Lemma~\ref{lem:lower_bound_on_stability_rate_standard_presentation_Z_2^k} is given in Appendix \ref{appendix:lower_bounds}, which also includes some discussion on the $L^\infty$ variant of it.

In the next section we will obtain presentations of $\Z_2^k$ that have a much better length/modulus tradeoff than the one given in~\eqref{eq:z2-efficient} or the multiplication table presentation. In the meantime, we give one last example. To formulate it we recall the definition of the Pauli matrices
	\begin{equation}\label{eq:def-pauli-1} \sigma^X = \begin{pmatrix} 0 & 1 \\ 1 & 0 \end{pmatrix}\;,\qquad \sigma^Z = \begin{pmatrix} 1 & 0 \\ 0 & -1\end{pmatrix}\;,
	\end{equation}
	and more generally for $a,b\in \Z_2^k$ let 
	\begin{equation}\label{eq:def-pauli-2}
	\sigma^X(a) = \bigotimes_{i=1}^k (\sigma^X)^{a_i}\qquad\text{and}\qquad\sigma^Z(b) = \bigotimes_{i=1}^k (\sigma^Z)^{b_i}\;.
	\end{equation}
These are self-adjoint unitary operators called Pauli observables. Each observable $\sigma^X(a)$ (resp. $\sigma^Z(b)$) corresponds to the \emph{Pauli measurement} $\{ \sigma^X_a \}_{a \in \Z_2^k}$ (resp. $\{ \sigma^Z_b \}_{b \in \Z_2^k}$) where (in a slight abuse of notation)  
	\[    \sigma^X_a = \Es{\alpha\in\F_2^k} (-1)^{a\cdot \alpha} \sigma^X(\alpha)\qquad\text{and}\qquad\sigma^Z_b = \Es{\beta\in\F_2^k} (-1)^{b\cdot\beta} \sigma^Z(\beta).\]
	It is easy to verify that $\{\sigma^X_a\}_a$ and $\{\sigma^Z_b\}_b$ are projections summing to identity.	
	
For an integer $k\geq 1$,  the Pauli group $\pauli_k$ is the group generated by the Pauli matrices $\sigma^X(a)$, $\sigma^Z(b)$ introduced in~\eqref{eq:def-pauli-2}. It can also be defined more abstractly as follows. Let $\gamma: \Z_2^k\times \Z_2^k \to \{-1,1\}$ be given by $\gamma(a,b)=(-1)^{a\cdot b}$. Then $\pauli_k$ is the central extension of $\Z_2^k\times \Z_2^k$ by $\{-1,1\}$ given by $\gamma$. This group 
is also known as the Heisenberg group 
\begin{equation}\label{eq:defn_Pauli_as_Heisenberg} H_{2k+1} = \left \{ \begin{pmatrix} 1 & a & c \\ 0 & I_{k \times k} & b \\ 0 & 0 & 1 \end{pmatrix} \right\}\subseteq {\mathrm{GL}}_{k+2}(\F_2)\;.
\end{equation}
 
We consider the following presentation for the Pauli group:
\begin{align}
 \pauli_k \,=\, \big\langle \{J\}\cup  \{(a,0),(0,b): a,b\in \Z_2^k\} &: (a,0)^2=(0,b)^2=J^2=e,\; [(a,0),J]=[(0,b),J]=e,\notag\\
 &\quad(a,0)(a',0)=(a+a',0),(0,b)(0,b')=(0,b+b') \notag\\
&\quad  (a,0)(0,b) = J^{a\cdot b} (0,b)(a,0) \; \quad \forall a,b,a',b'\in \Z_2^k\big\rangle\;.\label{eq:almost_mult_table_pres_of_Pauli}
\end{align}
This presentation is not quite the multiplication table presentation (it has $2^{k+1}+1$ generators, whereas $|\pauli_k|=2^{2k+1}$), but it is not far from it.  
Applying Theorem~\ref{thm:gh} we obtain the following consequence.

\begin{corollary}[Pauli braiding test]\label{cor:Pauli-brading_is_stable}
Let $\phi_X,\phi_Z:\Z_2^k \to \mU(\mM)$ be maps such that for all $W\in \{X,Z\}$,
\[ \Es{a,b\in \Z_2^k} \big\| \phi_W(a)\phi_W(b)-\phi_W(a+b) \big\|_\tau^2 \,\leq\,\eps\;,\]
and
\[ \Es{a,b\in \Z_2^k} \big\| \phi_X(a)\phi_Z(b)- (-1)^{a\cdot b} \phi_Z(b)\phi_X(a) \big\|_{\tau}^2 \,\leq\,\eps\;.\]
Then there is a projection $P\in \mM_\infty$ such that if $\mN=P\mM_\infty P$ then $\mN\simeq (M_2(\C))^{\otimes k} \otimes \mN'$ for some $\mN'$, and a
partial isometry $w\in P\mM_\infty I_\mM$ such that for all $W\in \{X,Z\}$, 
\[ \Es{W\in \Z_2^k} \big\| \phi_W(a) - w^* \big(\sigma_W(a)\otimes \Id_{\mN'}\big) w \big\|_{\tau}^2 \,=\, O(\eps)\;.\]
\end{corollary}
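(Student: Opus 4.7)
The plan is to assemble the corollary in three stages: first fix each of $\phi_X$ and $\phi_Z$ independently into genuine representations of $\Z_2^k$ using Corollary~\ref{cor:lin-test}; then combine these two representations into a representation of the full Pauli group using the spectral-gap theorem (Theorem~\ref{thm:dls-gap}); and finally invoke the representation theory of $\pauli_k$ to extract the required tensor factorization of the ambient algebra.

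Concretely, first apply Corollary~\ref{cor:lin-test} to each of $\phi_X$ and $\phi_Z$ to obtain, for each $W \in \{X,Z\}$, a projective measurement $\{P_u^W\}_{u \in \Z_2^k}$ on $\mM$ such that $\tilde\phi_W(a) := \sum_u (-1)^{u\cdot a} P_u^W$ is an exact unitary homomorphism $\Z_2^k \to \mU(\mM)$ with $\Es{a} \|\phi_W(a) - \tilde\phi_W(a)\|_\tau^2 = O(\eps)$. Because all operators involved are unitary, inserting $\tilde\phi_W$ in place of $\phi_W$ in the twisted commutation hypothesis costs only a triangle-inequality factor, giving
\[ \Es{a,b \in \Z_2^k} \big\| \tilde\phi_X(a)\tilde\phi_Z(b) - (-1)^{a\cdot b}\tilde\phi_Z(b)\tilde\phi_X(a) \big\|_\tau^2 \,=\, O(\eps)\;. \]

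Next, invoke Theorem~\ref{thm:dls-gap} with $\mu$ the uniform distribution on $\Z_2^k$, which has inverse spectral gap $\kappa = 1$ since $\Es{b}(-1)^{a\cdot b}=0$ for $a \neq 0$. This produces an algebra $\mN = P \mM_\infty P$, a partial isometry $w \in P\mM_\infty \Id_\mM$, and genuine homomorphisms $U_X, U_Z : \Z_2^k \to \mU(\mN)$ that \emph{exactly} satisfy $U_X(a)U_Z(b) = (-1)^{a\cdot b} U_Z(b) U_X(a)$ and are $(O(\eps),\mu)$-close (in the sense of Definition~\ref{def:close}) to $\tilde\phi_X$ and $\tilde\phi_Z$ respectively. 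Chaining the closeness bounds from the two steps yields $\Es{a} \|\phi_W(a) - w^* U_W(a) w\|_\tau^2 = O(\eps)$ for each $W$.

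The final step, which I expect to be the main technical point, is to identify $\mN$ with $M_2(\C)^{\otimes k} \otimes \mN'$ so that $U_W(a)$ becomes $\sigma^W(a) \otimes \Id_{\mN'}$. The pair $(U_X, U_Z)$ is precisely a unitary representation of the Heisenberg group $\pauli_k$ on $\mN$ with central character $J \mapsto -\Id_\mN$, and by the Stone--von Neumann theorem specialized to this finite extraspecial $2$-group, the unique irreducible representation with that central character is the $2^k$-dimensional defining representation through the Pauli matrices $\sigma^X(a),\sigma^Z(b)$. Consequently the von Neumann subalgebra generated by $\{U_X(a), U_Z(b)\}_{a,b \in \Z_2^k}$ is isomorphic to the full matrix algebra $M_2(\C)^{\otimes k}$; being a type I factor, it splits $\mN \simeq M_2(\C)^{\otimes k} \otimes \mN'$ with $\mN'$ its commutant, and under this splitting $U_W(a) = \sigma^W(a) \otimes \Id_{\mN'}$. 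Substituting into the bound of the preceding paragraph yields the corollary.
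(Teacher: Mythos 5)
Your proof is correct, and it takes a genuinely different route from the paper's written proof sketch. The paper's sketch packages $\phi_X,\phi_Z$ directly into an approximate homomorphism of the full multiplication-table presentation of $\pauli_k$ (via $\phi((-1)^c\sigma_X(a)\sigma_Z(b)) = (-1)^c\phi_X(a)\phi_Z(b)$) and applies Gowers--Hatami (Theorem~\ref{thm:gh}) once, then implicitly splits the resulting representation; Theorem~\ref{thm:dls-gap} plays no role in the sketch as written. You instead follow the two-step strategy: repair $\phi_X$ and $\phi_Z$ separately via Corollary~\ref{cor:lin-test}, transfer the twisted-commutation bound to the repaired maps by the triangle inequality (fine, since all operators are unitary and the $\tau$-norm is bi-unitarily invariant), then invoke Theorem~\ref{thm:dls-gap} with uniform $\mu$ (so $\kappa = 1$), and finally identify $\mN$ with $M_2(\C)^{\otimes k}\otimes \mN'$ via Stone--von Neumann for the extraspecial $2$-group $\pauli_k$. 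Interestingly, the paper's own remark following Theorem~\ref{thm:dls-gap} advertises precisely this two-step procedure and even claims it is what the proof sketch of Corollary~\ref{cor:Pauli-brading_is_stable} does --- so your argument matches the paper's stated intent even though the written sketch actually chose the one-shot Gowers--Hatami route. The two approaches are quantitatively comparable (both yield $O(\eps)$); yours makes the role of the anticommutation estimate and the spectral gap explicit, and spells out the Stone--von Neumann splitting step which the paper's sketch elides. One detail worth flagging: your argument implicitly relies on Theorem~\ref{thm:dls-gap} producing a \emph{single} partial isometry $w$ under which both $\phi_X$ and $\phi_Z$ are close to $U_X$ and $U_Z$ respectively; this is indeed what de la Salle proves, but the paper's restatement of the theorem does not say so explicitly, so it is worth making this explicit in a careful write-up.
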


Since this statement already appears in~\cite{natarajan2017quantum} (restricted to approximate homomorphisms into the finite-dimensional unitaries, and with slightly worse dependence on $\eps$), and we will not need it here, we only sketch the proof. 

\begin{proof}[Proof sketch]
The idea is the following: The maps $\phi_X,\phi_Z$ induce an $\eps$-approximate representation of the Pauli group with respect to the presentation \eqref{eq:almost_mult_table_pres_of_Pauli}. From that, we can use the canonical form of elements in $\pauli_k$ to create an $O(\eps)$-approximate representation of the multiplication table presentation of $\pauli_k$ --- define $\phi:\pauli_k \to \mU(\mM)$ by $\phi((-1)^c \sigma_X(a)\sigma_Z(b))= (-1)^c \phi_X(a)\phi_Z(b)$, where $c\in \{\pm 1\}$ and $a,b\in \Z_2^k$. Applying Gowers--Hatami (Theorem \ref{thm:gh}), we deduce the corollary.
\end{proof}

Similarly to Lemma~\ref{lem:eff-z2} we can state a short version of the preceding corollary, with a bad modulus of stability, which applies to the presentation
\begin{align}
 \pauli_k = \big\langle x_1,\ldots,x_k,z_1,\ldots,z_k &: x_i^2=z_i^2=J^2=e,\; [x_i,J]=[z_i,J]=e,\notag\\
&\qquad  [x_i,x_j]=[z_i,z_j]=[x_i,z_j]=e, [x_i,z_i]=J \quad \forall i\neq j \big\rangle\;.\label{eq:pauli-efficient}
\end{align}

\begin{lemma}\label{lem:eff-pauli}
  Let $\mC$ be the class of tracial von Neumann algebras. Let $\mu_R$ be the following sampling procedure: With equal probability do one of the following
  \begin{itemize}
      \item Choose $J^2=e$.
      \item Sample $i\in [k]$ uniformly and choose $x_i^2=e$ (respectively, $z_i^2=e$).
      \item Sample $i\in [k]$ uniformly and choose $[x_i,z_i]=J$.
      \item Sample $i\neq j\in [k]$ uniformly at random and choose $[x_i,x_j]=e$ (respectively, $[z_i,z_j]=e$ or $[x_i,z_j]=e$).
  \end{itemize}
  Let $\mu_S$ be the marginal of $\mu_R$, as described in Remark~\ref{rk-mus}. Then 
 the presentation~\eqref{eq:pauli-efficient} is $(O(k\eps),\mu_S,\mu_R,\mC)$-stable.
\end{lemma}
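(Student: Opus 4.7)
The plan is to reduce the statement to Lemma~\ref{lem:eff-z2} applied separately to the families $\{x_i\}$ and $\{z_j\}$, and then combine the resulting exact representations using Theorem~\ref{thm:dls-gap} to enforce the Pauli braiding relations.

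\textbf{Step 1 (Separate stabilization).} Under $\mu_R$, the subset of relations involving only the $x$-generators (namely $x_i^2 = e$ and $[x_i,x_j]=e$) carries a constant fraction of the total mass, and within that the conditional weights on involutions versus commutations match those of Lemma~\ref{lem:eff-z2} up to a universal constant. Hence the restriction of $\phi$ to $\mF(x_1,\ldots,x_k)$ is, after renormalization, an $O(\eps)$-almost homomorphism of the presentation~\eqref{eq:z2-efficient} of $\Z_2^k$. Applying Lemma~\ref{lem:eff-z2} produces an exact representation $\psi_X : \Z_2^k \to \mU(\mM)$ on the same algebra with $\Es{i}\|\phi(x_i) - U_i\|_\tau^2 \leq O(k\eps)$, where $U_i := \psi_X(e_i)$. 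The symmetric argument applied to the $z$-generators yields $\psi_Z$ and $V_j := \psi_Z(e_j)$ with $\Es{j}\|\phi(z_j) - V_j\|_\tau^2 \leq O(k\eps)$. Within each family the $U_i$ (resp.\ $V_j$) are commuting involutions.

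\textbf{Step 2 (Extracting the central sign).} Let $W := \phi(J)$. The relation $J^2=e$ (weight $1/4$ in $\mu_R$) gives $\|W^2-I\|_\tau^2 \leq O(\eps)$, so $W$ is approximately a self-adjoint unitary. The relations $[x_i,z_i]=J$ and $[x_i,z_j]=e$ (for $i\neq j$) directly give $\Es{i}\|\phi([x_i,z_i]) - W\|_\tau^2 \leq O(\eps)$ and $\Es{i\neq j}\|\phi([x_i,z_j])-I\|_\tau^2 \leq O(\eps)$; substituting $\phi(x_i)\mapsto U_i$, $\phi(z_i)\mapsto V_i$ via Step~1 (which costs $O(k\eps)$ on average by the triangle inequality and the involutive property of $U_i$, $V_i$) yields
\[
\Es{i}\|U_iV_iU_iV_i - W\|_\tau^2 \leq O(k\eps) \quad\text{and}\quad \Es{i\neq j}\|U_iV_jU_iV_j - I\|_\tau^2 \leq O(k\eps).
\]
Using $U_i^2=V_j^2=I$ these rearrange to $\Es{i,j}\|U_iV_j - \gamma_{ij}V_jU_i\|_\tau^2 \leq O(k\eps)$, where $\gamma_{ii}$ is determined (approximately) by $W$ and $\gamma_{ij}=+1$ for $i\neq j$.

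\textbf{Step 3 (Spectral decomposition and combining).} One deduces from Step~2 that $W$ approximately commutes with each $U_i$ and $V_j$; together with $W^2 \approx I$ this allows us (after enlarging $\mM$ to $\mM_\infty$ if needed and applying Lemma~\ref{lem:pull-back} to orthonormalize the compressed families) to decompose the identity as $P_+ + P_-$ with $WP_\pm \approx \pm P_\pm$. On the corner $\mN_- := P_-\mM_\infty P_-$, the compressed $U_i^-, V_j^-$ are commuting involutions within each family, and the residual braiding takes the Pauli form $U_iV_j \approx (-1)^{\delta_{ij}}V_jU_i$ on average. Applying Theorem~\ref{thm:dls-gap} with the measure $\mu$ uniform on the basis vectors of $\Z_2^k$ produces exact representations $\hat U_X, \hat U_Z : \Z_2^k \to \mU(\mN_-')$ with $\hat U_X(a)\hat U_Z(b) = (-1)^{a\cdot b}\hat U_Z(b)\hat U_X(a)$ and close to $U_i^-, V_j^-$ on the generators. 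On the $P_+$ corner the analogous argument with trivial cocycle yields a representation of the abelian quotient of $\pauli_k$ where $J\mapsto I$. Stitching the two pieces together (with $J\mapsto +I$ on the first summand and $J\mapsto -I$ on the second) produces a unitary representation of $\pauli_k$ close to $\phi$ on the distinguished generating set $\mu_S$.

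\textbf{Main obstacle.} The principal technical difficulty is the spectral decomposition together with the orthonormalization of the compressed families $U_i^\pm, V_j^\pm$ into exact involutions on each corner; Lemma~\ref{lem:pull-back} is the main tool but the error accounting is delicate. A related subtlety is tracking constants: the crude application of Theorem~\ref{thm:dls-gap} with inverse spectral gap $\kappa = k/2$ yields modulus $O(\kappa^2\cdot k\eps)=O(k^3\eps)$, so obtaining the sharper $O(k\eps)$ claimed in the lemma requires either invoking the theorem with a measure of broader support (trading spectral gap for an increase in the braiding error via the telescoping identity) or exploiting more structure from Step~1 to absorb the spectral-gap loss.
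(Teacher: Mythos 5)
Your proposal takes essentially the same route the paper intends: separately stabilize the $x$-family and $z$-family via Lemma~\ref{lem:eff-z2} (each at cost $O(k\eps)$), then combine the two resulting exact homomorphisms using Theorem~\ref{thm:dls-gap}. The paper's proof is only a one-line pointer to those two results together with the sketch in Corollary~\ref{cor:Pauli-brading_is_stable}, so your plan aligns with it. Two remarks beyond that.

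First, your Step~3, decomposing by the spectrum of $W=\phi(J)$, is a genuine and necessary addition that the paper's terse proof glosses over: the presentation~\eqref{eq:pauli-efficient} and the distribution $\mu_R$ do not force $W\approx -\Id$, and an almost-homomorphism could sit entirely in the abelianization of $\pauli_k$ where $J\mapsto +\Id$. Handling the $P_+$ and $P_-$ corners separately, with the trivial cocycle on $P_+$ and the full braiding on $P_-$, is the right way to account for this. (One further bookkeeping item you should make explicit: the relations $[x_i,J]=e$, $[z_i,J]=e$ appear in the presentation but are \emph{not} in the support of $\mu_R$; you should derive the approximate centrality of $W$ from the other relations, e.g.\ $\phi(J)\approx\phi([x_j,z_j])$ together with $[x_i,x_j]\approx e$, $[x_i,z_j]\approx e$ for $j\neq i$, before the spectral decomposition step is legitimate.)

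Second, your "main obstacle" paragraph is exactly the right worry, and it is not resolved by the paper either. After Lemma~\ref{lem:eff-z2} you have $\Es{i}\|\phi(x_i)-U_i\|_\tau^2 \leq O(k\eps)$ and similarly for $V_j$, and substituting into the (anti)commutation data yields a braiding hypothesis for Theorem~\ref{thm:dls-gap} with error $O(k\eps)$ on average over basis vectors $(e_i,e_j)$. With $\mu$ the uniform measure on basis vectors one has $\kappa=\Theta(k)$, so Theorem~\ref{thm:dls-gap} gives $\delta=O(\kappa^2\cdot k\eps)=O(k^3\eps)$. Your proposed escape route of broadening the support of $\mu$ does not obviously help: taking $\mu$ uniform on all of $\Z_2^k$ gives $\kappa=1$, but the braiding error for random $a,b\in\Z_2^k$, obtained by telescoping the exact $\psi_X(a)\psi_Z(b)$ past $\Theta(k^2)$ pairwise commutation steps, is even worse ($O(k^3\eps)$ or more by Cauchy--Schwarz). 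So the honest conclusion of your argument (and of the paper's cited ingredients, as stated) is a modulus $O(k^3\eps)$, not the $O(k\eps)$ claimed in the lemma. Either the paper's constant is imprecise here, or there is a sharper combination of Lemma~\ref{lem:eff-z2} and Theorem~\ref{thm:dls-gap} that the one-line proof does not spell out. Flagging this discrepancy, as you do, is the correct thing to do; I would not regard it as a gap in your proposal so much as a gap in the paper's stated bound.
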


\begin{proof}
This follows from~\cite{chao2017overlapping} and Theorem~\ref{thm:dls-gap}, in a similar manner to our proof sketch of Corollary \ref{cor:Pauli-brading_is_stable}.
\end{proof}

\section{Presentations from codes}
\label{sec:pres-codes}

In this section we lay the groundwork for our main result, an ``efficient'' stable presentation of $\Z_2^k$ that is presented in the next section. Most of the technical work required is done in~\cite{ji2020mip}. In this section we introduce the language required to reformulate their result in the framework of this paper. 

As a first step, we introduce a general method for translating any binary linear error-correcting code into a presentation of $\Z_2^k$. Later we apply this method to the specific case of the Reed-Muller code (composed with the Hadamard code to obtain a binary code). However, the general method may be of independent interest. 

\subsection{The general construction}

\label{sec:pres-code}

For $q$ a prime power we let $\F_q$ denote the finite field with $q$ elements. 
For $n,k,d$ integer, an $[n,k,d]_q$ linear code $\code$ is a $k$-dimensional subspace of $\F_q^n$ such that for all $x\in \code$ such that $x\neq 0$, the Hamming weight $|x|$ (i.e.\ the number of nonzero coordinates of $x$) is at least $d$. The parameter $n$ is called the \emph{length} of the code, $k$ its \emph{dimension} and $d$ its \emph{distance}. A code can be specified by a \emph{parity check matrix} $h\in \F_q^{m\times n}$ such that $\code = \ker h$. 

For the remainder of this section we specialize the discussion to the case where $q=2$. 
We make the simple but key observation that a parity check matrix for a code of dimension $k$ implies a finite group presentation in the following way. Introduce $n$ generators $S=\{x_1,\ldots,x_n\}$. Each of the generators is required to be an involution: $x_i^2=e$. For each row $i\in \{1,\ldots,m\}$ of the parity check matrix $h$, introduce a relation 
\[ R_i\,:\; \prod_{1\leq j \leq n} x_j^{h_{ij}}=e\;, \]
that ``verifies'' the parity check associated with the $i$-th row of $h$. Finally, to guarantee that the choice of order in which the $x_j$ are multiplied in $R_i$ will not affect the resulting group, whenever $j\neq j'$ are such that $h_{ij}$ and $h_{ij'}$ are both nonzero we require that $x_j$ and $x_{j'}$ commute. This can be written succinctly using a relation 
\[ R'_{ijj'}\,:\; [x_j,x_{j'}]^{h_{ij} h_{ij'}}=e\;.\]
The presentation obtained in this way defines a group $G=G(h)$, already introduced in~\eqref{eq:gh-intro} and which with the present notation reads
\begin{equation}\label{eq:def-gh-pres}
 G(h) \,=\, \big\langle x_1,\ldots,x_n \,:\, x_j^2=e\,,\; R_i\,,\; R'_{ijj'}\,,\quad \forall 1\leq i\leq m,\, 1\leq j< j' \leq n\big\rangle\;.
\end{equation}
Note that we made the dependence of $G(h)$ on $h$ explicit. This is because in general, the group defined in this way may depend on $h$, and not only on $\code$. If however we further impose \emph{all} pairwise commutation relations, as in~\eqref{eq:wtgh-intro}, then we obtain the following. 

\begin{lemma}\label{lem:com-code}
Let $R''_{jj'}$ be the commutation relation $[x_j,x_{j'}]=e$. Then $\Z_2^k$ is isomorphic to 
\[ \Z_2^k \,\simeq\, \big\langle x_1,\ldots,x_n \,:\, x_j^2=e\,,\; R_i\,,\; R''_{jj'}\,,\quad \forall 1\leq i\leq m,\, 1\leq j< j' \leq n\big\rangle\;.\]
\end{lemma}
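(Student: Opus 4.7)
The plan is to recognize that the imposed relations make the presented group an elementary abelian $2$-group, hence a quotient of $\F_2^n$ by an $\F_2$-subspace, and then identify that subspace explicitly with the row span of $h$.

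First I would use the relations $R''_{jj'}$ together with the involution relations $x_j^2 = e$. The pairwise commutation relations make the group abelian; combined with the fact that every generator (and hence every element) has order dividing $2$, the group is an elementary abelian $2$-group. This allows me to switch to additive notation and identify the free group on $S = \{x_1,\ldots,x_n\}$ modulo the involution and commutation relations with the abelian group $\F_2^n$, where $x_j$ corresponds to the standard basis vector $e_j$.

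Next, under this identification, each relation $R_i : \prod_j x_j^{h_{ij}} = e$ becomes the additive relation $\sum_j h_{ij}\, e_j = 0$, i.e., it identifies the $i$-th row $h_i \in \F_2^n$ of $h$ with $0$. Therefore, the presented group is canonically isomorphic to the quotient
\[ \F_2^n \,/\, V_h\;,\qquad V_h \,=\, \text{rowspace}(h) \,\subseteq\, \F_2^n\;,\]
where $V_h$ is the $\F_2$-linear subspace spanned by the rows of $h$. This step is essentially formal — it just rewrites the presentation in additive form — and I do not expect any obstacle here.

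Finally, by the rank-nullity theorem over $\F_2$,
\[ \dim_{\F_2} V_h \,=\, \mathrm{rank}(h) \,=\, n - \dim_{\F_2}\ker h \,=\, n-k\;,\]
so the quotient $\F_2^n / V_h$ is an $\F_2$-vector space of dimension $n - (n-k) = k$, and therefore isomorphic to $\Z_2^k$ as an abstract group. The only mild care needed is to make sure the identification of the free-group presentation with $\F_2^n$ is consistent (e.g., that the order of multiplication in $R_i$ indeed does not matter once $R''_{jj'}$ is imposed, and that $R_i$ is well-defined in $\F_2^n$ regardless of multiplicities appearing through $h_{ij}\in\{0,1\}$); this is immediate. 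No step of the argument presents a substantive obstacle, since once all pairwise commutations are added the presentation is purely a linear-algebraic quotient, in contrast to $G(h)$ of~\eqref{eq:def-gh-pres}, whose structure genuinely depends on $h$.
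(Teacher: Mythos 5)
Your proof is correct and follows essentially the same route as the paper: note that the commutation and involution relations force an elementary abelian $2$-group, identify the presented group with the quotient of $\F_2^n$ by the subgroup (equivalently, $\F_2$-subspace) generated by the rows of $h$, and conclude via rank-nullity that the quotient has dimension $k=\dim\ker h$. The paper's version is just a more compressed statement of the same argument.
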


\begin{proof}
The group defined by the right-hand side is obviously abelian and a $2$-group, so it is of the form $\Z_2^{k'}$ for some $k'$. In fact, it is isomorphic to the quotient of $\Z_2^n$ by the subgroup generated by the $\prod_{1\leq j \leq n} x_j^{h_{ij}}$. So it is isomorphic to $\Z_2^k$ where $k=n-\dim\textrm{im}\ h = \dim\ker h=\dim \code$.  
\end{proof}

\begin{remark}\label{rk:non-abelian}
There exists matrices $h$ such that $G(h)$ is not $\Z_2^k$, and in fact is not Abelian. For an example, see~\cite[Example 2.16]{paddock2022arkhipov}. For that example, $n=12$, $k=7$, and the parity check matrix $h$ can be described explicitly as follows: the $7$ rows of $h$ are indexed by the vertices of the complete bipartite graph $K_{3,4}$, the $12$ columns are indexed by the $12$ edges of $K_{3,4}$, and the entry $(i,j)$ of $h$ is $1$ if and only if the edge $j$ is incident on vertex $i$. As shown in~\cite{paddock2022arkhipov}, $G(h)$ is not Abelian. By considering $K_{3,6}$ instead of $K_{3,4}$, one in addition obtains a non-Abelian \emph{infinite} group. Using similar arguments it is possible to construct $h$ such that $G(h)$ is not amenable, etc.; see the discussion in~\cite[Section 6]{paddock2022arkhipov}.
\end{remark}

For readability it is convenient to reformulate the parity check matrix as a \emph{tester} for the code. This allows us to give a more succinct, ``algorithmic'' definition of a parity check matrix for a given code. Informally, the tester takes as input a word $w\in \F_2^n$ and determines if $w\in \code$ by selecting a parity check at random and evaluating it. 
 Specifically we give the following definition. (For the sake of later use, we state the definition for the case of a general prime power $q$.)

\begin{definition}[$r$-local linear tester]\label{def:code-test}
Let $\code$ be an $[n,k,d]_q$ linear code and $r\in \N$.
An \emph{$r$-local linear tester for $\code$} is a pair $M = (h,\nu)$ where $h \in \F_q^{m \times n}$ is a parity check matrix for $\code$, whose every row has Hamming weight at most $r$, and $\nu$ is a distribution over $\{1,\ldots,m\}$. 
\end{definition}

An $r$-local linear tester $M=(h,\nu)$ for $\code$ induces a pair of distributions $(\nu_R,\nu_S)$ on the relations and generators of the presentation $G(h)$~\eqref{eq:def-gh-pres} in a natural way: For the generators, we let $\nu_S$ be induced from $\nu$ by first sampling $j\sim \nu$ and then a uniformly random $i$ such that $h_{ji}\neq 0$. For the relations, we let $\nu_R$ be the uniform mixture of the distribution $\nu_S$ on relations $x_j^2=e$, the distribution $\nu$ on relations $R_i$, and the distribution $\nu\times \nu_S\times \nu_S$ on relations $R'_{jii'}$.

We end this section with an example, the \emph{Hadamard code}. This code can be defined for any  $t\geq 1$ and it is a $[T,t,T/2]_2$ linear code, where $T=2^t$. For simplicity we write  $\code_\had$ to denote this code, omitting $t$. The Hadamard code is the subspace of linear functionals from $\field^t$ to $\field$ out of all such functions. As a linear space, $\code_\had$ can be described as  $(a\cdot b)_{a\in \field^t} \in \field^{\field^t}\cong \field^T$, for all $b\in \field^t$, where $ a\cdot b=\sum_{i=1}^t a_ib_i$ is again the dot product modulo $2$. 

A parity check matrix for $\code_\had$ is the matrix $h_\had\in \F_2^{T^2\times T}$ defined as follows. Identify the rows of $h_\had$ with pairs $(x,y)\in \F_2^t\times \F_2^t$, and the columns of $h_\had$ with $\F_2^t$. Then the $(x,y)$-th row of $h_\had$ has nonzero entries at positions $x,y$ and $x+y$ only. 
The corresponding $3$-local linear tester is $M_\had = (h_\had,\nu)$ where $\nu$ is the uniform distribution over $\F_2^t \times \F_2^t$. This tester can be described algorithmically, see Figure~\ref{fig:test-had}. 

\begin{figure}[!htbp]
  \centering
  \begin{gamespec}
	Given access to some $g\in \F_2^T$, where $T=2^t$, identify $g$ with a function $g:\F_2^t\to\F_2$. Perform the following. 
\begin{enumerate}
\item Select $(x,y)\in \F_2^t \times \F_2^t$ uniformly at random. 
\item Accept if and only if $g(x)+g(y)+g(x+y)=0$.  	
    \end{enumerate}
  \end{gamespec}
  \caption{A $3$-local linear tester for $\code_{\had}$}
  \label{fig:test-had}
\end{figure}

\begin{remark}
Since each pair of coordinates $(x,y)$ appears together in at least one parity check, we can apply Lemma~\ref{lem:com-code} to deduce that $G(h_\had)=\Z_2^t$. 
\end{remark}

We state our first stability result for a code-based presentation, the presentation $G(h_\had)$ defined as~\eqref{eq:def-gh-pres} where $h_\had$ is defined above. To state the result we need to specify distributions $\mu_S$ and $\mu_R$. We let $\mu_S$ be uniform over the $2^t$ generators, and $\mu_R$ the uniform distribution over the relations $R_{i}$. (Here, there is no need to place any weight on the relations $x_i^2=e$, or on the commutation relations $R'_{ijj'}$, because it can be seen that they follow from the other relations.)  

\begin{lemma}\label{lem:had-stab}
Let $\mC$ be the class of all tracial von Neumann algebras. 
The presentation $\Z_2^t= G(h_\had)$, together with the distributions $\mu_S$ and $\mu_R$ defined above, is $(\delta,\mu_S,\mu_R,\mC)$ stable with $\delta(\eps)=O(\eps)$. 
\end{lemma}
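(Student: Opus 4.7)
The plan is to reduce the statement to Corollary~\ref{cor:lin-test}, the operator-valued Blum--Luby--Rubinfeld linearity test for $\Z_2^t$. Given an $(\eps,\mu_R)$-almost homomorphism $\phi:\mF(S)\to\mU(\mM)$, write $U_a := \phi(x_a)$ for $a \in \F_2^t$. The task is to verify that the family $\{U_a\}$ satisfies the hypothesis of Corollary~\ref{cor:lin-test}, namely
\[
\Es{x, y \in \F_2^t} \| U_x U_y - U_{x+y}\|_\tau^2 \,\leq\, O(\eps),
\]
after which the corollary directly produces the nearby representation.

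The parity-check relations $R_{(x,y)} = x_x x_y x_{x+y}$ evaluated under the uniform measure $\mu_R$ give, by the definition of an almost homomorphism, $\Es{x, y}\|U_x U_y U_{x+y} - \Id\|_\tau^2 \leq \eps$. Since the $U_a$ are unitary, this rearranges to $\Es{x,y}\|U_x U_y - U_{x+y}^*\|_\tau^2 \leq \eps$, i.e., $U_x U_y \approx U_{x+y}^{-1}$ on average. To turn $U_{x+y}^{-1}$ into $U_{x+y}$ we need approximate self-adjointness on average, $\Es{a}\|U_a - U_a^*\|_\tau^2 = O(\eps)$, which is equivalent to $\Es{a}\|U_a^2 - \Id\|_\tau^2 = O(\eps)$. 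This is precisely the content of the involution relations $x_a^2 = e$, and the setup of the lemma asserts that these (and the commutation relations) follow from the parity checks, so they need not be tested explicitly in $\mu_R$; concretely, they arise by specializing $R_{(x,x)}$ (giving $U_x^2 U_0 \approx \Id$) and averaging against the degenerate parity-check rows. Granting the approximate self-adjointness, the triangle inequality in the $\tau$-norm gives
\[
\Es{x,y}\|U_x U_y - U_{x+y}\|_\tau^2 \leq 2\Es{x,y}\|U_x U_y - U_{x+y}^*\|_\tau^2 + 2\Es{a}\|U_a^* - U_a\|_\tau^2 = O(\eps).
\]

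Applying Corollary~\ref{cor:lin-test} to the map $a \mapsto U_a$ produces a projective measurement $\{P_u\}_{u \in \F_2^t}$ on $\mM$ such that $\Es{a}\|U_a - \sum_u (-1)^{u\cdot a} P_u\|_\tau^2 = O(\eps)$. Setting $\psi:\Z_2^t \to \mU(\mM)$ by $\psi(a) = \sum_u (-1)^{u\cdot a} P_u$, we obtain a genuine unitary representation of $\Z_2^t$ (multiplicativity and $\psi(a)^2 = \Id$ follow automatically from orthogonality of the $P_u$). Hence $\phi$ is $(O(\eps),\mu_S)$-close to $\psi$ in the sense of Definition~\ref{def:close}, taking $\mN = \mM$, $P = \Id_\mM$, and $w = \Id_\mM$. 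The only substantive step is the derivation of approximate self-adjointness of the $U_a$ from $\mu_R$; the remainder is a routine invocation of Corollary~\ref{cor:lin-test} together with the triangle inequality in the tracial norm.
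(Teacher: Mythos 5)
Your plan---reduce to Corollary~\ref{cor:lin-test}, the operator-valued linearity test---is in spirit the same as the paper's, which directly invokes Theorem~\ref{thm:gh} by identifying $G(h_\had)$ with the multiplication table presentation of $\Z_2^t$; Corollary~\ref{cor:lin-test} is itself deduced from Theorem~\ref{thm:gh}, so you are taking a slightly longer route to the same place. You have, however, isolated the genuine subtlety in this identification: the parity-check relations read $x_x x_y x_{x+y}$, whereas the multiplication-table relations read $x_x x_y x_{x+y}^{-1}$, and these are different words in $\mF(S)$. Bridging them requires knowing that $\phi(x_a)$ is approximately an involution.

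Your attempt to derive this from $\mu_R$ is where the proof breaks. You claim that specializing $R_{(x,x)}$ gives $U_x^2 U_0 \approx \Id$, but that is not what $R_{(x,x)}$ is. By the definition in~\eqref{eq:def-gh-pres}, $R_{(x,y)} = \prod_j x_j^{h_{(x,y),j}}$, and the $(x,x)$-th row of $h_\had$ enforces $g(x)+g(x)+g(0)=g(0)=0$ over $\F_2$, so $h_{(x,x)} = e_0$ and $R_{(x,x)} = x_0$, not $x_x^2 x_0$. The same collapse happens for every degenerate pair: rows with $|\{x,y,x+y\}|<3$ all reduce to the relation $x_0 = e$. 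Moreover these degenerate rows number only $3q-2$ out of $q^2$ (with $q=2^t$), i.e.\ a $\Theta(1/q)$ fraction, so conditioning the almost-homomorphism hypothesis on them yields at best $\|\phi(x_0)-\Id\|_\tau^2 = O(q\eps)$, which is not the $O(\eps)$ bound you need and grows with the group. Indeed the constant map $\phi(x_a) = \omega\Id$, $\omega = e^{2\pi i/3}$, satisfies every non-degenerate relation exactly and is thus an $O(1/q)$-almost homomorphism under the stated $\mu_R$, yet it is $\Omega(1)$-far from any genuine representation of $\Z_2^t$; this shows the self-adjointness step cannot be extracted from $\mu_R$ alone with a $q$-independent constant. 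The lemma is, in practice, only ever applied (in Claim~\ref{claim:z2-stab-2}) to a $\phi$ that has already been made to land in Hermitian involutions by Claim~\ref{claim:z2-stab-1}, so the distinction $U_{a+b}$ vs.\ $U_{a+b}^{-1}$ is moot there. To make your write-up rigorous you should either add that standing hypothesis (that $\phi$ sends generators to involutions), or have $\mu_R$ place $\Omega(1)$ total weight on the $x_a^2=e$ relations so that approximate self-adjointness is available with the right constant.
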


\begin{proof}
This an immediate consequence of Theorem~\ref{thm:gh}, because $G(h_\had)$ is the multiplication table presentation for $\Z_2^t$.
\end{proof}

\subsection{The Reed-Muller code over $\F_q$}
\label{sec:rmq}

We introduce a family of codes that will lead to interesting presentations $G(h)$, whose stability we are able to analyze.
Fix integers $m,t \in \N$ and let $q=2^t$ and $M = 2^m$. \tnote{added:}Let $1\leq d < q$. Let $\mP(q,m,d)$ be the vector space over $\F_q$ that consists of all $m$-variate polynomials $f$ over $\F_q$ of individual degree at most $d$, that is all functions of the form
\[
	f(x_1,\ldots,x_m) = \sum_{\alpha \in \{0,1,\ldots,d\}^m} c_\alpha\,
  x_1^{\alpha_1} \cdots x_m^{\alpha_m}\;,
\]
where $\{c_\alpha\}$ is a collection of coefficients in $\F_q$. It is easy to verify that $\mP(q,m,d)$ has dimension $k = (d+1)^m$ over $\F_q$. It follows that the linear span of all $(f(x))_{x\in \F_q^m}$, when ranging over all possible $\{c_\alpha\}$, defines a $[q^m,(d+1)^m,D]_q$ linear code over $\F_q$, where $D\geq (1-md/q)q^m$ follows from the Schwartz-Zippel lemma.

\begin{lemma}[Schwartz-Zippel lemma~\cite{Sch80,Zip79}]
  \label{lem:schwartz-zippel}
  Let $f, g: \F_q^m \to \F_q$ be two unequal polynomials with total degree at most $d$. Then
  \begin{equation*}
    \Pr_{x \sim \F_q^m}\big(f(x) = g(x)\big) \leq \frac{d}{q}\;.
  \end{equation*}
\end{lemma}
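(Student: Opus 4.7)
The plan is to prove Schwartz--Zippel by induction on the number of variables $m$, after first reducing to the case of showing that a single nonzero polynomial $h$ of total degree at most $d$ satisfies $\Pr_{x \sim \F_q^m}(h(x)=0) \leq d/q$. This reduction is immediate by taking $h = f-g$, which is nonzero by hypothesis and has total degree at most $d$ since degree is subadditive under subtraction.

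For the base case $m=1$, a nonzero univariate polynomial over the field $\F_q$ of degree at most $d$ has at most $d$ roots (a standard consequence of the division algorithm), so the probability that a uniformly random element of $\F_q$ is a root is at most $d/q$.

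For the inductive step, I would write
\[
 h(x_1,\ldots,x_m) \,=\, \sum_{i=0}^{k} x_m^{i}\, h_i(x_1,\ldots,x_{m-1}),
\]
where $k$ is the largest power of $x_m$ appearing in $h$, so that the ``leading coefficient'' $h_k$ is a nonzero polynomial in $m-1$ variables. Because $h$ has total degree at most $d$ and $h_k$ is multiplied by $x_m^k$, the total degree of $h_k$ is at most $d-k$. By the inductive hypothesis applied to $h_k$, the probability that a uniformly random $(x_1,\ldots,x_{m-1}) \in \F_q^{m-1}$ makes $h_k$ vanish is at most $(d-k)/q$. Conditioned on $h_k(x_1,\ldots,x_{m-1}) \neq 0$, the remaining object $h(x_1,\ldots,x_{m-1},\cdot)$ is a nonzero univariate polynomial in $x_m$ of degree exactly $k$, so by the base case the probability that a uniform $x_m$ is a root is at most $k/q$. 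A union bound combining the two cases gives the desired bound $(d-k)/q + k/q = d/q$.

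The proof is entirely standard and I do not expect any genuine obstacle; the only mildly subtle point is making sure to account for the correlation between the events ``$h_k$ vanishes'' and ``$h$ vanishes as a polynomial in $x_m$,'' which is handled cleanly by conditioning on $h_k \neq 0$ before invoking the univariate bound on the inner polynomial. I would also note in passing that the statement as used in the paper concerns \emph{total} degree, even though the codes $\mP(q,m,d)$ in the preceding paragraph are defined using \emph{individual} degree; in the application what matters is that a polynomial of individual degree $d$ in $m$ variables has total degree at most $md$, which is precisely why the distance estimate $D \geq (1-md/q)q^m$ for the code carries a factor of $m$.
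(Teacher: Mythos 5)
The paper does not prove this lemma; it is stated as a known result with citations to the original sources, so there is no in-paper proof to compare against. Your argument is the standard inductive proof of Schwartz--Zippel and is correct, including the slightly delicate point of conditioning on the leading coefficient $h_k$ being nonzero before invoking the univariate bound, and your closing observation --- that $\mP(q,m,d)$ is defined by individual degree $d$, so its members have total degree at most $md$, which is the source of the factor $m$ in the distance bound $D\geq(1-md/q)q^m$ --- correctly explains how the lemma is being applied in the surrounding text.
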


The resulting code is called the \emph{Reed-Muller code} $\code_\RM$ with parameters $q,m,d$. For $m=1$, one obtains the \emph{Reed-Solomon code} $\code_\RS$. A useful feature is that $\code_\RM$ can be seen as the $m$-fold tensor product of $\code_\RS$, i.e.\ $\code_\RM = \code_\RS^{\otimes m}$ as vector spaces over $\F_q$. 

We define a local linear tester $M_{\RM}$ for the code $\code_\RM$ over $\F_q$. The tester is described as an algorithmic procedure in Figure~\ref{fig:RM-tester}. 
The description makes use of interpolation coefficients, which are defined as follows. Fix $d+1$ distinct values $t_0,\ldots,t_d \in \F_q$. \tnote{adding:}(These are fixed once and for all, and their choice does not impact any subsequent statement.) Then for all $u,v \in \F_q$ and $i \in \{0,\ldots,d\}$ define the interpolation coefficients
 \begin{equation}\label{eq:interp-coeff}
 \alpha_{u,v,i} = \prod_{\substack{i'=0\\i'\neq i}}^{d}  \frac{v - (u + t_{i'})}{t_i - t_{i'}}~.
 \end{equation}
These are defined so that any polynomial $f:\F_q\to\F_q$ of degree at most $d$ satisfies
that for all $v \in \F_q$, 
\[ f(v)\,=\, \sum_{i=0}^{d} \alpha_{u,v,i} \, f(u+t_i)\;.\]

The tester verifies this relation along a randomly chosen \emph{axis-aligned direction}.  For all points $u \in \F_q^m$ and $j \in \{1,\ldots,m\}$, we say that the line through $u$ parallel to the $j$-th axis is the set of points $\{ u + te_j : t \in \F_q \}$ where $e_j=(0,\ldots,0,1,0,\ldots,0)\in \F_q^m$, where the unique $1$ is in the $j$-th position.

\begin{figure}[!htbp]
  \centering
  \begin{gamespec}
Given access to some $g\in \F_q^n$, where $n=q^m$, identify $g$ with a function $g:\F_q^m\to \F_q$. Perform the following.
\begin{enumerate}
	\item Sample	$u\in \F_q^m$ and $j\in \{1,\ldots,m\}$ uniformly at random. Let $v$ be a uniformly random point on the line through $u$ parallel to the $j$-th axis.
	\item 
	Accept if and only if $g(v) = \sum_{i=0}^{d} \alpha_{u,v,i} g(u+t_i e_j)$. 
    \end{enumerate}
  \end{gamespec}
  \caption{A local test for $\code_{\RM}$}
  \label{fig:RM-tester}
\end{figure}

A parity check matrix $h_{\RM} \in \F_q^{S \times q^m}$ for $\code_{\RM}$, where $S = q^m \times m \times q$, is as follows. Identify the rows with triples $(u,j,t) \in \F_q^m \times \{1,\ldots,m\} \times \F_q$. The $(u,j,t)$-th row of $h_{\RM}$ is the vector in $\F_q^m$ that for $i \in \{0,\ldots,d\}$ has the value $\alpha_{u,v,i}$ for $v = u + t e_j$ in the coordinate indexed by $u + t_i e_j$, the value $-1$ in the coordinate indexed by $u + te_j$, and the value $0$ everywhere else. 

Rubinfeld and Sudan~\cite{rubinfeld1996robust} (building on the work of Babai, Fortnow, and Lund~\cite{babai1991non}) showed that $h_{\RM}$ is indeed a parity check matrix for $\code_{\RM}$. Furthermore, they showed that the parity check matrix gives rise to a $(d+2)$-local tester for $\code_{\RM}$, whose soundness $\rho$ (as defined in the introduction) is at least $\frac{1}{6}$. 



\subsection{Code composition}
\label{sec:code-comp}

 The Reed-Muller code from the previous section is defined over $\F_q$, for $q=2^t$ a power of $2$. We can transform any $q$-ary code, for $q=2^t$, into a binary code using the idea of \emph{code composition} which we now describe. 




\tnote{edited next two paragraph following referee comments:}
Let $q=2^t$ and $\code$ be an $[n,k,d]_q$ linear code. We define an $[qn,tk,d']_2$ linear code as follows. First introduce the $\F_2$-linear map $\varphi:\F_q\to\F_2^q$ defined by $\varphi(x)=(\tr(xa))_{a\in \F_q}$. Here  $\tr(\cdot):\F_q\to\F_2$ denotes the trace over $\F_2$, wich is defined by $\tr(x) = \sum_{j=0}^{t-1} x^{2^i}$ for $x\in \F_q$. The code $\code'$ is defined as the linear span, over $\F_2$, of all $(\varphi(x_1),\ldots,\varphi(x_n))$ for $(x_1,\ldots,x_n)\in \code$. It is apparent from the definition that this code has dimension $tk$, the dimension of $\code$ over $\F_2$, and length $qn$. Furthermore, it is not hard to verify that it has distance $d'\geq dq/2$.

The reason that this construction is referred to as ``composition'' is the following. Observe that the linear span, over $\F_2$, of all $\varphi(x)$, $x\in \F_q$, is (up to reordering of the indices) identical to the Hadamard code $\code_\Had$ over $\F_2^t$ introduced at the end of Section~\ref{sec:pres-code}. Thus each $\varphi(x)$ can be interpreted as an ``encoding'' of the element $x\in \F_q$ (when seen as an element of $\F_2^t$ by fixing a basis); and  $(\varphi(x_1),\ldots,\varphi(x_n))$ is the successive encoding of an element $m\in\F_2^k$ through $\code$ (to obtain $(x_1,\ldots,x_n))$ and then $\code_\Had$.

%

Given an $r$-local tester $M=(h,\nu)$ for $\code$, there is a natural $\max(r,3)$-local tester $M'$ for $\code'$ which can be described as follows. Index coordinates of $\code'$ by pairs $(i,\alpha)\in [n]\times\F_2^t$, fixing a bijection between $[qn]$ and $[n]\times \F_2^t$.  We describe an $\max(r,3)$-local tester $M' = (h',\nu')$ for $\code'$. Informally, $h'$ contains two type of checks. First, the $A$-checks consist of the repetition of $n$ copies of the checks for the Hadamard code, one for each Hadamard-code encoding $\varphi(x)$ of an $\F_q$-symbol $x$ from $\code$. Second, the $B$-checks implement the checks of $\code$ specified by $M$, directly on the Hadamard encoding.

More precisely, define $h'$ to be the block matrix $h'=\begin{pmatrix} A \\ B \end{pmatrix}$ where 
\begin{itemize}
	\item $A \in \F_2^{nq^2 \times nq}$ is itself a block-diagonal matrix where the diagonal blocks are the $q^2 \times q$ parity check matrix for the Hadamard code. 
	In other words, $A$ can be viewed as $I_{n\times n} \otimes h_\had$ where $h_\had \in \F_2^{q^2 \times q}$ is the parity check matrix for the Hadamard code. 
	\item $B \in \F_2^{\ell q \times nq}$, where $\ell$ is the number of rows of $h$, is viewed as having rows indexed by pairs $(p,\gamma) \in \{1,\ldots,\ell\} \times \F_q$ and columns indexed by pairs $(i,x) \in \{1,\ldots,n\} \times \F_2^t$. The entry in row $(p,\gamma)$ and column $(i,x)$ is $1$ if and only if $h_{pi} \neq 0$ and $x = \kappa(\gamma h_{pi})$.
\end{itemize}
Define the distribution $\nu'$ as the uniform mixture of the uniform distribution on the rows of the $A$ block matrix and the uniform distribution on the rows of the $B$ block matrix. 

\begin{claim}
$h'$ is a parity check matrix for $\code'$.
\end{claim}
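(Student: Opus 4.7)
The plan is to show both inclusions $\ker h' \supseteq \code'$ and $\ker h' \subseteq \code'$ by analyzing the $A$-block and $B$-block separately, with the key identity being $\tr(\alpha\beta) = \kappa(\alpha)\cdot\kappa(\beta)$ for $\alpha,\beta \in \F_q$. Write a word $b \in \F_2^{qn}$ in the form $b=(b_i)_{1\le i\le n}$ with each $b_i\in\F_2^q$ indexed by $x\in\F_2^t$, and recall that a codeword of $\code'$ has the form $b_i=\code_\Had(\kappa(b'_i))$ for some $b'_i\in\F_q$ such that $b'=(b'_1,\dots,b'_n)\in\code$.

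The first step is to observe that the $A$-block, being block-diagonal with $n$ copies of $h_\Had$, enforces exactly that each $b_i$ lies in $\code_\Had$. Since each Hadamard codeword is uniquely determined by a message $y_i\in\F_2^t$ via $b_{i,x}=x\cdot y_i$, the $A$-checks are satisfied iff there exist $y_1,\dots,y_n\in\F_2^t$ with $b_{i,x}=x\cdot y_i$ for all $i,x$. Setting $b'_i=\kappa^{-1}(y_i)\in\F_q$, this is the same as saying $b$ arises as the Hadamard encoding (componentwise) of some vector $b'\in\F_q^n$.

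The second step is to interpret the $B$-checks on such a vector. The row of $B$ indexed by $(p,\gamma)$ enforces
\[
\sum_{i:\,h_{pi}\neq 0} b_{i,\kappa(\gamma h_{pi})}\;=\;0\pmod 2.
\]
Using $b_{i,x}=x\cdot\kappa(b'_i)$ together with the identity $\kappa(\alpha)\cdot\kappa(\beta)=\tr(\alpha\beta)$ with $\alpha=\gamma h_{pi}$ and $\beta=b'_i$, this sum becomes
\[
\sum_{i:\,h_{pi}\neq 0}\tr\big(\gamma\,h_{pi}\,b'_i\big)\;=\;\tr\Big(\gamma\sum_{i}h_{pi}\,b'_i\Big),
\]
where in the last expression the $i$-sum ranges over all of $\{1,\dots,n\}$ since the $h_{pi}=0$ terms contribute trivially. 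Thus for a fixed $p$, the family of $B$-checks $\{(p,\gamma)\}_{\gamma\in\F_q}$ is satisfied iff $\tr(\gamma\cdot r_p)=0$ for every $\gamma\in\F_q$, where $r_p=\sum_i h_{pi}b'_i\in\F_q$. By non-degeneracy of the trace bilinear form $\F_q\times\F_q\to\F_2$, this holds iff $r_p=0$, i.e.\ iff the $p$-th parity check of $h$ is satisfied by $b'$.

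Combining the two steps: $b\in\ker h'$ iff each $b_i$ is a Hadamard encoding of some $b'_i\in\F_q$ (by the $A$-checks) and $hb'=0$ (by the $B$-checks), which is precisely the definition of $\code'$. The only mild subtlety in the proof is remembering that the entries of $h$ lie in $\F_q$, not $\F_2$, so the $B$-checks must be parameterized by $\gamma\in\F_q$ in order to recover every $\F_q$-linear consequence of $hb'=0$ as an $\F_2$-linear check; this is what the factor of $q$ in the number of $B$-rows is for, and what the non-degeneracy of the trace delivers.
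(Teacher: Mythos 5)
Your proof is correct and follows essentially the same route as the paper's: decode each Hadamard block to an $\F_q$-symbol via the $A$-checks, then use the self-dual-basis identity $\kappa(\alpha)\cdot\kappa(\beta)=\tr(\alpha\beta)$ and linearity plus non-degeneracy of the trace to show the $B$-checks over all $\gamma\in\F_q$ are equivalent to the $\F_q$-parity checks of $h$. The only difference is that you spell out both inclusions and the role of the $\gamma$-parameterization a bit more explicitly, but the argument is the same.
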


\begin{proof}
Let $x\in \F_2^{qn}$ be such that $h'x=0$. Since the $A$-checks enforce that each block of $k$ symbols contains the Hadamard-code encoding of an $\F_q$ symbol, $x$ can be decoded to $x'\in \F_q^n$ such that for each $(i,x)\in \{1,\ldots,n\}\times \F_2^t$, $x_{(i,x)} = \kappa(x'_i)\cdot x$. If the $p$-th row of $h$ enforces the check $v_p\cdot x'=0$, where $v_p\in \F_q^n$, then 
the $(p,\gamma)$-th row of $B$ enforces the check 
\begin{align*}
0&=\sum_{j=1}^n \kappa(\gamma (v_p)_j) \cdot \kappa(x'_j)\\
&= \sum_{j=1}^n \tr( \gamma (v_p)_j x'_j) \\
&= \tr(\gamma (v_p\cdot  x'))\;.
\end{align*}
Therefore, $h'x=0$ is equivalent to $\tr(\gamma(v_p \cdot x'))=0$ for all $\gamma$, which is equivalent to $v_p\cdot x'=0$. Thus $\code'=\ker h'$, as desired. 
\end{proof}

\section{An efficient presentation for $\Z_2^k$}
\label{sec:eff-z2k}

Fix integers $m,t,d \in \N$ and let $q=2^t$. Let $\code_{\bRM}$ be the $[q^{m+1},t(d+1)^m,D']$ code obtained by applying the composition procedure from Section~\ref{sec:code-comp} to the $[q^m,(d+1)^m,D]_q$ Reed-Muller code $\code_\RM$ from Section~\ref{sec:rmq}. 

\tnote{moved this here from Section~\ref{sec:code-comp}:}Recall that $\tr(\cdot):\F_q\to\F_2$ denotes the trace over $\F_2$. We often identify elements of $\F_q$ with vectors in $\F_2^t$. To make this identification precise and convenient, we introduce the following notation. First, fix a  self-dual basis $\{e_1,\ldots,e_t\}$  of $\F_q$ over $\F_2$. (A self-dual basis $\{e_1,\ldots,e_t\}$ is one which satisfies $\tr(e_ie_j)=\delta_{ij}$ for all $1\leq i,j\leq t$.)
Now let $\kappa: \F_q \to \F_2^t$ denote the invertible linear map such that $\kappa(a)$ is the vector of coefficients of $a\in \F_q$ in the basis $\{e_1,\ldots,e_t\}$, i.e.\ $\kappa(a)_i=\tr(ae_i)$ for $i\in\{1,\ldots,t\}$.

\begin{figure}[!htbp]
  \centering
  \begin{gamespec}
Given access to some $g\in \F_2^{N}$, where $N=q^{m+1}$, identify $g$ with a function $g:(\F_2^t)^m \times \F_2^t \to \F_2$. Perform one of the following tests with probability~$\tfrac{1}{2}$ each. 
\begin{enumerate}
	\item \textbf{Low-degree test:}
		Let $u \in \F_q^m$ be a uniformly random point and $j\in \{1,\ldots,m\}$ chosen uniformly at random. Let $\ell$ be the line through $u$ in the $j$-th direction. Let $e_j=(0,\ldots,0,1,0,\ldots,0)\in \F_q^m$, where the unique $1$ is in the $j$-th position. Choose a uniformly random $v\in \ell$ and $\gamma\in \F_q$ and check that 
		\[\sum_{i=0}^d g(\kappa(u+t_i e_j),\kappa(\gamma \alpha_{u,v,i})) \,=\, g(\kappa(v),\kappa(\gamma))\;,\]
		where the interpolation points $t_0,\ldots,t_d \in \F_q$ and the $\alpha_{u,v,i}$ are defined in~\eqref{eq:interp-coeff} and the sentence that precedes it.
	\item \textbf{Hadamard test:} Let $u\sim\F_q^m$ be chosen uniformly at random and $\alpha,\beta\in \F_2^t$ chosen uniformly at random. Check that 
	\[g(\kappa(u),\alpha)+g(\kappa(u),\beta)\,=\,g(\kappa(u),\alpha+\beta)\;.\] 	
    \end{enumerate}
  \end{gamespec}
  \caption{A local test for $\code_{\bRM}$}
  \label{fig:bRM-tester}
\end{figure}

This notation is used in Figure~\ref{fig:bRM-tester}, in which we give an algorithmic description of the natural local tester associated with $\code_{\bRM}$. This tester is obtained by composing, following the template described in Section~\ref{sec:code-comp}, the local tester for the Reed-Muller code over $\F_q$ described in Figure~\ref{fig:RM-tester} with the local tester for the Hadamard code.
Let $(h_\bRM,\nu)$ be the tester that is implied by the figure. Then $h_\bRM\in \F_2^{M\times N}$, where $N=q^{m+1}$ and $M=q^m\cdot m \cdot q \cdot q +  q^{m}\cdot q \cdot q = (m+1)q^{m+2}$, and $\nu$ is the distribution on $[M]$ implied by the description. 



Let $G_\bRM = G(h_\bRM)$ be the group that is presented from $h_\bRM$ (recall from \Cref{sec:pres-code} that codes $\code_\bRM=\ker h_\bRM$ give rise to group presentations through the general construction~\eqref{eq:def-gh-pres}). We do not know if $G_\bRM = \Z_2^k$, with $k=t(d+1)^m$. Instead we modify the presentation $G(h_\bRM)$ by adding pairwise commutation relations, as in~\eqref{eq:wtgh-intro}. Let \tnote{Removed $\{R^\sq_k\}$ below following reviewer suggestion}
\[ G(h_\bRM) \,=\, \big\langle x_1,\ldots,x_N \;:\;  \{R^\ld_k\}\,,\; \{R^\had_{k}\} \big\rangle\;,\]
where  $R^\ld_k$ ranges over all relations implied by the ``low-degree test'' in Figure~\ref{fig:bRM-tester} and $R^\had_{k}$ ranges over all the relations implied by the ``Hadamard test.'' For $k=(i,j)\in\{1,\ldots,N\}^2$ such that $i<j$ let $R^\com_k$ be the relation $[x_i,x_j]=e$, where $[a,b]=aba^{-1}b^{-1}$ is the group commutator.
Then we define 
\begin{equation}\label{eq:z2k-eff}
\tilde{G}:=\widetilde{G(h_\bRM) } \,=\,\big\langle x_1,\ldots,x_N \;:\;  \{R^\ld_k\}\,,\; \{R^\had_{k}\}\, , \; \{ R^\com_k\}\big\rangle\;.
\end{equation}
From Lemma~\ref{lem:com-code} it follows that $\tilde{G}$ is isomorphic to $\Z_2^k$. Our main result is an efficient stability result for this presentation. To state this we need to introduce distributions $\mu_S$ and $\mu_R$ on the generators and relations of $\tilde{G}$. The distribution $\mu_R$ is obtained as follows. With probability $1/3$ each, a relation from $\{R^\ld_k\}$ or $\{R^\had_k\}$ is chosen uniformly at random. With probability $1/3$, a random commutation relation from $R^\com_k$ is chosen according to the uniform mixture of the following two distributions:
\begin{enumerate}
\item For the first distribution, we select $u\in \F_q^m$ uniformly at random, $j\in\{1,\ldots,m\}$ uniformly at random, and $i\neq i'\in\{0,\ldots,d\}$ uniformly at random. Then select $\alpha,\beta\in \F_2^t$ uniformly at random and check commutation between $x_{u+t_i e_j,\alpha}$ and $x_{u+t_{i'} e_j,\beta}$, where we interpret the subscripts  $u+t_i e_j,\alpha$ and $u+t_{i'} e_j,\beta$ as corresponding to some integer in $[N]= (\F_2^t)^m \times \F_2^t$ (as sets!). 

\item For the second distribution, we first select $j\in\{1,\ldots,m\}$ and $u_{m-j+2},\ldots,u_m \in \F_q$ uniformly at random. Then select $v,v' \in \F_q^m$ uniformly at random, conditioned on the last $(j-1)$ coordinates of each vector matching $u_{m-j+2},\ldots,u_m$. Finally, select $\alpha,\beta\in \F_2^t$ uniformly at random and check commutation between $x_{v,\alpha}$ and $x_{v',\beta}$. 
\end{enumerate}
Having defined $\mu_R$, we define $\mu_S$ as in Remark~\ref{rk-mus}. It is easy to check that in this way we obtain that $\mu_S$ is the uniform distribution over $[N] $; this is because for any of the relations involved, the marginal distribution on any element appearing in the relation (e.g.\ the first element, the second, etc.) is uniform over $S$. The following is our main technical result. 

\begin{theorem}\label{thm:z2-stab}
Let $\mC$ be the class of all tracial von Neumann algebras. 
The presentation of  $\Z_2^k$ given in~\eqref{eq:z2k-eff}, together with the distributions $\mu_S$ and $\mu_R$ defined above, is $(\delta,\mC)$ stable with 
\begin{equation}\label{eq:th-min}
 \delta(\eps) = \min\Big\{ C (mdt)^{c_1} (\eps^{c_2}+(1/q)^{c_3})\ ,\ C'_{t,d,m}\ \eps\Big\}\;,
\end{equation}
where $C,c_1,c_2,c_3$ are universal positive constants and $C'_{t,d,m}$ depends on $t,d,m$ but not $\eps$.\footnote{We do not attempt to compute the dependence of $C'_{t,d,m}$ on $t,d,m$ because we are more interested in the regime of parameters where the first part of the min in~\eqref{eq:th-min} is relevant. Nevertheless, we include it so that $\delta(\eps)\to_{\eps\to 0} 0$ as required.}   
\end{theorem}

We briefly explain a possible setting of parameters in Theorem~\ref{thm:z2-stab}. For any integer $t\geq 1$, fix $q=2^t$ and let $d=m=c t^c$ for some constant $c>0$. Then $k=t(d+1)^m = 2^{\Theta((\log q)^c \log\log q)}$ and $N=q^{m+1}= 2^{\Theta((\log q)^{c+1} \log\log q)}$. Moreover, the number of relations in~\eqref{eq:z2k-eff} is $O(N^2)$, which scales as $2^{\poly\log k}$; and the maximum length of a relation is $d+2=O(\log k)$. Finally, with this choice of parameters the function $\delta(\eps)$ scales as $\min\{ \poly(\log k)\cdot (\poly(\eps) + \poly(1/k)),C'_k\eps\}$. When $\eps$ is much larger than $1/k$, the first term in the $\min$ scales as $\poly(\log k)\cdot \poly(\eps)$, which gives a favorable tradeoff between the presentation size and the modulus of stability. For very small $\eps$, the bound is what we would generically expect from Theorem~\ref{thm:gh} and the size of the presentation ; in particular the constant $C'_k$ necessarily scales exponentially with $k$.

\begin{proof}
Let $(\mM,\tau)$ be a tracial von Neumann algebra and $\phi$ be an $(\eps,\mu_R)$-homomorphism of $\langle S:R\rangle$ on $(\mM,\tau)$. 
 Here, $S = \{s_{u,a}: u\in (\F_2^t)^m, a\in \F_2^t\}$. We sometimes enumerate the items of $S$ as $S=\{x_i: i\in\{1,\ldots,N\}\}$, where $N=2^{tm+t}$ and we fixed an arbitrary bijection between $(\F_2^t)^m\times \F_2^t$ and $\{1,\ldots,N\}$. Let $R$ be the set of all relations in~\eqref{eq:z2k-eff}, i.e.\ $R=  \{R^\ld_k\}\cup\{R^\had_k\}\cup\{R^\com_k\}$. 

The second bound in the $\min$ in~\eqref{eq:th-min} follows immediately from Theorem~\ref{thm:gh} and Lemma~\ref{lem:stab-group} according to Remark~\ref{rk:linear-modulus}. Therefore, the bulk of the proof focuses on establishing the first bound. 
The proof strategy is to perform a reduction to~\cite[Theorem 4.1]{ji2022quantum}. Towards this, the main technical work in the proof consists in using the unitaries $\phi(s_{u,a})$ in order to define a synchronous strategy in the tensor code test from~\cite{ji2022quantum}, where the underlying code is the Reed-Solomon code with degree $d$ over $\F_q$. To define the synchronous strategy, we need ``points,'' ``lines,'' and ``pair'' measurements (see~\cite{ji2022quantum} for the terminology). Each of these is a family of projective measurements that obey certain constraints. 

The proof consists of a sequence of claims, which examine the constraints imposed on the $\phi(s_{u,a})$ by each of the four collections of relations in~\eqref{eq:z2k-eff} in turn.



We first exploit the relations $\{R^\had_k\}$ to show the following. Recall that $q=2^t$.

\begin{claim}\label{claim:z2-stab-2}
For every $u\in \F_q^m$ there is a projective measurement $\{P^u_\beta\}_{\beta\in \F_{q}}$ on $\mM$ such that 
\begin{equation}\label{eq:z2-stab-2}
 \Es{u\in \F_q^m} \Es{a\in \F_2^t} \Big\| \phi(s_{u,a}) - \sum_{\beta\in\F_q} (-1)^{a \cdot \kappa(\beta)} P^u_\beta \Big\|_\tau^2 \,=\, O(\eps)\;. 
\end{equation}
\end{claim}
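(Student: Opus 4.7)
The plan is to invoke Corollary~\ref{cor:lin-test} separately for each fixed $u \in \F_q^m$, and then average over $u$.

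Since $\phi$ is a homomorphism of the free group $\mF(S)$, each Hadamard relation $s_{u,\alpha} s_{u,\beta} s_{u,\alpha+\beta} = e$ translates to the approximate identity $\phi(s_{u,\alpha})\phi(s_{u,\beta})\phi(s_{u,\alpha+\beta}) \approx \Id$. Combined with the Hermitian involution property of each $\phi(s_{u,c})$ established in Claim~\ref{claim:z2-stab-1} (multiplying on the right by $\phi(s_{u,\alpha+\beta})$ and using $\phi(s_{u,\alpha+\beta})^2 = \Id$), this is equivalent to the additivity condition $\phi(s_{u,\alpha})\phi(s_{u,\beta}) \approx \phi(s_{u,\alpha+\beta})$. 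Since $\mu_R$ assigns weight $\tfrac{1}{4}$ to the set of Hadamard relations, and within this class samples $u,\alpha,\beta$ uniformly, one obtains
\[ \Es{u \in \F_q^m} \Es{\alpha,\beta \in \F_2^t} \big\| \phi(s_{u,\alpha})\phi(s_{u,\beta}) - \phi(s_{u,\alpha+\beta}) \big\|_\tau^2 \,\leq\, 4\eps\;. \]

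For each $u$, define $\phi_u: \F_2^t \to \mU(\mM)$ by $\phi_u(a) := \phi(s_{u,a})$, and let $\eps_u$ denote the inner expectation in the display above, so that $\Es{u} \eps_u \leq 4\eps$. Corollary~\ref{cor:lin-test}, applied with $k = t$ to each $\phi_u$, yields a projective measurement $\{Q^u_c\}_{c \in \F_2^t}$ on $\mM$ satisfying
\[ \Es{a \in \F_2^t} \Big\| \phi_u(a) - \sum_{c \in \F_2^t} (-1)^{a \cdot c} Q^u_c \Big\|_\tau^2 \,=\, O(\eps_u)\;. \]
Setting $P^u_\beta := Q^u_{\kappa(\beta)}$, which is a projective measurement on $\mM$ indexed by $\F_q$, reparametrizes the characters into the form $(-1)^{a\cdot \kappa(\beta)}$. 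Averaging over $u$ and using $\Es{u} O(\eps_u) = O(\Es{u} \eps_u) = O(\eps)$ yields~\eqref{eq:z2-stab-2}.

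There is no real obstacle here: the Hadamard relations give, for every fixed $u$, an approximate $\Z_2^t$-linearity condition that Corollary~\ref{cor:lin-test} converts into a nearby projective measurement. The only place care is needed is in translating the word-level relation $s_{u,\alpha} s_{u,\beta} s_{u,\alpha+\beta} = e$ into the additivity statement, which is where the involution property from the previous step is used. Note that the constant in the $O(\cdot)$ produced by Corollary~\ref{cor:lin-test} is universal, so averaging over $u$ preserves the linear dependence on~$\eps$.
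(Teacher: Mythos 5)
Your proof is correct and takes essentially the same approach as the paper. The paper re-derives the argument via Lemma~\ref{lem:had-stab} (Gowers--Hatami for the Hadamard presentation) followed by an explicit application of Lemma~\ref{lem:pull-back} and the triangle inequality, whereas you invoke Corollary~\ref{cor:lin-test} directly; since that corollary's proof is precisely this sequence of steps, the two arguments are the same, with your packaging being marginally cleaner.
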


\begin{proof}
Since $\mu_R$ places weight $1/4$ on relations $\{R^\had_k\}$ we deduce that 
\begin{equation}\label{eq:stab-rm-1}
\Es{u\in \F_q^m} \Es{a,b\in \F_2^t} \big\|\phi(s_{u,a})\phi(s_{u,b})\phi(s_{u,a+b})-\Id\big\|_\tau^2 \,\leq\, 4\eps\;. 
\end{equation}
\tnote{edited proof starting here:}Fix an $u\in \F_q^m$ and apply Corollary~\ref{cor:lin-test} for that $u$. This gives a projective measurement $\{P^u_\beta\}_{\beta\in \F_q}$ on $\mM$ such that (by the triangle inequality)
\begin{align*}
 \Es{a \in \F_2^t} \Big\| \phi(s_{u,a}) - \sum_{\beta\in\F_q} (-1)^{a \cdot \kappa(\beta)} P^u_\beta \Big\|_\tau^2
&= O(\eps_u)\;,
\end{align*}
where $\eps_u=\Es{a,b\in \F_2^t}\|\phi(s_{u,a})\phi(s_{u,b})\phi(s_{u,a+b})-\Id\|_\tau^2$.
Averaging over $u$ gives the desired result.
\end{proof}

For $u\in \F_q^m$ let $\{P^{u}_\beta\}_{\beta \in \F_q}$ be the projective measurement obtained from Claim~\ref{claim:z2-stab-2}. For $\alpha\in \F_q$, let  
\[ U_{u,\alpha} = \sum_{\beta\in\F_q} (-1)^{\tr(\alpha\beta)} P^u_{\beta}\;.\]
Then $U_{u,\alpha} \in \mU(\mM)$. 


The next claim uses the relations $\{R^\com_k\}$.

\begin{claim}\label{claim:z2-stab-2b}
For $u\in \F_q^m$, $\alpha\in\F_q$, $j\in\{1,\ldots,m\}$ and $i\in \{0,\ldots,d\}$ let $U_{i,\alpha} = U_{u+t_ie_j,\alpha}$. Then 
\begin{equation}\label{eq:z2-stab-2b-0a}
\Es{u\in \F_q^m} \Es{\substack{j\in\{1,\ldots,m\}\\i\neq i' \in \{0,\ldots,d\}}}\Es{\alpha,\alpha'\in \F_q} \big\| \big[ U_{i,\alpha}, U_{i',\alpha'}\big]-\Id\big\|_\tau^2\,=\, O({\eps})\;, 
\end{equation}
and
\begin{equation}\label{eq:z2-stab-2b-0b}
 \Es{u\in \F_q^m} \Es{j\in\{1,\ldots,m\}} \Es{\substack{v\in \F_q^m \\v_{m-j+2}=u_{m-j+2},\ldots,v_m=u_m}}\Es{\alpha,\alpha'\in \F_q} \big\| \big[ U_{u,\alpha}, U_{v,\alpha'}\big]-\Id\big\|_\tau^2\,=\, O({\eps})\;,
\end{equation}
where the expectation is over a uniformly random $u$ and $j$, and a uniformly random $v$ conditioned on its last $(j-1)$ coordinates matching those of $u$. 
\end{claim}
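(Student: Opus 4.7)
The plan is to transfer the approximate commutativity of the $\phi(s_{w,a})$'s---which the commutation relations $\{R^\com_k\}$ in $\mu_R$ test for directly---to the unitaries $U_{w,\alpha}$, by substituting $\phi(s_{w,\kappa(\alpha)})$ for $U_{w,\alpha}$ using the average $L^2$-closeness established in Claim~\ref{claim:z2-stab-2}.

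To execute this I would first note that since $\phi$ sends each generator to a Hermitian involution (Claim~\ref{claim:z2-stab-1}), for any such $A, B$ one has $([A,B] - \Id)\,BA = AB - BA$. Since right-multiplication by a unitary is a $\|\cdot\|_\tau$-isometry, it therefore suffices to bound $\mathbb{E}\,\|U_{w,\alpha}U_{w',\alpha'} - U_{w',\alpha'}U_{w,\alpha}\|_\tau^2$ under each of the two distributions in the claim. Writing $\phi_{w,\alpha} := \phi(s_{w,\kappa(\alpha)})$, a standard five-term telescoping identity expands this difference as
\[
(U_{w,\alpha} - \phi_{w,\alpha})U_{w',\alpha'} + \phi_{w,\alpha}(U_{w',\alpha'} - \phi_{w',\alpha'}) + (\phi_{w,\alpha}\phi_{w',\alpha'} - \phi_{w',\alpha'}\phi_{w,\alpha}) + \phi_{w',\alpha'}(\phi_{w,\alpha} - U_{w,\alpha}) + (\phi_{w',\alpha'} - U_{w',\alpha'})U_{w,\alpha}.
\]
The elementary inequality $(\sum_{i=1}^5 a_i)^2 \leq 5\sum_i a_i^2$, combined with the two-sided $\|\cdot\|_\tau$-invariance under multiplication by the unitaries $U_{\cdot,\cdot}$ and $\phi_{\cdot,\cdot}$, reduces the task to bounding $\mathbb{E}\,\|U_{w,\alpha} - \phi_{w,\alpha}\|_\tau^2$, $\mathbb{E}\,\|U_{w',\alpha'} - \phi_{w',\alpha'}\|_\tau^2$, and $\mathbb{E}\,\|[\phi_{w,\alpha},\phi_{w',\alpha'}]\|_\tau^2$.

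The commutator-of-$\phi$ term is immediate from the $\eps$-homomorphism hypothesis: the commutation relations $\{R^\com_k\}$ carry total weight $\tfrac{1}{4}$ in $\mu_R$, split equally between distributions~(1) and~(2) of Section~\ref{sec:eff-z2k}, so each distribution contributes an $O(\eps)$ bound on this term. For the two replacement terms, the critical point is that in both distributions the marginal of $w$ (resp.\ $w'$) on $\F_q^m$ and the marginal of $\alpha$ (resp.\ $\alpha'$) on $\F_q$ are \emph{uniform}. Indeed, in distribution~(1), for any fixed $j, i$ the vector $w = u + t_ie_j$ is uniform on $\F_q^m$ as $u$ varies, and similarly for $w'=u+t_{i'}e_j$; in distribution~(2), both $u$ and $v$ are marginally uniform on $\F_q^m$ by construction of the conditional distribution. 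Hence Claim~\ref{claim:z2-stab-2} applies directly and bounds each replacement term by $O(\eps)$. Combining everything yields~\eqref{eq:z2-stab-2b-0a} and~\eqref{eq:z2-stab-2b-0b} with the stated bound. The only step requiring real care is verifying this marginal-distribution matching, which is exactly what distributions~(1) and~(2) of $\mu_R$ were designed to achieve; no substantive obstacle is anticipated.
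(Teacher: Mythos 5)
Your proof is correct, and it takes a genuinely different route from the paper's. The paper first uses the bijection $\kappa$ to pass from the $\F_q$-indexed expectation to the $\F_2^t$-indexed one, and then bounds the \emph{difference} of the two squared norms $\bigl|\,\|[U_{u,\alpha},U_{v,\alpha'}]-\Id\|_\tau^2 - \|[\phi(s_{u,a}),\phi(s_{v,a'})]-\Id\|_\tau^2\,\bigr|$; that step inevitably incurs a cross term of the form $\Es{}\|x-y\|_\tau\cdot\|y\|_\tau$, which by Cauchy--Schwarz is only $O(\sqrt{\eps})$ even though $\Es{}\|x-y\|_\tau^2 = O(\eps)$. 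Your telescoping, by contrast, upper-bounds $\|[U_{u,\alpha},U_{v,\alpha'}]-\Id\|_\tau^2$ directly by a sum of \emph{squared} norms, say via $(\sum_{i=1}^5 a_i)^2\le 5\sum a_i^2$, all of which are $O(\eps)$ in expectation by Claim~\ref{claim:z2-stab-2} and the relations $\{R^\com_k\}$; the cross term never appears. The result is a stronger bound of $O(\eps)$ rather than the claimed $O(\sqrt{\eps})$, which in particular implies the claim. The one point worth stating explicitly, and which you correctly flag, is that the joint marginal of $(w,\kappa(\alpha))$ (and of $(w',\kappa(\alpha'))$) under both distributions in $\mu_R$ restricted to $\{R^\com_k\}$ is the product of uniform measures on $\F_q^m\times\F_2^t$, which is exactly the distribution in Claim~\ref{claim:z2-stab-2}; you verify this correctly, so no gap remains.
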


\begin{proof}
Due to the test of the relations $\{R^\com_k\}$, it holds that 
\begin{align}
\Es{u\in \F_q^m} \Es{\substack{j\in\{1,\ldots,m\} \\ i\neq i' \in \{0,\ldots,d\}}} \Es{a,b\in \F_2^t} \big\| [\phi(s_{u+t_i e_j,a}),\phi(s_{u+t_{i'} e_j,b})] - \Id \big\|_\tau^2&\leq 8\eps\; ,\label{eq:z2-stab-2b-1}\\
\Es{u\in \F_q^m} \Es{j\in\{1,\ldots,m\}} \Es{\substack{v\in \F_q^m \\v_{m-j+2}=u_{m-j+2},\ldots,v_m=u_m}}\Es{a,b\in \F_2^t} \big\| [\phi(s_{u,a}),\phi(s_{v,b})]-\Id\big\|_\tau^2&\leq 8\eps\;.\label{eq:z2-stab-2b-1b}
\end{align}
Now, for any $u,v\in \F_q^m$
\begin{align*}
 &\Es{\alpha,\alpha'\in \F_q}\tau\big( U_{u,\alpha} U_{v,\alpha'} U_{u,-\alpha} U_{v,-\alpha'} \big) \\
&= \sum_{\beta,\beta',\gamma,\gamma'\in \F_q}\Es{\alpha\in \F_q} (-1)^{\tr(\alpha(\beta-\beta'))} \Es{\alpha'\in \F_q} (-1)^{\tr(\alpha'(\gamma-\gamma'))}  \tau(P^{u}_\beta P^{v}_\gamma P^{u}_{\beta'} P^{v}_{\gamma'}\big)\\
&=\sum_{\beta,\beta',\gamma,\gamma'\in \F_q}\Es{a\in\F_2^t} (-1)^{a\cdot\kappa(\beta-\beta')} \Es{a'\in\F_2^t} (-1)^{a'\cdot\kappa(\gamma-\gamma')} \tau(P^{u}_\beta P^{v}_\gamma P^{u}_{\beta'} P^{v}_{\gamma'}\big)\\
&=\Es{a,a'\in\F_2^t}  \tau\Big( \Big(\sum_{\beta\in \F_q} (-1)^{a\cdot \kappa(\beta)} P^{u}_\beta\Big) \Big(\sum_{\gamma\in \F_q} (-1)^{a'\cdot \kappa(\gamma)} P^{v}_\gamma\Big)\Big(\sum_{\beta'\in \F_q} (-1)^{a\cdot \kappa(\beta')} P^{u}_{\beta'}\Big)\Big(\sum_{\gamma'\in \F_q} (-1)^{a'\cdot \kappa(\gamma')} P^{v}_{\gamma'}\Big)\Big)\;.
\end{align*}
It follows that, for any distribution on $(u,v)$ such that both marginals are uniform over $\F_q^m$,
\begin{align}
\Es{u,v} \Es{\alpha,\alpha'\in \F_q} \big\| [U_{u,\alpha} ,U_{v,\alpha'} ]-\Id\big\|_\tau^2
&= \Es{u,v}\Es{a,a'\in \F_2^t} \Big\| \Big[\sum_{\beta\in \F_q} (-1)^{a\cdot \kappa(\beta)} P^{u}_\beta ,\sum_{\beta'\in \F_q} (-1)^{a'\cdot \kappa(\beta')} P^{v}_{\beta'} \Big]-\Id\Big\|_\tau^2\notag\\
&= \Es{u,v}\Es{a,a'\in \F_2^t} \big\| \big[ \phi(s_{u,a}),\phi(s_{v,a'})\big]-\Id\big\|_\tau^2+O({\eps})\;,\label{eq:z2-stab-2b-1d}
\end{align}
where the first line is by expanding the square and using~\eqref{eq:z2-stab-2b-1d} and the second line follows from Claim~\ref{claim:z2-stab-2} and the triangle inequality. Thus~\eqref{eq:z2-stab-2b-0a} follows from~\eqref{eq:z2-stab-2b-1}, and~\eqref{eq:z2-stab-2b-0b} follows from~\eqref{eq:z2-stab-2b-1b}.
\end{proof}

Now we show the following, which essentially follows from the previous claim. 

\begin{claim}\label{claim:z2-stab-3}
For every $u\in \F_q^m$ and $j\in\{1,\ldots,m\}$, there is a projective measurement $\{R^{u,j}_\alpha\}_{\alpha\in\F_q^{d+1}}$ on $\mM$ such that 
\[ \Es{u\in \F_q^m} \Es{j\in\{1,\ldots,m\}} \Es{z_0,\ldots,z_{d}\in \F_q} \Big\| U_{0,z_0}\cdots U_{d,z_d} -  \sum_{\alpha} (-1)^{\tr(\sum z_i\alpha_i)} R^{u,j}_\alpha\Big\|_\tau^2 \,=\, \poly(d,\eps)\;.\footnote{Recall footnote~\ref{ft:poly} on page~\pageref{ft:poly} for the use of this notation.}\]
Similarly, for any $u,v\in \F_q^m$ there is a projective measurement $\{R^{u,v}_{\alpha,\beta}\}_{(\alpha,\beta)\in\F_q^{2}}$ on $\mM$ such that 
\[  \Es{u\in \F_q^m} \Es{j\in\{1,\ldots,m\}} \Es{\substack{v\in \F_q^m \\v_{m-j+2}=u_{m-j+2},\ldots,v_m=u_m}} \Es{y,z\in \F_q} \Big\| U_{u,z}  U_{v,y} -  \sum_{\alpha} (-1)^{ \tr(\alpha z + \beta y)} R^{u,v}_{\alpha,\beta}\Big\|_\tau^2 \,=\, \poly(\eps)\;.\]
\end{claim}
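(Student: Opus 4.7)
The approach is to recognize that, at each fixed $(u,j)$, the map
\[
\phi'(z_0,\ldots,z_d) \,=\, U_{0,z_0}\,U_{1,z_1}\cdots U_{d,z_d}
\]
is naturally a candidate almost-homomorphism of $\Z_2^{t(d+1)} \cong \F_q^{d+1}$ (using the self-dual-basis identification $\kappa$ of $\F_q$ with $\F_2^t$, under which $(-1)^{\tr(z\alpha)}=(-1)^{\kappa(z)\cdot\kappa(\alpha)}$), and then invoke Corollary~\ref{cor:lin-test} to extract the required projective measurement $\{R^{u,j}_\alpha\}$. By construction, each $U_{i,z}$ is a $\pm 1$-valued observable, and for a fixed $i$ the map $z\mapsto U_{i,z}$ is an \emph{exact} homomorphism of $\F_q$ (this is the Fourier content of the definition $U_{i,z}=\sum_\beta(-1)^{\tr(z\beta)}P^{(i)}_\beta$). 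Hence the only obstruction to $\phi'$ being a true homomorphism comes from the near-commutation of the different $U_{i,\cdot}$, which is exactly what Claim~\ref{claim:z2-stab-2b} controls on average.

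The first step is to quantify the homomorphism defect. Starting from $\phi'(z)\phi'(z')$ and rearranging so that factors with the same $i$ become adjacent (thereby collapsing via $U_{i,z_i}U_{i,z'_i}=U_{i,z_i+z'_i}$), one performs exactly $\binom{d+1}{2}$ adjacent transpositions $U_{i,z_i}\leftrightarrow U_{i',z'_{i'}}$ ($i<i'$). Since left- and right-multiplication by unitaries preserves $\|\cdot\|_\tau$, each transposition contributes $\|[U_{i,z_i},U_{i',z'_{i'}}]-\Id\|_\tau$ to the $\|\cdot\|_\tau$-distance of the whole product. Combining the triangle inequality with Cauchy--Schwarz and then averaging over $(u,j,z,z')$ gives
\[
\Es{u,j,z,z'}\big\|\phi'(z+z')-\phi'(z)\phi'(z')\big\|_\tau^2 \,\leq\, O(d^2)\sum_{0\le i<i'\le d}\Es{u,j,z_i,z'_{i'}}\big\|[U_{i,z_i},U_{i',z'_{i'}}]-\Id\big\|_\tau^2,
\]
and \eqref{eq:z2-stab-2b-0a} bounds the right-hand side by $O(d^4\sqrt{\eps})$.

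The second step is to apply Corollary~\ref{cor:lin-test} pointwise in $(u,j)$: for each such pair, writing $\eps_{u,j}$ for the corresponding homomorphism defect of $\phi'$, the corollary produces a projective measurement $\{R^{u,j}_\alpha\}_{\alpha\in\F_q^{d+1}}$ satisfying
\[
\Es{z_0,\ldots,z_d\in\F_q}\Big\|\phi'(z) -\sum_{\alpha}(-1)^{\tr(\sum_i z_i\alpha_i)}R^{u,j}_\alpha\Big\|_\tau^2 \,=\, O(\eps_{u,j}).
\]
Averaging this bound over $(u,j)$ and using $\Es{u,j}\eps_{u,j}=O(d^4\sqrt{\eps})$ yields the claimed $\poly(d,\eps)$ estimate.

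The second, pair-measurement assertion follows the same template but is strictly simpler: define $\phi''(z,y)=U_{u,z}U_{v,y}$; the defect $\Es{z,y,z',y'}\|\phi''(z+z',y+y')-\phi''(z,y)\phi''(z',y')\|_\tau^2$ reduces, after a single swap, to the single commutator term controlled by \eqref{eq:z2-stab-2b-0b}, giving $O(\sqrt{\eps})$ under the conditional distribution on $(u,v)$ specified in the claim. Applying Corollary~\ref{cor:lin-test} to $\phi''$ viewed as an almost-homomorphism of $\F_q^2$, and averaging over that distribution, produces $\{R^{u,v}_{\alpha,\beta}\}$ with the stated $\poly(\eps)$ bound. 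The main technical obstacle is the $O(d^2)$-swap accounting in the first part; it is combinatorially routine but must be done with care so that each swap is averaged against precisely the distribution under which Claim~\ref{claim:z2-stab-2b} provides a bound, which is the case here since $z_i$ and $z'_{i'}$ are independent and uniform in $\F_q$.
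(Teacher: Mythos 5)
Your proof is correct and essentially matches the paper's. The paper defines the same candidate map $\psi(z_0,\ldots,z_d)=U_{0,z_0}\cdots U_{d,z_d}$, verifies it is an $O(d^4\sqrt{\eps})$-approximate homomorphism of $\Z_2^{t(d+1)}$ with respect to the multiplication table presentation via the swap accounting (the $O(d^2)$ transpositions times the $O(d^2\sqrt{\eps})$ per-swap bound from Claim~\ref{claim:z2-stab-2b}), then applies Theorem~\ref{thm:gh} followed by Lemma~\ref{lem:pull-back}; your invocation of Corollary~\ref{cor:lin-test} is a clean shortcut that simply packages those two final steps, and your treatment of the pair measurement mirrors the paper's remark that the second part is analogous.
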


\begin{proof}
We show the first part only, as the second part is analogous. Fix an $u\in \F_q^m$ and a direction $j\in \{1,\ldots,m\}$. Define
$\psi: (\Z_2^{t})^{d+1} \to \mU(\mM)$ by 
\[\psi(z_0,\ldots,z_{d}) \,=\, U_{0,z_0} \cdots U_{d,z_{d}}\;,\]
where for each $i\in\{0,\ldots,d\}$ and $z_i\in \F_q$, $U_{i,z_i}$ is the unitary defined in Claim~\ref{claim:z2-stab-2b} and we slightly abused notation to identify $z_i\in \Z_2^t$ with the unique $\tilde{z}_i\in \F_q$ such that $\kappa(\tilde{z}_i)=z_i$. Using Claim~\ref{claim:z2-stab-2b} and the triangle inequality we verify that the map $\psi$ is a $\delta=O(d^4\sqrt{\eps})$-approximate homomorphism of $\Z_2^{t(d+1)}$ on $\mU(\mM)$, with respect to the multiplication table presentation. To verify this first note that for any unitaries $V,W$, and any $i,i'$ we have that on average over $u$ and $j$,
\begin{align*}
\Es{\alpha,\alpha'\in \F_q} \big\| V U_{i,\alpha} U_{i',\alpha'} W^* - V U_{i',\alpha'} U_{i,\alpha}  W^*\big\|_\tau^2 \,=\, O\big(d^2 \sqrt{\eps}\big)\;,
\end{align*}
by Claim~\ref{claim:z2-stab-2b}, where the factor $d^2$ is because we require the relation to hold for all $i,i'$. Moreover, $U_{i,\alpha}U_{i,\alpha'}=U_{i,\alpha+\alpha'}$ by definition. Applying these relation $O(d^2)$ times gives
\begin{align*}
 \Es{z_0,\ldots,z_d\in\F_2^t}\Es{y_0,\ldots,y_d\in\F_2^t}\big\|\psi(z_0,\ldots,z_d)\psi(y_0,\ldots,y_d) -\psi(z_0+y_0,\ldots,z_d+y_d)\big\|_\tau^2 \,=\, O\big(d^4 \sqrt{\eps}\big)\;,
\end{align*}
as desired. Thus we may apply Corollary~\ref{cor:lin-test}
to obtain a projective measurement $\{R^{u,j}_\alpha\}_{\alpha\in \F_q^{d+1}}$  on $\mM$ such that 
on average over $u$ and $j$,
\[  \Es{z_0,\ldots,z_{d}\in\F_q} \Big\| \psi(z_0,\ldots,z_{d}) -  \sum_{\alpha\in \F_q^{d+1}} (-1)^{\tr(\sum z_i\alpha_i)} R_\alpha^{u,j}\Big\|_\tau^2 \,=\, \poly(d,\eps)\;,\]
as desired. 
\end{proof}

Finally we exploit the relations $\{R^\ld_k\}$ to obtain the following. 

\begin{claim}\label{claim:z2-stab-5}
Use the same notation as in Claim~\ref{claim:z2-stab-2b}. Let $u\in \F_q^m$, $\ell$ an axis-parallel line through $u$ and $v\in\ell$. Let $(\alpha_{u,v,i})_{i=0,\ldots,d}$ be the interpolation coefficients defined in~\eqref{eq:interp-coeff}. 
Then 
\begin{equation}\label{eq:z2-stab-5-0}
\Es{u\in\F_q^m} \Es{\ell: u\in \ell} \Es{v\in \ell} \Es{\gamma\in\F_q} \Big\| U_{0,\gamma \alpha_{u,v,0}}\cdots U_{d,\gamma  \alpha_{u,v,d}} - U_{v,\gamma} \big\|_\tau^2 \,=\, O\big(d\sqrt{\eps}\big)\;,
\end{equation}
where the expectation is over a uniformly random axis-parallel line $\ell\subset\F_q^m$ that contains $u$, and a uniformly random $v\in\ell$. 
\end{claim}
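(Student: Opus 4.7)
The plan is to derive the claim directly from the low-degree relations $\{R^\ld_k\}$ together with the approximate identification $\phi(x_{w,\kappa(\beta)}) \approx U_{w,\beta}$ provided by Claim~\ref{claim:z2-stab-2}. For each tuple $(u,j,v,\gamma)$ of parameters of the low-degree test of Figure~\ref{fig:bRM-tester}, the corresponding relation in~\eqref{eq:z2k-eff} reads
\[
\prod_{i=0}^d x_{u+t_ie_j,\kappa(\gamma\alpha_{u,v,i})} \cdot x_{v,\kappa(\gamma)} \,=\, e,
\]
where the order is irrelevant modulo the commutation relations $\{R^\com_k\}$. Since $\mu_R$ places mass $\tfrac{1}{4}$ on $\{R^\ld_k\}$ and the induced distribution on $(u,j,v,\gamma)$ matches the expectation in~\eqref{eq:z2-stab-5-0}, the $\eps$-homomorphism property of $\phi$ yields
\[
\Es{u,\ell,v,\gamma}\Big\| \prod_{i=0}^d \phi(x_{u+t_ie_j,\kappa(\gamma\alpha_{u,v,i})})\cdot \phi(x_{v,\kappa(\gamma)}) - \Id\Big\|_\tau^2 \,=\, O(\eps).
\]

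The next step is to substitute each $\phi$-factor by its $U$-counterpart. For the last factor, the marginal of $(v,\gamma)$ under the test distribution is uniform on $\F_q^m\times\F_q$, since for any fixed $j$ a uniform $u$ together with a uniform $v$ on the $j$-parallel line through $u$ yields $v$ uniform on $\F_q^m$; hence Claim~\ref{claim:z2-stab-2} gives $\Es{}\|\phi(x_{v,\kappa(\gamma)}) - U_{v,\gamma}\|_\tau^2 = O(\eps)$. For each interior index $i$, setting $w=u+t_ie_j$ and $\beta=\gamma\alpha_{u,v,i}$, a direct counting argument shows that the marginal density of $(w,\beta)$ is at most $(d+1)/q^{m+1}$ everywhere on $\F_q^m\times\F_q$, the factor $(d+1)$ reflecting the $d$ interpolation points on which $\alpha_{u,v,i}$ vanishes, so a change-of-measure argument upgrades Claim~\ref{claim:z2-stab-2} to $\Es{}\|\phi(x_{w,\kappa(\beta)}) - U_{w,\beta}\|_\tau^2 = O(d\eps)$. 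The standard hybrid estimate $\|X_0\cdots X_d - Y_0\cdots Y_d\|_\tau^2 \leq (d+1)\sum_i \|X_i - Y_i\|_\tau^2$ for unitaries (obtained by telescoping, unitary invariance of $\|\cdot\|_\tau$, and Cauchy--Schwarz), combined with the triangle inequality, then produces
\[
\Es{u,\ell,v,\gamma}\Big\| \prod_{i=0}^d U_{i,\gamma\alpha_{u,v,i}}\cdot U_{v,\gamma} - \Id\Big\|_\tau^2 \,=\, O(d^3\eps).
\]
Since $U_{v,\gamma}$ is a self-adjoint involution, right-multiplying by it preserves $\tau$-norms and produces the left-hand side of~\eqref{eq:z2-stab-5-0}, yielding a bound subsumed by $O(d\sqrt{\eps})$ in the regime of interest.

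The main technical obstacle is controlling the non-uniformity of the marginal distribution of $(u+t_ie_j,\gamma\alpha_{u,v,i})$: each hybrid step pays an $O(d)$ factor from the vanishing of $\alpha_{u,v,i}$ on the $d$ interpolation points, and this compounds with the Cauchy--Schwarz loss across the $d+1$ factors to produce the cubic dependence on $d$. A secondary subtlety is that the order of multiplication fixed by the low-degree relation must be reconciled with the order implicit in the product $U_{0,\cdot}\cdots U_{d,\cdot}$; this is guaranteed by the commutation relations $\{R^\com_k\}$ but should be verified explicitly to confirm that these reorderings introduce no additional errors beyond the estimates above.
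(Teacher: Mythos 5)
Your proof is correct in its essentials, and it takes a genuinely different — more elementary — route from the paper's. The paper's proof works at the level of traces: it expands $\Es{\gamma\in\F_q}\tau\big(U_{0,\gamma\alpha_0}\cdots U_{d,\gamma\alpha_d}U_{v,-\gamma}\big)$ via the Fourier decompositions of the $U$'s into the projective measurements $\{P^u_\beta\}$, then substitutes each Fourier factor with the corresponding $\phi(s_{\cdot,\cdot})$ using Claim~\ref{claim:z2-stab-2} (incurring roughly one $\sqrt{\eps}$ per factor via H\"older), and finally identifies the resulting trace of a $\phi$-product with a random low-degree relation which is $1-O(\eps)$ by the almost-homomorphism hypothesis. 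You run the argument in the opposite direction and at the level of squared $\tau$-norms: starting from $\Es{}\|\prod_i\phi(\cdot)\cdot\phi(x_{v,\kappa(\gamma)})-\Id\|_\tau^2 = O(\eps)$, you telescope to replace each $\phi$ by its $U$-counterpart, paying a Cauchy--Schwarz factor of $d+2$ in the hybrid estimate plus an explicit change-of-measure factor of $(d+1)$ per interior term (since $\gamma\alpha_{u,v,i}$ is not uniformly distributed when $\alpha_{u,v,i}$ can vanish on $d$ points of the line). This yields $O(d^3\eps)$ rather than the stated $O(d\sqrt{\eps})$; the two bounds are incomparable in general (yours is tighter when $\eps \ll 1/d^4$, looser otherwise), but both are $\poly(d,\eps)$, which is all that the proof of Theorem~\ref{thm:z2-stab} requires. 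Your formulation of the change-of-measure step is actually more careful than the paper's: the paper's ``repeated application of Claim~\ref{claim:z2-stab-2}'' is silent on the non-uniformity of the marginal of $(u+t_ie_j,\gamma\alpha_{u,v,i})$, and tracking it honestly would introduce an extra $\poly(d)$ or $\poly(d)/\sqrt{q}$ correction to the paper's stated constant (harmless for the final theorem, which already carries a $\poly(q^{-1})$ factor). The subtlety you flag at the end about reconciling the order of multiplication is not an issue: the $U$-product in~\eqref{eq:z2-stab-5-0} lists the factors in the same order $i=0,\ldots,d$ followed by $v$ as the relation $R^\ld_k$, so the telescoping pairs corresponding factors without any reordering.
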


\begin{proof}
For $u,v\in \F_q^m$ and $i\in\{0,\ldots,d\}$ we write $\alpha_i$ for $\alpha_{u,v,i}$, and for $u\in\F_q^m$, $\beta\in\F_q$, $i\in\{0,\ldots,d\}$ and $j\in\{1,\ldots,m\}$, we let $P^{(i)}_\beta = P^{u+t_ie_j}_\beta$ (leaving the dependence on $u$ and $j$ implicit).
We observe that 
\begin{align*}
\Es{\gamma\in\F_q} \tau\big( U_{0,\gamma\alpha_{0}}\cdots U_{d,\gamma\alpha_{d}} U_{v,-\gamma} \big)
&=  \sum_{\beta_0,\ldots,\beta_d\in\F_q}\sum_{\beta\in\F_q} \Es{\gamma\in\F_q} (-1)^{\tr(\sum \gamma \alpha_i\beta_i)} (-1)^{\tr(-\gamma\beta)} \tau\big( P^{(0)}_{\beta_0} \cdots P^{(d)}_{\beta_d} P^v_\beta \big)\\
&= \sum_{\beta_0,\ldots,\beta_d\in\F_q}\sum_{\beta\in\F_q}\Es{\gamma\in\F_2^t} (-1)^{\gamma\cdot \kappa(\sum \alpha_i\beta_i-\beta)} \tau\big( P^{(0)}_{\beta_0} \cdots P^{(d)}_{\beta_d} P^v_\beta \big)\\
&=\Es{\gamma\in\F_2^t}  \tau\Big( \Big(\sum_{\beta_0\in \F_q} (-1)^{\gamma\cdot \kappa(\alpha_0\beta_0)} P^{(0)}_{\beta_0} \Big)\cdots\Big(\sum_{\beta\in \F_q} (-1)^{-\gamma\cdot \kappa(\beta)} P^{v}_{\beta} \Big)\Big)\\
&=\Es{\gamma\in\F_q}  \tau\Big( \Big(\sum_{\beta_0\in \F_q} (-1)^{\kappa( \gamma \alpha_0)\cdot\kappa(\beta_0)} P^{(0)}_{\beta_0} \Big)\cdots\Big(\sum_{\beta\in \F_q} (-1)^{\kappa(-\gamma)\cdot \kappa(\beta)} P^{v}_{\beta} \Big)\Big)\;.
\end{align*}
By repeated application of Claim~\ref{claim:z2-stab-2}, the last expression is, on average over $u$, within $O(d\sqrt{\eps})$ of 
\[ \Es{\gamma\in\F_q}  \tau\big( \phi(s_{u+t_ie_j,\kappa(\gamma \alpha_0)})\cdots \phi(s_{v,\kappa(-\gamma)}) \big)\;.\]
The latter expression is a random relation in $R^\ld_k$, and so, on average over $u,j$ and $v$ it is $1-O(\eps)$.
\end{proof}

\begin{claim}\label{claim:z2-stab-4}
For every axis-parallel line $\ell$ there is a projective measurement $\{Q^\ell_g\}$ with outcomes $g \in \cal{P}(q,1,d)$ that range over degree-$d$ polynomials on $\ell$ such that
\[ \Es{\ell\subset \F_q^m} \Es{v\in \ell} \sum_g \tau\big( P^v_{g(v)} Q^\ell_g\big) \,\geq\, 1-\poly(d,\eps)\;, \]
where the expectation is over a uniformly random axis-parallel line $\ell$ and point $v\in \ell$. 
\end{claim}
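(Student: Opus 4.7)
The plan is to define $Q^\ell$ directly from the line measurement $\{R^{u,j}_\alpha\}_{\alpha \in \F_q^{d+1}}$ produced by Claim~\ref{claim:z2-stab-3}. Parameterize $\ell = \{u+te_j : t\in\F_q\}$; each $\alpha \in \F_q^{d+1}$ uniquely encodes the degree-$d$ polynomial $g_\alpha$ on $\ell$ with $g_\alpha(u+t_i e_j) = \alpha_i$, so setting $Q^\ell_{g_\alpha} := R^{u,j}_\alpha$ yields a projective measurement indexed by degree-$d$ polynomials on $\ell$. For any $v \in \ell$, the interpolation identity~\eqref{eq:interp-coeff} gives $g_\alpha(v) = \sum_i \alpha_{u,v,i}\alpha_i$.

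To compute consistency, I would Fourier-expand the point projections via Plancherel on $\F_q \cong (\F_2^t, +)$: $P^v_\beta = \Es{\gamma \in \F_q}(-1)^{\tr(\gamma\beta)} U_{v,\gamma}$. Substituting gives
\begin{align*}
\sum_g \tau\!\big(P^v_{g(v)}\, Q^\ell_g\big) \;=\; \Es{\gamma \in \F_q} \tau\!\big(U_{v,\gamma} \cdot T^{\ell,v}_\gamma\big), \qquad T^{\ell,v}_\gamma := \sum_{\alpha} (-1)^{\tr\big(\sum_i (\gamma \alpha_{u,v,i}) \alpha_i\big)} R^{u,j}_\alpha.
\end{align*}
The operator $T^{\ell,v}_\gamma$ is exactly the Fourier-dual side of Claim~\ref{claim:z2-stab-3} evaluated at the frequency $z_i = \gamma\alpha_{u,v,i}$, which that claim would identify with the product $U_{0,\gamma\alpha_{u,v,0}}\cdots U_{d,\gamma\alpha_{u,v,d}}$; Claim~\ref{claim:z2-stab-5} then identifies the product with $U_{v,\gamma}$. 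Chaining the two approximations via Cauchy--Schwarz (and using that $U_{v,\gamma}^2 = \Id$, so $\tau(U_{v,\gamma}^2) = 1$), and applying Jensen's inequality to convert the $L^2$ errors into $L^1$ bounds after averaging over $(u,j,v,\gamma)$, yields the desired bound $\Es{\ell}\Es{v\in\ell}\sum_g \tau(P^v_{g(v)} Q^\ell_g) \geq 1 - \poly(d,\eps)$.

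The main obstacle is a distributional mismatch: Claim~\ref{claim:z2-stab-3} bounds the trace-norm error only in $L^2$-average over $(z_0,\ldots,z_d)$ uniform in $\F_q^{d+1}$, whereas the tuples $z = (\gamma\alpha_{u,v,0},\ldots,\gamma\alpha_{u,v,d})$ we need the bound for are supported, as $(v,\gamma)$ ranges over $\ell \times \F_q$, on only $\sim q^2$ out of $q^{d+1}$ possible tuples. A naive change of measure would lose a factor of $q^{d-1}$, which is fatal. The correct resolution is to revisit the proof of Claim~\ref{claim:z2-stab-3}: the measurement $R^{u,j}_\alpha$ is the pull-back via Lemma~\ref{lem:pull-back} of the joint Fourier projections $\{Q_\alpha\}$ of the exact commuting Hermitian involutions $V_0^z,\ldots,V_d^z$ on an enlarged algebra $\mN$. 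Applying Lemma~\ref{lem:pull-back} and Cauchy--Schwarz to the \emph{single} pulled-back unitary $w^* V^z w$ (rather than combining errors over the $q^{d+1}$ individual projections) bounds $\|T^{\ell,v}_\gamma - w^* V^{z(v,\gamma)} w\|_\tau$ uniformly in the Fourier direction; the residual distance $\|w^* V^{z(v,\gamma)} w - U_{0, z_0(v,\gamma)} \cdots U_{d, z_d(v,\gamma)}\|_\tau$ is then handled by a direct isometry argument rather than by the averaged statement of Claim~\ref{claim:z2-stab-3}. Carefully carrying out this refinement, so that the final error only scales polynomially in $d$ and $\eps$, is the most delicate part of the proof.
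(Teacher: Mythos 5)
Your definition of $Q^\ell_g$ via $Q^\ell_{g_\alpha}=R^{u,j}_\alpha$ and your high-level plan (Fourier-expand the point projections and combine Claims~\ref{claim:z2-stab-3} and~\ref{claim:z2-stab-5}) match the paper's, and you correctly put your finger on the genuine obstacle: the tuples $z_i=\gamma\alpha_{u,v,i}$ occupy a measure-$q^{-(d-1)}$ sliver of $\F_q^{d+1}$, so a naive change of measure against the averaged bound of Claim~\ref{claim:z2-stab-3} is hopeless.

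However, your proposed resolution is not actually carried out, and the one concrete step you do name does not close the gap. You want to reopen the proof of Claim~\ref{claim:z2-stab-3} and bound two quantities uniformly: (a) $\|T^{\ell,v}_\gamma-w^*V^{z}w\|_\tau$ and (b) $\|w^*V^{z}w-U_{0,z_0}\cdots U_{d,z_d}\|_\tau$. Part (a) is plausible (the Fourier transform of two $L^2$-close projective measurements along a fixed character is controlled uniformly), but part (b) is exactly the error in Gowers--Hatami (Theorem~\ref{thm:gh}) pulled back through $w$, and that bound is \emph{only} averaged over $z$. You gloss over this with the phrase ``a direct isometry argument'' and then concede that ``carefully carrying out this refinement\ldots is the most delicate part of the proof.'' That is precisely the content that needs to be supplied, and it is missing.

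The paper resolves the mismatch by a different, purely algebraic trick that never touches the isometry $w$ or the exact representation $V$ again: it upgrades the averaged statement of Claim~\ref{claim:z2-stab-3} to a \emph{pointwise-in-$z$} statement. Concretely, for an arbitrary fixed $z$ one writes $U_{0,z_0}\cdots U_{d,z_d}=U_{0,y_0}U_{0,z_0-y_0}\cdots U_{d,y_d}U_{d,z_d-y_d}$ with $y$ uniformly random; one commutes the factors using Claim~\ref{claim:z2-stab-2b} (note both $y$ and $z-y$ are uniform, so the averaged commutation bound applies); one applies the averaged Claim~\ref{claim:z2-stab-3} twice, at $y$ and at $z-y$; and finally one uses that $\{R^{u,j}_\alpha\}$ is a \emph{projective} measurement so that the product of the two Fourier sums collapses to $\sum_\alpha(-1)^{\tr(\sum z_i\alpha_i)}R^{u,j}_\alpha$. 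This ``random-shift self-correction plus projectivity'' step is the key idea your proposal lacks; you do not invoke Claim~\ref{claim:z2-stab-2b} at all, and you do not use the projectivity of $R^{u,j}$, both of which are essential. Once the pointwise bound~\eqref{eq:z2-stab-4-1} is in hand, the paper simply substitutes $z_i=\gamma\alpha_{u_\ell,v,i}$, combines with Claim~\ref{claim:z2-stab-5}, and finishes by the interpolation identity and Fourier inversion — no appeal to the intermediate objects $w,V,Q_\alpha$ is needed at this stage.

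One smaller gap: the paper must choose a \emph{single} representative $u_\ell\in\ell$ for each line when defining $Q^\ell_g$, and it does so by a Markov-type averaging argument so that both~\eqref{eq:z2-stab-4-1} and~\eqref{eq:z2-stab-5-0} hold with at most a constant loss conditioned on that $u_\ell$. Your write-up leaves the basepoint $u$ of the parameterization of $\ell$ unspecified, which would make $Q^\ell_g$ ill-defined.
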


\begin{proof}
First we show that the conclusion of Claim~\ref{claim:z2-stab-3} can be strengthened to hold for \emph{every} $z_0,\ldots,z_{d}$, instead of on average. To show this, note that for any $y_0,\ldots,y_d\in \F_q$ we can write 
\[ U_{0,z_0}\cdots U_{d,z_d}\,=\, U_{0,y_0}U_{0,z_0-y_0}\cdots U_{d,y_d}U_{d,z_d-y_d}\;.\]
By repeated application of Claim~\ref{claim:z2-stab-2b}, the term on the right-hand side satisfies 
\[ \big\| U_{0,y_0}U_{0,z_0-y_0}\cdots U_{d,y_d}U_{d,z_d-y_d}  - U_{0,y_0}\cdots U_{d,y_d}U_{0,z_0-y_0}\cdots U_{d,z_d-y_d}\big\|_\tau^2 \,=\,\poly(d,\eps)\;.\]
To show this it suffices to verify that we only need to ``commute'' pairs of terms whose exponents are independent and uniformly random. Using Claim~\ref{claim:z2-stab-3} twice, the right-hand side satisfies 
\[ \big\| U_{0,y_0}\cdots U_{d,y_d}U_{0,z_0-y_0}\cdots U_{d,z_d-y_d} - \Big(\sum_{\alpha} \omega^{\tr(\sum y_i\alpha_i)} R^{u,j}_\alpha\Big)\Big(\sum_{\alpha} (-1)^{\tr(\sum (z_i-y_i)\alpha_i)} R^{u,j}_\alpha\Big) \big\|_\tau^2 \,=\,\poly(d,\eps)\;.\]
Since $\{R^{u,j}_\alpha\}$ is a projective measurement, we get the desired conclusion: on average over $u$ and $j$, for any $z_0,\ldots,z_d\in\F_q$, it holds that
\begin{equation}\label{eq:z2-stab-4-1}
 \Big\| U_{0,z_0}\cdots U_{d,z_d} -  \sum_{\alpha} (-1)^{\tr(\sum z_i\alpha_i)} R^{u,j}_\alpha\Big\|_\tau^2 \,=\, \poly(d,\eps)\;.
	\end{equation}
	For any line $\ell \subset \F_q^m$, let $u_\ell\in \ell$ be chosen such that, for a uniformly random $\ell$ and conditioned on that $u_\ell$,~\eqref{eq:z2-stab-4-1} and~\eqref{eq:z2-stab-5-0} both hold, with right-hand side multiplied by a factor at most $2$. For any $\ell$ in the $j$-th direction and degree-$d$ polynomial $g$, define the operator $Q^\ell_g = R^{u_\ell,j}_{g(u_\ell+t_0 e_j),\ldots,g(u_\ell+t_de_j)}$.
	Combining the two equations we deduce
	\begin{equation}\label{eq:z2-stab-4-2}
 \Es{\gamma\in\F_q} \Big\| U_{v,\gamma} -  \sum_{g} (-1)^{\tr(\gamma \sum \alpha_{u_\ell,v,i} g(u_\ell+t_ie_j))} Q^{\ell}_g\Big\|_\tau^2 \,=\, \poly(d,\eps)\;.
	\end{equation}
	 By definition, $\sum \alpha_{u_\ell,v,i} g(u_\ell+t_ie_j) = g(v)$. By Fourier transform, we obtain the desired conclusion. 
\end{proof}

We are now in a position to apply~\cite[Theorem 4.1]{ji2022quantum}. For this we need to define a synchronous strategy in the tensor code test $\code^{\otimes m}$, where $\code$ is the Reed-Solomon code with degree $d$ over $\F_q$, and thus $\code^{\otimes m}$ is the code $\code_\RM$ considered in Section~\ref{sec:rmq} (see the remark right after Lemma~\ref{lem:schwartz-zippel}). For the ``points measurement'' $A^u$ we choose $P^u$. For the ``lines measurement'' $B^\ell$ we choose $Q^\ell$ from Claim~\ref{claim:z2-stab-4}. Finally, for the ``pair measurement'' $P^{u,v}$ we choose $R^{u,v}$ from Claim~\ref{claim:z2-stab-3}. By Claim~\ref{claim:z2-stab-4} this strategy succeeds with probability $1-\poly(d,\eps)$ in the ``axis-parallel lines test'', and by Claim~\ref{claim:z2-stab-3} it succeeds with probability $1-\poly(\eps)$ in the ``subcube commutation test.'' Applying~\cite[Theorem 4.1]{ji2022quantum} we deduce the existence of a projective measurement $\{G_c\}_{c\in\code^{\otimes m}}$ on $\mA$ such that for all integers $r \geq 12mt$,
\begin{equation}\label{eq:41-last}
 \Es{u\in\F_q^m} \sum_{c\in\code^{\otimes m}} \tau\big( G_c P^u_{c(u)}\big) \geq 1-\eta\;,
\end{equation}
where $\eta = \poly(m,d,r) \cdot\poly(\eps,q^{-1},e^{-\Omega(r/m^2)})$. Choosing $r=\Omega(m^2 t)$, since $q=2^{t}$, the bound becomes $\poly(m,d,t) \cdot\poly(\eps,q^{-1})$. The first bound in the $\min$ in~\eqref{eq:th-min} then follows where explicitly, the homomorphism that is close to $\phi$ is given by
\[ g: s_{u,a} \mapsto \sum_{\beta\in\F_q} (-1)^{a\cdot \kappa(\beta)} \sum_{c:\ c(u)=\beta} G_c \;,\]
and closeness follows from~\eqref{eq:41-last} and~\eqref{eq:z2-stab-2} by expanding the square in the definition.
\end{proof}

\section{Testing entanglement}
\label{sec:quantum}

In this section we give an application of our stability results to the problem of entanglement testing in quantum information. We first introduce the language of nonlocal games. Then we associate a nonlocal game to any presentation of $\Z_2^k$. Finally, we show that, if the presentation is stable, then the game is a robust  entanglement test. 
	
\subsection{Nonlocal games}
\label{sec:nl-games}

We give standard definitions on nonlocal games. For background from a computer science point of view, see~\cite{cleve2004consequences}; for the operator algebra perspective, see e.g.~\cite{kim2018synchronous}.

\begin{definition}[Game]
A game is a tuple $(\mX,\mu,\mA,D)$ where $\mX$ is a finite set, $\mu$ a distribution on $\mX\times \mX$, $\mA=(\mA(x))_{x\in\mX}$ a collection of finite sets, and 
\[ D: \big\{ (x,y,a,b) : (x,y)\in\text{supp}(\mu),a\in\mA(x),b\in\mA(y)\big\} \;\to\;\{0,1\}\]
such that $D$ is symmetric, i.e. $D(x,y,a,b)=D(y,x,b,a)$ whenever both terms are defined. We often abuse notation and write $\mu$ for the symmetrized marginal of $\mu$, i.e.\ 
\[\mu(x) := \sum_{x'\in \mX} \frac{1}{2}\big(\mu(x,x')+\mu(x',x)\big)\;.\]
\end{definition}

The interpretation of $G=(\mX,\mu,\mA,D)$ as a nonlocal game is the following. In the ``game,'' a referee is imagined to sample a pair of ``questions'' $(x,y)\sim \mu$. The question $x$ is sent to a first player, ``Alice,'' and the question $y$ is sent to a second player, ``Bob.'' Each player is tasked with responding with an answer, $a\in \mA(x)$ for Alice and $b\in \mA(y)$ for Bob. The referee accepts the players' answers if and only if $D(x,y,a,b)=1$. 

Nonlocal games provide a framework to study different kinds of bipartite correlations: depending on the level of coordination allowed between Alice and Bob, they may have varying chances of success in the game. A ``classical'' strategy consists of functions $f_A:\mX\to\mA$ for Alice and $f_B:\mX\to\mA$ for Bob; any such pair of functions leads to a probability of success in the game which can be computed in the obvious manner. 

An important motivation for studying nonlocal games is that in quantum mechanics, local strategies for the players (meaning strategies that do not require any communication between the players to determine their answer, given their question) are a larger set than the above-described classical strategies. Specifically, a \emph{quantum local} (or \emph{quantum} for short) strategy is specified by the following. 

\begin{definition}[Synchronous strategy]
If $G=(\mX,\mu,\mA,D)$ is a game and $(\mM,\tau)$ a tracial von Neumann algebra, a \emph{synchronous strategy $\strategy$ for $G$ on $(\mM,\tau)$} is, for every $x\in \mX$, a projective measurement $(P^x_a)_{a\in \mA(x)}$ on $\mM$. The value of a strategy $\strategy$ in $G$ is 
\[ \omega(G;\strategy)\,=\, \sum_{(x,y)\in\mX\times\mX}\mu(x,y)\sum_{(a,b)\in\mA(x)\times\mA(y)} D(x,y,a,b)\, \tau\big(P^x_a \,P^y_b\big) \;.
\]
We say that $\strategy$ is \emph{perfect} if $\omega(G;\strategy)=1$.
\end{definition}
	
The name \emph{synchronous} stems from the fact that whenever an identical pair $(x,x)$ is chosen, $\tau(P^x_a P^x_b)=0$ for $a\neq b$ due to the requirement that $\{P^x_a\}_a$ is a projective measurement. Thus a synchronous strategy always returns the same answer to the same question. 

It will be convenient to have a measure of closeness for strategies. The following definition parallels Definition~\ref{def:close}. 

	\begin{definition}[Closeness for strategies]\label{def:close-meas}
Let $\{A^i_a\}\subseteq \mM$ and $\{B^i_a\}\subseteq \mN$ be two families of projective measurements on  tracial algebras $(\mM,\tau^\mM)$ and $(\mN,\tau^\mN)$ respectively, indexed by the same set $i\in \mI$ and with the same sets of outcomes $a,b\in\mA(i)$. For $\delta\geq0$ and $\mu$ a measure on $\mI$ we say that $\{A^i\}$ and $\{B^i\}$ are $(\delta,\mu)$-close if there exists a projection $P\in\mM_\infty$ of finite trace such that $\mN=P\mM_\infty P$ and $\tau^\mN=\tau_\infty/\tau_\infty(P)$, and a partial isometry $w\in P \mM_\infty \Id_\mM$ such that 
\[ \Es{i\sim\mu} \sum_{a\in\mA(i)} \big\| A^i_a - w^* B^i_a w \big\|_{\tau^\cM}^2 \,\leq\,\delta\]
and 
\[\max\big\{ \tau^\mM(\Id_\mM-w^*w)\,,\; \tau^\mN(P-ww^*)\big\} \,\leq\, \delta\;.\]
If the measure $\mu$ is omitted then it is understood to be the uniform measure on $\mI$.
\end{definition}

In Definition~\ref{def:close-meas} closeness is measured in the $L_2$ sense. The following lemma gives a consequence for distance measured in an $L_1$ sense.

\begin{lemma}\label{lem:l1-l2}
Let $\{P_a^i\}$ and $\{Q_a^i\}$ be two families of projective measurements that are $(\eps,\mu)$-close. Then 
\begin{equation}\label{eq:l1-l2}
\Es{i\in\mI} \sum_{a\in \mA(i)} \tau\big(|P_a-w^*Q_aw|\big) \,\leq\, \eps+ 2 \sqrt{\eps}\;.
\end{equation}
\end{lemma}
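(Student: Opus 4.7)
The plan is to start from the pointwise identity $|\Delta_a| = \Delta_a + 2 \Delta_a^-$, where $\Delta_a := P_a^i - w^*Q_a^i w$ and $\Delta_a = \Delta_a^+ - \Delta_a^-$ is its Jordan decomposition. Summing over $a\in\mA(i)$, taking the trace, and noting that $\sum_a \Delta_a = \Id_\mM - w^*w$, after averaging over $i\sim\mu$ one obtains
\[
\Es{i\sim\mu}\sum_{a\in\mA(i)}\tau(|\Delta_a|) = \tau^\mM(\Id_\mM - w^*w) + 2\,\Es{i\sim\mu}\sum_{a\in\mA(i)}\tau(\Delta_a^-) \leq \eps + 2\,\Es{i\sim\mu}\sum_{a\in\mA(i)}\tau(\Delta_a^-).
\]
Thus the lemma reduces to the bound $\Es{i\sim\mu}\sum_a \tau(\Delta_a^-) \leq \sqrt{\eps}$, which delivers the claimed $\eps+2\sqrt{\eps}$.

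Since $\sum_a \Delta_a = \Id_\mM - w^*w \geq 0$, pointwise in $i$ the trace of the sum of negative parts is bounded by that of the positive parts: $\sum_a \tau(\Delta_a^-) \leq \sum_a \tau(\Delta_a^+)$. I would bound the latter by two applications of Cauchy--Schwarz. Letting $e_a$ denote the spectral projection of $\Delta_a$ onto $(0,\infty)$, so that $\Delta_a^+ = \Delta_a^+ e_a$, one has $\tau(\Delta_a^+) \leq \|\Delta_a^+\|_\tau\,\sqrt{\tau(e_a)} \leq \|\Delta_a\|_\tau\,\sqrt{\tau(e_a)}$, and a second Cauchy--Schwarz in $a$ gives $\sum_a \tau(\Delta_a^+) \leq (\sum_a \|\Delta_a\|_\tau^2)^{1/2} (\sum_a \tau(e_a))^{1/2}$. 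Combined with Jensen's inequality (applied to $\sqrt{\cdot}$) in the expectation over $i$ and the hypothesis $\Es{i\sim\mu}\sum_a \|\Delta_a\|_\tau^2 \leq \eps$, this reduces the whole argument to the structural estimate $\sum_{a\in\mA(i)}\tau(e_a) \leq 1$, uniformly in $i$.

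I expect this last inequality to be the main obstacle. My plan is to establish it via the per-$a$ bound $\tau(e_a) \leq \tau(P_a^i)$, which then sums to $1$ because $\{P_a^i\}_a$ is a projective measurement. On the range of $e_a$ the operator $\Delta_a = P_a^i - w^*Q_a^i w$ is strictly positive, so $e_a P_a^i e_a > e_a(w^*Q_a^i w)e_a \geq 0$; in particular, $e_a P_a^i e_a$ is strictly positive in the corner $e_a \mM e_a$ and hence has range projection equal to $e_a$ itself. Writing $e_a P_a^i e_a = (P_a^i e_a)^*(P_a^i e_a)$ and using the Murray--von Neumann equivalence between the range projections of $X^*X$ and $XX^*$ (which preserves the trace), the quantity $\tau(e_a)$ equals the trace of the range projection of $P_a^i e_a P_a^i$, which is dominated by $P_a^i$ since its range sits inside the range of $P_a^i$. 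Hence $\tau(e_a) \leq \tau(P_a^i)$, and summing completes the proof.
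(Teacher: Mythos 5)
Your proof is correct, but it follows a genuinely different route from the paper's. The paper proceeds by the purely algebraic telescoping decomposition $P_a - w^*Q_a w = (P_a - P_a^2) + P_a(P_a - w^*Q_a w) + (P_a - w^*Q_a w)(w^*Q_a w) + ((w^*Q_a w)^2 - w^*Q_a w)$: the first term vanishes by projectivity, the middle two each contribute $\sqrt{\eps}$ via H\"older and the $L^2$ closeness hypothesis, and the last term is rewritten as $w^*Q_a(P - ww^*)Q_a w \geq 0$, whose trace sums to at most $\eps$. Your argument instead invokes the Jordan decomposition $|\Delta_a| = \Delta_a + 2\Delta_a^-$, uses $\sum_a \Delta_a = \Id_\mM - w^*w \geq 0$ to trade $\sum_a\tau(\Delta_a^-)$ for $\sum_a\tau(\Delta_a^+)$, and then runs Cauchy--Schwarz against the structural estimate $\sum_a \tau(e_a) \leq 1$ for the positive spectral projections $e_a$ of $\Delta_a$. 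That last estimate---established by showing that $e_a P_a^i e_a$ has trivial kernel in the corner $e_a\mM e_a$, so its support projection is $e_a$, and then comparing with $P_a^i e_a P_a^i$ via Murray--von Neumann equivalence---is a self-contained operator-theoretic lemma that the paper's manipulation never needs; conversely, the paper's proof avoids spectral theory entirely. Both yield exactly $\eps + 2\sqrt{\eps}$, and your version has the merit of making the provenance of the two terms transparent (the $\eps$ coming from the trace deficit $\tau(\Id_\mM - w^*w)$, the $2\sqrt{\eps}$ from the $L^1$ mass of the negative parts), while the paper's is shorter and stays at the level of elementary norm inequalities.
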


\begin{proof}
Using the triangle inequality, 
\begin{align}
\Es{i} \sum_a \tau\big(|P^i_a - w^* Q^i_a w|\big) &\leq\, \Es{i}\sum_a \Big(\tau\big(|(P^i_a-(P^i_a)^2)|\big) + \tau\big(|P_a(P^i_a-w^*Q^i_aw)|\big) \notag\\
&\qquad\qquad+ \tau\big(|(P^i_a-w^*Q^i_aw)w^*Q^i_aw|\big) \notag \\
&\qquad \qquad+\tau\big(| (w^*Q^i_aw^*wQ^i_aw-w^*Q^i_aw)|\big)\Big)\;. \label{eq:lem:l1-l2}
\end{align}
The first term on the right-hand side is zero, because $P^i$ is assumed projective. The terms in the middle are bounded using H\"older's inequality:
\begin{align*}
\Es{i}\sum_a  \tau\big(|P^i_a(P^i_a-w^*Q^i_aw)|\big) &\leq \Es{i} \sum_a \|P^i_a\|_\tau \, \, \|P^i_a-w^*Q^i_aw\|_\tau\\
&\leq \Big(\Es{i}\sum_a \|P^i_a\|_\tau^2\Big)^{1/2}\Big( \Es{i}\sum_a  \|P^i_a-w^*Q^i_aw\|_\tau^2 \Big)^{1/2}\\
&\leq \sqrt{\eps}
\end{align*}
by closeness. The third term of~\eqref{eq:lem:l1-l2} is bounded in a similar fashion.
Finally the last term of~\eqref{eq:lem:l1-l2} can be bounded as 
\begin{align*}
\Es{i}\sum_a \tau\big(| w^*Q^i_aw^*wQ^i_aw-w^*Q^i_aw|\big) &= \Es{i}\sum_a \tau\big( (w^*Q^i_a(I - w^*w)Q^i_aw)\big)\\
&=  \tau\Big( (I-w^*w)\Big(\Es{i}\sum_a Q^i_aww^* Q^i_a\Big)\Big)\\
&\leq \tau(I-w^*w) \,\, \Big\|\Es{i}\sum_a Q^i_aww^* Q^i_a\Big\|\\
&\leq \eps\;.
\end{align*}
\end{proof}

We will make use of the following elementary lemma, which shows that two strategies for the same game $G$ that are close according to Definition~\ref{def:close-meas} have a close value. 

\begin{lemma}\label{lem:close-value}
Let $G=(\mX,\mu,\mA,D)$ be a game and $\strategy=\{P^x_a\}$ and $\strategy'=\{Q^x_a\}$ strategies on $\mM$ and $\mN$ respectively such that $\{P^x_a\}$ and $\{Q^x_a\}$ are $(\eps,\mu)$-close. Then 
\[ \big|\omega(G;\strategy) - \omega(G;\strategy')\big|\,\leq\, 8\sqrt{\eps} \;.\]
\end{lemma}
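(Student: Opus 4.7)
The plan is to reduce to an $L^1$-distance bound on the measurements via Lemma~\ref{lem:l1-l2}, and then telescope the value difference in three pieces. The key observation is that for each pair $(x,y)$ of questions the ``half-projected'' operators $\sum_b D(x,y,a,b)P^y_b$ and $\sum_a D(x,y,a,b)\hat Q^x_a$ are positive contractions on $\mM$ (since $D\in\{0,1\}$ and the summands form POVMs), so H\"older's inequality $|\tau(XY)|\leq \|X\|_1\|Y\|_\infty$ turns the $L^1$ bound on measurements into a value bound without dependence on answer-set sizes.

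First I would apply Lemma~\ref{lem:l1-l2} to the $(\eps,\mu)$-closeness hypothesis to obtain
\[ \Es{x\sim\mu}\sum_{a\in\mA(x)}\big\|P^x_a - w^*Q^x_aw\big\|_1 \leq \eps+2\sqrt{\eps}. \]
Writing $\hat Q^x_a := w^*Q^x_aw\in\mM$, I would telescope each per-$(x,y,a,b)$ summand as
\begin{align*}
\tau^\mM(P^x_aP^y_b) - \tau^\mN(Q^x_aQ^y_b) =\,& \tau^\mM\!\big((P^x_a-\hat Q^x_a)P^y_b\big) \\
& + \tau^\mM\!\big(\hat Q^x_a(P^y_b-\hat Q^y_b)\big) + \Delta(x,y,a,b),
\end{align*}
with $\Delta := \tau^\mM(\hat Q^x_a\hat Q^y_b) - \tau^\mN(Q^x_aQ^y_b)$. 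For the first two pieces, using that $\sum_b D(x,y,a,b)P^y_b \leq \Id_\mM$ and $\sum_a D(x,y,a,b)\hat Q^x_a \leq w^*w \leq \Id_\mM$, H\"older gives, after summing and averaging over $(x,y)\sim\mu$, a combined contribution of at most $2(\eps+2\sqrt{\eps}) \leq 6\sqrt{\eps}$.

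For the third piece I would expand $ww^* = P-E$ with $E\geq 0$ and $\tau_\infty(E) \leq \eps\tau_\infty(P)$, and use cyclicity together with $Q^x_a,Q^y_b \in \mN = P\mM_\infty P$ to obtain
\[ \tau^\mM(\hat Q^x_a\hat Q^y_b) = \tau_\infty(Q^x_aQ^y_b) - \tau_\infty(Q^x_aEQ^y_b) - \tau_\infty(EQ^x_aQ^y_b) + \tau_\infty(EQ^x_aEQ^y_b). \]
Combined with $\tau^\mN(Q^x_aQ^y_b) = \tau_\infty(Q^x_aQ^y_b)/\tau_\infty(P)$ and $|\tau_\infty(P)-1|\leq \eps/(1-\eps)$, the leading term of $\Delta$ contributes $O(\eps)$ after summing. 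For the first-order $E$-corrections the key computation is a careful Cauchy--Schwarz: setting $A' := \sum_{a,b}D(x,y,a,b)Q^x_aQ^y_b \in \mN$, projectivity of $\{Q^x_a\}$ and $\{Q^y_b\}$ together with $D^2 = D$ yield
\[ \tau^\mN(A'^*A') = \sum_{a,b}D(x,y,a,b)\tau^\mN(Q^x_aQ^y_b) \leq \tau^\mN(P) = 1, \]
and since $\|E\|_\infty \leq 1$ this gives $\tau^\mN(A'^*EA') \leq 1$. Hence
\[ \big|\tau^\mN(EA')\big|^2 \leq \tau^\mN(E)\cdot\tau^\mN(A'^*EA') \leq \tau^\mN(E) \leq \eps, \]
yielding an $O(\sqrt{\eps})$ bound for each first-order correction in $E$. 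A similar argument (with $\tilde A := \sum D\, Q^y_bQ^x_a$ in place of $A'$) handles the other first-order term, and the double-$E$ correction is bounded by $O(\eps)\leq O(\sqrt{\eps})$ by a trace Cauchy--Schwarz with $\|E\|_2^2 \leq \tau^\mN(E) \leq \eps$.

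The hard part will be the third piece: the naive operator-norm estimates would introduce factors of $|\mA(x)|$ or $|\mA(y)|$, and the point is to avoid them by exploiting the projective-sum identity $\sum_a Q^x_a = \sum_b Q^y_b = P$ and the idempotence $D^2 = D$ in the Cauchy--Schwarz computation for $\tau^\mN(A'^*A')$. Adding the three contributions gives $|\omega(G;\strategy) - \omega(G;\strategy')| \leq 8\sqrt{\eps}$ after absorbing constants.
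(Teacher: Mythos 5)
Your proposal is correct and takes essentially the same approach as the paper: reduce to the $L^1$ bound of Lemma~\ref{lem:l1-l2}, telescope the value difference, bound the two ``consecutive'' telescope pieces by $\tau(AB)\leq\tau(|A|)\,\|B\|_\infty$ (giving $2(\eps+2\sqrt{\eps})$), and handle the pull-back error by a Cauchy--Schwarz computation that exploits projectivity of $\{Q^x_a\}$, $\{Q^y_b\}$ together with $D\in\{0,1\}$ so that no $|\mA|$ factors appear.

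The only genuine difference is organizational. The paper first decomposes $\omega(G;\strategy')$ into the pulled-back value $\Es{}\sum D\,\tau(w^*Q^x_aww^*Q^y_bw)$ plus two correction terms involving $P-ww^*$ and bounds each correction directly by one Cauchy--Schwarz step (their displays~\eqref{eq:cv-0}--\eqref{eq:cv-1}); you instead telescope $\tau^\mM(P^x_aP^y_b) - \tau^\mN(Q^x_aQ^y_b)$ directly into two H\"older pieces plus $\Delta$, and then expand $\Delta$ via $ww^*=P-E$ into a leading term plus first- and second-order $E$-corrections. These routes are equivalent: your $\Delta$ is exactly minus the sum of the paper's two correction terms (after accounting for the $\tau_\infty(P)$ normalization), and your Cauchy--Schwarz identity $\tau^\mN(A'^*A')=\sum_{a,b}D\,\tau^\mN(Q^x_aQ^y_b)\le 1$ is precisely the computation hiding inside the paper's derivation of~\eqref{eq:cv-1} (cyclicity kills the cross-terms via projectivity of $\{Q^y_b\}$, then projectivity of $\{Q^x_a\}$ kills the remaining $a'$-sum). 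One small point in your favor: by tracking $\tau^\mN(Q^x_aQ^y_b)=\tau_\infty(Q^x_aQ^y_b)/\tau_\infty(P)$ explicitly, you flag the normalization factor $\tau_\infty(P)$ that the paper's display~\eqref{eq:cv-0} silently drops; this is a harmless imprecision there since $|\tau_\infty(P)-1|=O(\eps)$, but your version makes the accounting transparent. The price is a bit more casework (your four sub-pieces vs.\ the paper's two), and you should double-check that your final constant comes out to $8$ rather than something slightly larger — you get roughly $6\sqrt{\eps}$ from the H\"older pieces plus $\sim2\sqrt{\eps}+O(\eps)$ from $\Delta$, which is $\le 8\sqrt{\eps}$ for $\eps$ bounded away from $1$, and the bound is trivial once $8\sqrt{\eps}\ge 1$, so this is fine.
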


The lemma is well-known, see e.g.~\cite[Lemma 5.28]{ji2020mip}. For convenience we include the proof. 

\begin{proof}
By definition, there exists a projection $P \in \mM_\infty$ and a partial isometry $w \in P \mM_\infty I_\mM$ such that $\mN = P \mM_\infty P$ and
\begin{align}
\omega(G;\strategy') &= \Es{(x,y)\sim\mu} \sum_{a,b} D(a,b,x,y)  \tau\big( Q^x_a \, Q^y_b \big)\notag\\
&=  \Es{(x,y)\sim\mu} \sum_{a,b} D(a,b,x,y)  \tau\big( w^* Q^x_a w\, w^* Q^y_b w\big)\notag\\
&\qquad\qquad+  \Es{(x,y)\sim\mu} \sum_{a,b} D(a,b,x,y)  \tau\big( w^* Q^x_a (P - w\, w^* ) Q^y_b w\big) \notag\\
&\qquad\qquad+ \Es{(x,y)\sim\mu} \sum_{a,b} D(a,b,x,y)  \tau\big(Q^x_a  Q^y_b (P-ww^*)\big)~.\label{eq:cv-0}
\end{align}
Here we used the fact that $Q^x_a \in \mN$ so $P Q^x_a P = Q^x_a$ for all $x,a$. 
For the second term on the right-hand side, we write
\begin{align}
\Big|\Es{(x,y)\sim\mu} \sum_{a,b} D(a,b,x,y) & \tau\big( w^* Q^x_a (P - w\, w^* ) Q^y_b w\big)\Big|
\leq \Es{(x,y)\sim\mu} \Big | \tau\big( \sum_{a,b} D(a,b,x,y)   Q^y_b w w^* Q^x_a (P - w\, w^* ) \big) \Big | \;.\label{eq:XX1}
\end{align}
We then apply the Cauchy-Schwarz inequality
\begin{align}
\eqref{eq:XX1}&\leq \sqrt{ \tau((P - ww^*)^2)} \cdot \sqrt{\Es{(x,y)\sim\mu} \tau\big( \big| \sum_{a,b} D(a,b,x,y)   Q^x_a w w^* Q^y_b \big|^2 \big) } \notag\\
&\leq \sqrt{\eps}\sqrt{ \Es{(x,y)\sim\mu} \tau\big( \sum_{a,b} D(a,b,x,y)   Q^y_b ww^* Q^x_a w w^* Q^y_b \big) } \notag\\
&\leq \sqrt{\eps}\sqrt{ \Es{(x,y)\sim\mu}  \sum_{a,b}  \tau\big(Q^y_b ww^* Q^x_a w w^* Q^y_b \big) } \notag \\
&=\sqrt{\eps}\sqrt{ \Es{(x,y)\sim\mu}  \sum_{b}  \tau\big(Q^y_b ww^* \big) } \notag\\
&\leq \sqrt{\eps}\;.\label{eq:cv-1}
\end{align}
The second line uses that $P - ww^*$ is a positive operator with operator norm at most $1$ for the first term, and that for fixed $x,y$, the measurements $\{ Q^x_a \}_a$ and $\{Q^y_b\}_b$ are projective. The third line is due to $D(x,y,a,b) \in \{0,1\}$. The fourt line is due to $\sum_a Q^x_a = I_\mN$. The fifth line is due to $\sum_a Q^y_b = I_\mN$. The third term on the right-hand side of~\eqref{eq:cv-0} is bounded in the same manner. 

For $x\in \mX$ such that $\mu(x)\neq 0$, denote by $\mu_x$ the (symmetrized) conditional distribution $\mu_x(y)=\frac{1}{2}(\mu(x,y)+\mu(y,x))/\mu(x)$. Then using~\eqref{eq:cv-0} and the preceding bounds,
\begin{align*}
\big|\omega(G;\strategy) - \omega(G;\strategy')\big|
 &\leq 2\sqrt{\eps}+ \Big|\Es{(x,y)\sim\mu} \sum_{a,b} D(a,b,x,y) \big( \tau\big( P^x_a \, P^y_b \big)-\tau\big( w^* Q^x_a w\, w^* Q^y_b w\big)\big)\Big|\\
&\leq  2\sqrt{\eps}+ \Big|\Es{x\sim\mu} \sum_{a}  \tau\Big( \big(P^x_a-w^* Q^x_aw\big) \, \Big( \Es{y\sim \mu_x} \sum_b D(a,b,x,y) P^y_b \Big)\Big)\Big|\\
&\qquad+ \Big|\Es{y\sim\mu} \sum_{b}  \tau\Big(\Big( \Es{x\sim \mu_y} \sum_a D(a,b,x,y) w^* Q^x_a w\Big) \big(P^y_b-w^*Q^y_bw\big) \, \Big)\Big|\\
&\leq 2\sqrt{\eps}+2\Es{x\sim\mu} \sum_{a}  \tau\big(\big| P^x_a-w^*Q^x_aw\big|\big) \\
&\leq 2\sqrt{\eps}+2(\eps + 2\sqrt{\eps}) \leq 8\sqrt{\eps}\;,
\end{align*}
where the third line uses $\tau(AB)\leq\tau(|A|)\|B\|$ and the last line is by Lemma~\ref{lem:l1-l2}.
\end{proof}
	
\subsection{Some simple games}

We introduce games previously used in the literature, that will serve as building blocks. For details on the implementation of these games we refer to~\cite{de2022spectral}.

\paragraph{Commutation game.}
We call \emph{commutation game} the game $G_{\cc}=(\mX_\cc,\mu_\cc,\mA_\cc,D_\cc)$ defined in~\cite[Section 3.1]{de2022spectral}. For convenience we change the notation slightly and denote $x_{X,0}, x_{Z,0} \in \mX_{\cc}$ the two special questions, $x_{\cc,1}$ and $x_{\cc,2}$ respectively. 

\paragraph{Anti-commutation game.}
We call \emph{anti-commutation game} the game $G_\ac=(\mX_\ac,\mu_\ac,\mA_\ac,D_\ac)$ defined in~\cite[Section 3.2]{de2022spectral}. For convenience we change the notation slightly and  denote $x_{X,1}, x_{Z,1} \in \mX_{\ac}$ the two special questions, $x_{\ac,1}$ and $x_{\ac,2}$ respectively. 

\paragraph{Braiding game.}
Finally we introduce a game, which we denote $G_{\dlS}$, that is built from the previous two games and is a based on a more general class of games analyzed in~\cite[Section 3.4]{de2022spectral}.  Using notation from~\cite{de2022spectral}, the game $G_{\dlS}$ is obtained by making the following choices. The game can be constructed from any $E\in\F_2^{k\times n}$. Let 
\begin{equation}\label{eq:dls-sets}
 S_X\,=\,S_Z\,=\,\{E e_i:\,i\in\{1,\ldots,n\}\}\subseteq \field^k\;,
\end{equation}
and let $\mu_{\dlS}$ be the uniform distribution over $\Omega=S_X\times S_Z$. Let $\alpha,\beta$ be the coordinate projections on $\Omega$. Let $\Omega_+ = \{(a,b)\in \Omega:a\cdot b=0\}$ and $\Omega_-=\{(a,b)\in\Omega:a\cdot b=1\}$. Then $G_{\dlS}=(\mX_{\dlS},\mu_{\dlS},\mA_{\dlS},D_{\dlS})$ has question set 
\[ \mX_{\dlS} = \{X,Z\} \cup (\mX_{\cc}\times \Omega_+) \cup (\mX_{\ac} \times \Omega_-) \cup (\{X\}\times S_X) \cup (\{Z\}\times S_Z)\;.\]
The game itself is based on the game described in~\cite[Section 3.4]{de2022spectral}, with a small modification (the addition of the consistency test). For clarity we recall the entire game in Figure~\ref{fig:dlS}. The sets $\mA_{\dlS}$ and the function $D_{\dlS}$ can be inferred from the figure. 

\begin{figure}[!htbp]
  \centering
  \begin{gamespec}
Let $S_X,S_Z\subseteq \field^k$.  Sample $\omega = (\omega_X,\omega_Z)\in S_X \times S_Z $ uniformly at random. Let $\gamma = \omega_X \cdot \omega_Z \in \field$. Execute either of the following tests with probability $1/3$ each. 
    \begin{enumerate}
      \setlength\itemsep{1pt}
    \item \textbf{(Anti-)commutation test:} 
		\begin{enumerate}
		\item If $\gamma=0$ then sample a pair of questions $(x_c,y_c)\sim\mu_\cc$ as in the commutation game. If $x_c = x_{W,0}$ (resp.\ $y_c=x_{W,0}$) for some $W\in \{X,Z\}$ then send $(W,\omega_W)$ to $\alice$ (resp.\ $(W,\omega_W)$ to $\bob$). Otherwise, send $(x_c,\omega)$ to $\alice$ (resp.\ $(y_c,\omega)$ to $\bob$). Accept if and only if their answers are accepted in the commutation game. 
		\item If $\gamma\neq 0$ then do the same but for the anti-commutation game. 
		\end{enumerate} 
		 \item \textbf{Consistency test:} Select $W\in\{X,Z\}$ uniformly at random. Send $W$ to $\alice$ and $(W,\omega_W)$ to $\bob$. Receive $a\in \field^k$ and $b\in \field$ respectively. Accept if and only if $a\cdot \omega_W=b$. 
    \end{enumerate}
  \end{gamespec}
  \caption{The game $G_{\dlS}$ checks (anti)commutation relations between two collections of observables.}
  \label{fig:dlS}
\end{figure}

We recall the following result from~\cite{de2022spectral} about this game. (The result from~\cite{de2022spectral} is more general, and applies to non-uniform measures on $\Omega$. We only need the consequence stated here.) Before stating the result, we introduce the notion of a \emph{qubit test}. 

For a projective measurement $P = \{P_a\}_{a \in \F_2^k}$, we use $\widehat{P}(b)$ to denote the observable
\[\widehat{P}(b) \,=\, \sum_a (-1)^{a\cdot b} P_a\;.\]
For binary outcome measurements $P = \{ P_0, P_1\}$, we write $\widehat{P}$ to denote $P_0 - P_1$. Recall the notation $\sigma^W$ for the Pauli observables introduced in Section~\ref{sec:examples}.

\begin{definition}[Qubit test]
Let $k\in \N$ and $\delta:[0,1]\to\R_+$. 
A \emph{$(k,\delta(\eps))$-qubit test} is a synchronous game $G=(\mX,\mu,\mA,D)$ such there are two sets $S_X,S_Z\subseteq \field^k$ that each span $\field^k$ and an injection $\phi:(\{X\}\times S_X) \cup (\{Z\}\times S_Z) \to \mX$ such that $\mA(\phi({X},a))=\mA(\phi({Z},b))=\field$ for all $a\in S_X$, $b\in S_Z$ and such that the following holds:
\begin{itemize}
\item (Completeness:) There is a synchronous strategy $\strategy = \{P^{x}\}_{x \in \mX}$ for $G$ on $\mM=M_{2^{k}}(\C)\otimes \mH$, for some Hilbert space $\mH$, that succeeds with probability $1$ in $G$ and is such that $\widehat{P}^{\phi({W},a)} = \sigma^W(a)\otimes \Id_\mH$ for every $W\in\{X,Z\}$ and $a\in S_W$. 
\item (Soundness:) Let $\mu'$ denote the (renormalized) restriction of (the marginal of) $\mu$ to the image of $\phi$ in $\mX$. \tnote{Added this condition as otherwise I think that a $k$-qubit test could easily be a $1$-qubit test in diguise:}Then there are bases $\{e_{X,i}\}\subseteq S_X$ and $\{f_{X,i}\}\subseteq{S_Z}$ such that $\mu'(\phi((X,e_{X,i}))),\mu'(\phi((Z,e_{Z,i})))\geq C/k$ for all $i$ and some constant $C>0$, and moreover the following holds. 
Any synchronous strategy  $\strategy = \{P^{x}\}_{x \in \mX}$ on $(\mM,\tau)$ for $G$ that succeeds with probability $1-\eps$ for some $\eps\geq 0$ is $(\delta(\eps),\mu')$-close to a strategy on some algebra $(M_{2^{k}}(\C)\otimes \mN,\tr\otimes \tau')$ where $(\mN,\tau')$ is a tracial sub-algebra of $\mM_\infty$ and such that
\[\widehat{P}^{\phi({W},a)} = \sigma^W(a)\otimes \Id_\mN\;.\]
\end{itemize}
\end{definition}

The terminology ``qubit test'' is motivated by the notion of a \emph{qubit} as introduced in~\cite{chao2017overlapping}. Informally, a qubit is a copy of the space $\C^2$ together with a pair of anticommuting observables $X,Z$ acting on it. A qubit test is then a test that forces any successful strategy for the players in it to ``contain'', as a subset of its measurement operators, a representation of (generators of) the algebra of $k$ qubits. 

\begin{theorem}[Corollary 3.9 in~\cite{de2022spectral}]\label{thm:dls-braid}
Suppose that $E\in \F_2^{k\times n}$ is such that the rows of $E$ span an $[n,k,d]_2$ linear code. The game $G_{\dlS}$ is a $(k,O(\eps))$-qubit test, where the sets $S_X$ and $S_Z$ are as in~\eqref{eq:dls-sets} and $\phi((X,a))=(X,a)$ and $\phi((Z,b))=(Z,b)$ and the $O(\eps)$ hides a (linear) dependence on $k/n$ and a (quadratic) dependence on $d/n$.  
\end{theorem}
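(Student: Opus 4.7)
For completeness, we exhibit the honest Pauli strategy on $\mM=M_{2^k}(\C)\otimes B(\mH)$ for a maximally entangled reference state: on the two ``full'' questions $W\in\{X,Z\}$ Alice measures $\{\sigma^W_a\otimes\Id\}_{a\in\F_2^k}$; on $(W,\omega_W)$ Bob returns the outcome of the binary measurement $\tfrac12(\Id\pm\sigma^W(\omega_W))\otimes\Id$; on the commutation/anticommutation questions $(x_c,\omega)$ and $(x_a,\omega)$ both players play the honest strategies of $G_{\cc}$ and $G_{\ac}$ (which we may assume are qubit tests themselves by the analysis of~\cite{de2022spectral}) with those special answers substituted by $\sigma^X(\omega_X)$ and $\sigma^Z(\omega_Z)$. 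The consistency test passes because $a\cdot\omega_X$ is precisely the outcome correlation of measuring $\sigma^X(\omega_X)$ after measuring $\{\sigma^X_a\otimes\Id\}$; the (anti)commutation tests pass because $\sigma^X(\omega_X)\sigma^Z(\omega_Z)=(-1)^{\omega_X\cdot\omega_Z}\sigma^Z(\omega_Z)\sigma^X(\omega_X)$.

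For soundness, fix a synchronous strategy $\strategy=\{P^x\}_{x\in\mX}$ on $(\mM,\tau)$ of value $\geq 1-\eps$, and write $A^W_{\omega_W}=\widehat{P}^{(W,\omega_W)}$ for the $\pm 1$-observables on the ``single-index'' questions and $\{Q^W_a\}_{a\in\F_2^k}$ for Alice's projective measurement on $W\in\{X,Z\}$. The key observation is that the map $\phi_W\colon\Z_2^k\to\mU(\mM)$ defined by $\phi_W(c)=\widehat{Q^W}(c)=\sum_a(-1)^{a\cdot c}Q^W_a$ is \emph{automatically} a homomorphism, because $\{Q^W_a\}$ is projective. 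Success on the consistency test, after a Fourier transform over Alice's $\F_2^k$-valued answers and Bob's $\F_2$-valued answer, yields
\[ \Es{W\in\{X,Z\}}\Es{\omega_W\in S_W}\big\|\phi_W(\omega_W)-A^W_{\omega_W}\big\|_\tau^2=O(\eps). \]
Success on the (anti)commutation sub-test combined with the standard soundness analysis of $G_{\cc}$ and $G_{\ac}$ from~\cite{de2022spectral} (applied in parallel over $\omega$) gives
\[ \Es{\omega_X\in S_X,\,\omega_Z\in S_Z}\big\|A^X_{\omega_X}A^Z_{\omega_Z}-(-1)^{\omega_X\cdot\omega_Z}A^Z_{\omega_Z}A^X_{\omega_X}\big\|_\tau^2=O(\eps). \]
Combining the two displays via the triangle inequality and the fact that all operators are contractions, we deduce
\[ \Es{\omega_X\in S_X,\,\omega_Z\in S_Z}\big\|\phi_X(\omega_X)\phi_Z(\omega_Z)-(-1)^{\omega_X\cdot\omega_Z}\phi_Z(\omega_Z)\phi_X(\omega_X)\big\|_\tau^2=O(\eps). \]

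Now we invoke Theorem~\ref{thm:dls-gap} with $\mu$ the uniform measure on $S_X=S_Z=\{Ee_i\}_{i=1}^n$. Its inverse spectral gap is
\[ \kappa=\max_{u\neq 0}\frac{1}{1-\Es{i}(-1)^{u\cdot Ee_i}}=\max_{u\neq 0}\frac{n}{2\,\mathrm{wt}(u^TE)}\,\leq\,\frac{n}{2d}, \]
since for $u\neq 0$ the vector $u^TE$ is a nonzero codeword of the $[n,k,d]_2$ code and hence has Hamming weight at least $d$. The theorem produces a projection $P\in\mM_\infty$, homomorphisms $U_X,U_Z\colon\Z_2^k\to\mU(\mN)$ on $\mN=P\mM_\infty P$ satisfying $U_X(a)U_Z(b)=(-1)^{a\cdot b}U_Z(b)U_X(a)$ exactly, and a partial isometry $w$ witnessing that $\phi_W$ is $O((n/d)^2\eps)$-close to $U_W$.

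The pair $(U_X,U_Z)$ defines a unitary representation of the Pauli group $\pauli_k$ on $\mN$; since every such representation decomposes into copies of the unique $2^k$-dimensional irreducible representation, we obtain an isomorphism $\mN\simeq M_{2^k}(\C)\otimes\mN'$ under which $U_W(a)=\sigma^W(a)\otimes\Id_{\mN'}$. To finish, we must convert closeness of the \emph{Alice} observables $\phi_W(\omega_W)$ to closeness of \emph{Bob's} binary measurements $\{P^{(W,\omega_W)}_b\}$: this follows by combining the consistency bound $\phi_W(\omega_W)\approx A^W_{\omega_W}$ with the identifications $A^W_{\omega_W}=\widehat{P}^{(W,\omega_W)}$ and then invoking Lemma~\ref{lem:pull-back} to obtain a bona fide projective measurement on $\mM$ close to the pulled-back spectral projectors of $\sigma^W(\omega_W)\otimes\Id_{\mN'}$. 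The hardest part is the bookkeeping: making sure that the $O((n/d)^2\eps)$ error from Theorem~\ref{thm:dls-gap} dominates, and that the orthogonalization step in Lemma~\ref{lem:pull-back} does not spoil the quadratic dependence; both work out because the Pauli observables already satisfy exact relations after the application of Theorem~\ref{thm:dls-gap}, so no further stability step is required.
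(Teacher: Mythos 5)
The paper does not prove \Cref{thm:dls-braid}: it is imported verbatim as Corollary~3.9 of~\cite{de2022spectral}, and no proof (not even a sketch) appears in this text. So there is no ``paper's own proof'' to compare against; what you have written is a reconstruction of the argument from de la Salle's paper.

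As a reconstruction it is largely sound and captures the right ingredients: the observation that $\phi_W(c)=\widehat{Q^W}(c)$ is an \emph{exact} homomorphism because $\{Q^W_a\}$ is projective; the consistency test to transfer the error from Bob's binary observables $A^W_{\omega_W}$ to Alice's $\phi_W(\omega_W)$; the (anti)commutation sub-tests together with the soundness of $G_\cc,G_\ac$ to extract approximate Pauli braiding between $\phi_X$ and $\phi_Z$ on average over $\omega$; the appeal to \Cref{thm:dls-gap} with $\mu$ uniform on $\{Ee_i\}_{i=1}^n$; and the spectral-gap bound $\kappa\leq n/(2d)$, which is correct --- $1-\Es{i}(-1)^{u\cdot Ee_i}=2\,\mathrm{wt}(u^TE)/n\geq 2d/n$ for $u\neq 0$ since $u^TE$ is a nonzero codeword. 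This also correctly explains where the ``quadratic dependence on $d/n$'' in the statement comes from, since \Cref{thm:dls-gap} produces $O(\kappa^2\eps)=O((n/d)^2\eps)$.

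One step is slightly off. At the end you invoke \Cref{lem:pull-back} ``to obtain a bona fide projective measurement on $\mM$ close to the pulled-back spectral projectors,'' but you already \emph{have} the projective measurement on $\mM$ --- it is $\{P^{(W,\omega_W)}_b\}_b$ itself, and \Cref{def:close-meas} only asks you to bound $\Es{W,\omega_W}\sum_b\|P^{(W,\omega_W)}_b - w^*(\sigma^W_b(\omega_W)\otimes\Id_{\mN'})w\|_\tau^2$. That bound follows directly from the triangle inequality chain $P^{(W,a)}_b = \tfrac12(\Id+(-1)^bA^W_a)$, $A^W_a\approx\phi_W(a)\approx w^*U_W(a)w$, together with $\tau(\Id_\mM-w^*w)\leq\delta$ to pass from $\tfrac12\Id_\mM$ to $\tfrac12 w^*\Id_\mN w$. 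No additional orthogonalization step is required, which is why the quadratic loss is not compounded. Otherwise the structure of the argument matches what the cited corollary in~\cite{de2022spectral} does.
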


The goal in the remaining sections is to design a qubit test where the size of the question set is $\polylog(k)$, as opposed to $\Omega(k^2)$ here. 

\subsection{The code game}
\label{sec:code-game}

In this section we associate a game $G_{\code,M}$ to any $[n,k,d]_2$ code $\code$ and $r$-local tester $M=(h,\nu)$ for it. In the game, one player is asked to provide an assignment to all generators in the support of a randomly chosen row of $h$, such that this assignment satisfies the check enforced by that row. The other player is asked to provide an assignment to a single of these variables, and checked for consistency with the first player. The formal definition follows.  

\begin{definition}
Let $\code$ be an $[n,k,d]_2$ linear code and $M=(h,\nu)$ an $r$-local tester for $\code$ such that $h\in \F_2^{m\times n}$. The game $G_{\code,M}$ is defined as follows. We set 
\[\mX = \big\{ \{\eq\}\times\{1,\ldots,m\} \sqcup\{\var\}\times \{1,\ldots,n\}\big\}\;,\]
and define a distribution $\mu_G$ on $\mX \times \mX$ by 
\[\mu_G((\var,i),(\eq,j))=\mu((\eq,j),(\var,i)) = \frac{1}{2}\nu(j)\frac{1}{|h_{j\cdot}|} 1_{h_{ji}=1}\;,\]
where $|h_{j\cdot}|$ denotes the Hamming weight of the $j$-th row of $h$ (if this is $0$, then $\mu_G((\var,i),(\eq,j))$ is set to $0$ for all $i$). For any $j$, we let $\mA((\eq,j))=\F_2^{|h_{j\cdot}|}$, and for any $i\in\{1,\ldots,n\}$, $\mA((\var,i))=\F_2$. Finally let $D((\eq,j),(\var,i),a,b)=1_{h_{j\cdot} a=0} 1_{a_i=b}$, where $h_{j\cdot}a$ is naturally computed as the sum, in $\F_2$, of all entries of $a$.  Then $G_{\code,M}=(\mX,\mu_G,\mA,D)$. 
\end{definition}

We show the following. For $\code$ an $[n,k,d]_2$ linear code and $M=(h,\nu)$ an $r$-local tester for $\code$, recall the presentation~\eqref{eq:def-gh-pres} of $G(h)$. We define a distribution $\mu$ on the relations defining that presentation as follows. First, sample a random $j\in\{1,\ldots,m\}$ according to $\nu$, and uniformly random $i,i'\in\{1,\ldots,n\}$, conditioned on $h_{ji}=h_{ji'}=1$ and $i\neq i'$. Then, with probability $1/3$ we return the relation $x_i^2=e$, with probability $1/3$ we return the relation $R_i$, and with probability $1/3$ we return $R'_{jii'}$. 

\begin{proposition}\label{prop:rep-game}
Let $\code$ be an $[n,k,d]_2$ linear code, $M=(h,\nu)$ an $r$-local tester for $\code$, and $\mu$ the distribution defined above. Let $\strategy = \{P^{x}\}_{x \in \mX}$ be a strategy for $G_{\code,M}$ on $(\mN,\tau^\mN)$ such that $\omega^*(G_{\code,M};\strategy)\geq 1-\eps$. Then $\phi:x_i\mapsto {\widehat{P}}^{(\var,i)}$, for $i\in\{1,\ldots,n\}$, is an $(O(r\eps),\mu)$-homomorphism of the presentation~\eqref{eq:def-gh-pres} of $G(h)$.
\end{proposition}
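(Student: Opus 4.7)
The plan is to unpack the hypothesis that $\omega(G_{\code,M};\strategy)\geq 1-\eps$ into two operator-level inequalities, and then verify separately that $\phi$ approximately satisfies each of the three families of relations $\{R^\sq_i\}$, $\{R_j\}$, $\{R'_{jii'}\}$ appearing in the presentation~\eqref{eq:def-gh-pres}. Write $Q^{(\eq,j)} = \{Q^{(\eq,j)}_a\}_{a\in\F_2^{|h_{j\cdot}|}}$ and $P^{(\var,i)}=\{P^{(\var,i)}_b\}_{b\in\F_2}$ for the projective measurements of $\strategy$. For each $j$ and $i\in I_j:=\{i:h_{ji}=1\}$, define the $\pm 1$ observable $S^{(\eq,j)}_i = \sum_a (-1)^{a_i}Q^{(\eq,j)}_a$; the key structural fact is that for fixed $j$ the family $\{S^{(\eq,j)}_i\}_{i\in I_j}$ consists of pairwise commuting involutions (since all are built from the same projective measurement), and $\prod_{i\in I_j} S^{(\eq,j)}_i = \sum_a (-1)^{h_{j\cdot}\cdot a}Q^{(\eq,j)}_a$.

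From $\omega(G_{\code,M};\strategy)\geq 1-\eps$ I would derive two inequalities. Summing the decision predicate over $b$ for the parity component gives
\[\mathbb{E}_j\,\Big\|\prod_{i\in I_j} S^{(\eq,j)}_i - \Id\Big\|_\tau^2 \,=\, 4\,\mathbb{E}_j\Pr[h_{j\cdot}\cdot a\neq 0]\,\leq\, 4\eps,\]
where the first equality uses the projectivity of $Q^{(\eq,j)}$. For consistency, a direct computation using synchronicity (more precisely, the identity $\|A-B\|_\tau^2 = 4\Pr[a_i\neq b]$ for the $\pm 1$ observables $A=S^{(\eq,j)}_i$, $B=\widehat{P}^{(\var,i)}$) yields
\[\mathbb{E}_{j,\,i\in I_j}\,\big\|\widehat{P}^{(\var,i)}-S^{(\eq,j)}_i\big\|_\tau^2 \,\leq\, 4\eps.\]

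With these two inequalities in hand, the three relation types are handled as follows. The involution relations $R^\sq_i$ hold exactly, since $(\widehat{P}^{(\var,i)})^2=\Id$. For the commutation relations $R'_{jii'}$, I use the identity $[A,B]-\Id = AB(AB-BA)$ for involutions $A,B$, together with the fact that $S^{(\eq,j)}_i$ and $S^{(\eq,j)}_{i'}$ commute exactly; a triangle-inequality calculation then gives $\|[\widehat{P}^{(\var,i)},\widehat{P}^{(\var,i')}]-\Id\|_\tau^2 = O\bigl(\|\widehat{P}^{(\var,i)}-S^{(\eq,j)}_i\|_\tau^2 + \|\widehat{P}^{(\var,i')}-S^{(\eq,j)}_{i'}\|_\tau^2\bigr)$, which averages to $O(\eps)$ under $\mu$. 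For the parity relations $R_j$, I telescope $\prod_{i\in I_j}\widehat{P}^{(\var,i)} - \prod_{i\in I_j} S^{(\eq,j)}_i$ as a sum of $|I_j|\leq r$ single-factor replacements, each of which is a unitary conjugate of some $\widehat{P}^{(\var,i)}-S^{(\eq,j)}_i$, giving a bound on the difference in $\|\cdot\|_\tau$; combining with the triangle inequality against $\prod S^{(\eq,j)}_i-\Id$ and taking expectation over $j$ produces the desired bound on $\mathbb{E}_j\|\phi(R_j)-\Id\|_\tau^2$.

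The only delicate point is the quantitative bookkeeping in the parity step: the telescoping produces a sum of up to $r$ unitary-norm terms, so a single naive Cauchy–Schwarz step costs a factor of $r$ when one squares, which is how the announced scaling in $r$ emerges (this is where any discrepancy between $r\eps$ and $r^2\eps$ comes from, and one must be careful that the distribution $\mu$ averages $i$ uniformly over $I_j$, not over $[n]$, to match the consistency average given by the game). The remaining bookkeeping consists of verifying that $\mu$ indeed puts equal weight $\tfrac13$ on each family of relations so that all three contributions aggregate into the stated $O(r\eps)$ bound.
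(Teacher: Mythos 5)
Your proof is correct and follows essentially the same route as the paper's: the paper also builds the $\pm1$ observables from the $\eq$-measurements (it calls them $R^{(\eq,j)}_i$, your $S^{(\eq,j)}_i$), uses the winning-probability inequality (in combined rather than split form) to bound their distance from $\widehat{P}^{(\var,i)}$, and handles each of the three relation families by the same substitution/telescoping/Cauchy–Schwarz steps. The $r$-versus-$r^2$ concern you flag is a genuine imprecision that is also present in the paper's own proof of the parity step (the $(r+1)$ factor multiplies a sum of $r$ consistency errors), but it is harmless for the downstream use of the proposition.
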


\begin{proof}
Let $\strategy$ be a synchronous strategy for $G_{\code,M}$  in  $(\mM,\tau)$ that succeeds with probability at least $1-\eps$. For $i\in\{1,\ldots,n\}$ let 
\[ \phi(x_i)\,=\,{\widehat{P}}^{(\var,i)}\,=\, P^{(\var,i)}_0-P^{(\var,i)}_1\;.\]
Then by definition $\phi(x_i)^2=\Id$ for all $i$. Recalling~\eqref{eq:def-gh-pres}, it remains to verify the relations $R_j$ and $R'_{jii'}$, for $1\leq i<i'\leq n$ and $1\leq j \leq m$. To show these relations, first express the assumption that $\strategy$ succeeds in $G_{\code,M}$ as
\begin{align}
1-\eps &\leq \Es{j\sim \nu} \Es{i: h_{ji}=1} \sum_{a:h_{j\cdot} a=0}\tau\big(P^{(\eq,j)}_a P^{(\var,i)}_{a_i}\big)\label{eq:rep-game-1}
\end{align}
For an equation $j$ and variables $i,i'$, let 
\[ R^{(\eq,j)}_{ii'} = \sum_a (-1)^{a_i+a_{i'}} P^{(\eq,j)}_a\qquad\text{and}\qquad R^{(\eq,j)}_{i} = \sum_a (-1)^{a_i}P^{(\eq,j)}_a\;,\]
so that, using that $\{P^{(\eq,j)}_a\}$ is a projective measurement, 
\[R^{(\eq,j)}_{ii'} \,=\, R^{(\eq,j)}_{i}R^{(\eq,j)}_{i'}\,=\,R^{(\eq,j)}_{i'}R^{(\eq,j)}_{i}\;.\]
Using this we compute
\begin{align*}
\Es{(j,i,i')\sim\mu} \|R'_{jii'}-\Id\|_\tau^2 &= \Es{(j,i,i')} \big\| \widehat{P}^{(\var,i)}\widehat{P}^{(\var,i')} - \widehat{P}^{(\var,i')}\widehat{P}^{(\var,i)}\big\|_\tau^2\\
&= \Es{(j,i,i')} \big\| \big(\widehat{P}^{(\var,i)} - R^{(\eq,j)}_{i}\big)\widehat{P}^{(\var,i')} + R^{(\eq,j)}_{i}\big(\widehat{P}^{(\var,i')}-R^{(\eq,j)}_{i'}\big)  \\
&\hskip3cm - R^{(\eq,j)}_{i'}\big(\widehat{P}^{(\var,i)}-R^{(\eq,j)}_{i}\big)- \big(\widehat{P}^{(\var,i')}- R^{(\eq,j)}_{i'}\big)\widehat{P}^{(\var,i)}  \big\|_\tau^2\\
&\leq 2\Big( \Es{(j,i,i')} \big\| \widehat{P}^{(\var,i)} - R^{(\eq,j)}_{i}\big\|_\tau^2 +\Es{(j,i,i')} \big\| \widehat{P}^{(\var,i')} - R^{(\eq,j)}_{i'}\big\|_\tau^2\Big)\\
&= 2\Big(4-2\Big(\Es{(j,i,i')} \tau\big( \widehat{P}^{(\var,i)}  R^{(\eq,j)}_{i}\big) +\tau\big(\widehat{P}^{(\var,i')}  R^{(\eq,j)}_{i'}\big)\Big)\Big)\\
&=2\Big(8-8 \Es{j\sim \nu} \Es{i:h_{ji}=1}\sum_{a} \tau\big( \widehat{P}^{(\var,i)}_{a_i}  P^{(\eq,j)}_{a}\big)\Big)\;.
\end{align*}
Here we abused notation slightly and denoted $\Es{(j,i,i')}$ the expectation for a relation $R'_{jii'}$ sampled according to $\mu$, conditioned on such a relation being sampled. The third line is the Cauchy-Schwarz inequality and uses that $\widehat{P}^{(\var,i)}$ and $R^{(\eq,j)}_{i}$ are observables, hence square to identity; the fourth line expands the norms and also uses this fact; and the fifth line uses that for observables $A=A0-A_1$ and $B=B_0-B_1$, $AB=\Id-2(A_0B_0+A_1B_1)$. The last line also uses that the marginal of $(j,i,i')\sim\mu$ on $(j,i)$ or $(j,i')$ are identical and match the distribution indicated in the last line. Using~\eqref{eq:rep-game-1} we deduce that  
\[ \Es{(j,i,i')\sim\mu} \|R'_{jii'}-\Id\|_\tau^2 \,\leq\, 16\eps\;.\]
Now we consider the relations $R_j$. For $j\in\{1,\ldots,m\}$ we denote $i_1,\ldots,i_r$ the indices such that $h_{ji}=1$ (assume for simplicity of notation that there are exactly $r$). Then 
\begin{align*}
\Es{j} \|R_{j}-\Id\|_\tau^2
 &= \Es{j} \big\| \widehat{P}^{(\var,i_1)}\cdots \widehat{P}^{(\var,i_r)}-\Id\big\|_\tau^2\\
&\leq (r+1)\Big(\Es{j} \big\| R^{(\eq,j)}_{i_1}\cdots R^{(\eq,j)}_{i_r}-\Id\big\|_\tau^2 + \sum_{t=1}^r \big\|R^{(\eq,j)}_{i_t}-\widehat{P}^{(\var,i_t)}\big\|_\tau^2\Big)\\
&\leq (r+1)\big(2-2\Es{j}\tau\big( R^{(\eq,j)}_{i_1}\cdots R^{(\eq,j)}_{i_r}\big)\big) + O(r\eps)\\
&\leq O(r\eps)\;,
\end{align*}
where the second line follows from the triangle inequality and using a telescoping sum, the third line uses the definition of $R$ and~\eqref{eq:rep-game-1}, and the last line again uses the definition of $R$ and~\eqref{eq:rep-game-1}. This concludes the proof. 
\end{proof}

\subsection{Braiding the code test}
\label{sec:braiding}

\begin{figure}[!htbp]
  \centering
  \begin{gamespec}
Let $M=(h,\nu)$ be an $r$-local tester for the $[n,k,d]_2$ code $\code$.  Execute either of the following tests with probability $1/3$ each. 
    \begin{enumerate}
      \setlength\itemsep{1pt}
    \item \textbf{Code test:} Sample $W\in \{X,Z\}$ uniformly at random. Execute the code game $G_{\code,M}$ with both players, prepending the symbol $W$ to all questions (which now take the form $(W,\var,i)$ or $(W,\eq,j)$). 
		
    \item \textbf{(Anti-)commutation test:} Sample $(i_X,i_Z)\in \{1,\ldots,n\}^2 $ uniformly at random. Let $\omega = (E_\code e_{i_X}, E_\code e_{i_Z})$ and $\gamma =  (E_\code e_{i_X}) \cdot(E_\code e_{i_Z}) \in \field$. 
		\begin{enumerate} 
		\item If $\gamma=0$ then sample a pair of questions $(x_c,y_c)\sim\mu_\cc$ as in the commutation game. Send $(x_c,\omega)$ to $\alice$ and $(y_c,\omega)$ to $\bob$. Accept if and only if their answers are accepted in the commutation game. 
		\item If $\gamma\neq 0$ then do the same but for the anti-commutation game. 
		\end{enumerate} 
		 \item \textbf{Consistency test:} Sample $(i_X,i_Z)\in \{1,\ldots,n\}^2 $ and $W\in \{X,Z\}$ uniformly at random. Let $\omega=(E_\code e_{i_X}, E_\code e_{i_Z})$ and $\gamma = (E_\code e_{i_X}) \cdot(E_\code e_{i_Z}) \in \field$. Send $(W,\var,i_W)$ to $\alice$ and $(x_{W,\gamma},\omega)$ to $\bob$, where $x_{W,\gamma}$ is a question from the (anti-)commutation game. Receive $a\in \field$ and $b\in \field$ respectively. Accept if and only if $a=b$. 
    \end{enumerate}
  \end{gamespec}
  \caption{The braiding test over $\code$ verifies that the players respond consistently with a uniformly random codeword from $\code$. $E_\code \in \F_2^{k\times n}$ is a generating matrix for $\code$, and for $i\in\{1,\ldots,n\}$ we let $e_i$ be the $i$-th canonical basis vector of $\F_2^n$.}
  \label{fig:braiding-test}
\end{figure}

Let $\code$ be an $[n,k,d]_2$ linear code and $M=(h,\nu)$ an $r$-local tester for $\code$. We let $E_\code\in\F_2^{k\times n}$ be a generating matrix for $\code$, i.e.\ $E_\code$ is such that its rows are linearly independent and span the codespace. For convenience we assume throughout that $E_\code$ has no repeated columns. 
 The braiding test constructed from $\code$ and $M$ is a synchronous game described in Figure~\ref{fig:braiding-test}. The test combines two independent copies of the code game from Section~\ref{sec:code-game} with appropriate commutation and anti-commutation sub-tests. The braiding test is designed to force any successful strategy in it to be close, in some sense, to a representation of the Pauli group generated by observables $\sigma^X(a)$ and $\sigma^Z(b)$, $a,b\in \F_2^k$ (recall the notation from~\eqref{eq:def-pauli-2}). This is shown in the following theorem. 

\begin{theorem}\label{thm:braiding}
Let $\mC$ be a class of tracial von Neumann algebras. Let $\code$ be an $[n,k,d]_2$ linear code and $M=(h,\nu)$ an $r$-local tester for $\code$. Suppose that the presentation $G(h)$ in~\eqref{eq:def-gh-pres} is such that $G(h)=\Z_2^k$ and furthermore this presentation is $(\delta,\nu_R,\nu_S,\mC)$-stable.\footnote{The distributions $\nu_R$ and $\nu_S$ are defined from $\nu$ as described right after Definition~\ref{def:code-test}.} Then the braiding test over $\code$ is a $(k,\delta')$-qubit test with sets $S_X=S_Z=\{E_\code e_i:\,i\in\{1,\ldots,n\}\}\subseteq \field^k$, map $\phi(W,E_\code e_i)=(W,\eq,i)$ and error function $\delta' = O(\delta^{1/2}(6\eps))$.
\end{theorem}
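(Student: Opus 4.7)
The proof naturally splits into completeness and soundness. Completeness is straightforward: define a perfect strategy on $\mM = M_{2^k}(\C)$ in which the player, upon receiving $(W,\var,i)$, measures $\sigma^W(E_\code e_i)$; upon receiving $(W,\eq,j)$, jointly measures the observables $\{\sigma^W(E_\code e_i)\}_{i:h_{ji}=1}$ (they commute because $\sigma^W$ is a homomorphism on $\F_2^k$); and for the (anti-)commutation questions, uses the completeness strategies from $G_\cc$ and $G_\ac$. The code checks pass because $\sum_i h_{ji} E_\code e_i = E_\code h_j^T = 0$ in $\F_2^k$ (since $hE_\code^T=0$); the (anti-)commutation and consistency sub-tests pass by construction of the Pauli observables and the definition of $\gamma = (E_\code e_{i_X})\cdot(E_\code e_{i_Z})$.

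For soundness, suppose $\strategy=\{P^x\}$ achieves value $\geq 1-\eps$ on $(\mM,\tau)$. Each of the three sub-tests is then passed with probability $\geq 1-3\eps$, and for each $W\in\{X,Z\}$ the code sub-test restricted to the $W$-branch is passed with probability $\geq 1-6\eps$. Applying Proposition~\ref{prop:rep-game} to each branch produces, for $W\in\{X,Z\}$, a map $\phi_W\colon x_i\mapsto \widehat{P}^{(W,\var,i)}$ that is an $(O(r\eps),\nu_R)$-homomorphism of the presentation~\eqref{eq:def-gh-pres} of $G(h)$. The stability hypothesis then yields, for each $W$, a partial isometry $w_W$ and an exact homomorphism $\psi_W\colon G(h)=\Z_2^k \to \mU(\mN_W)$ with $\nu_S$-average distance at most $\delta(O(r\eps))$ from $\phi_W$. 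By passing to a common dilation (first lift $\phi_Z$ into $\mN_X$ via $w_X$ and then apply stability a second time), we may assume both $\psi_X$ and $\psi_Z$ live on the same algebra $\mN$, each close to $\phi_X,\phi_Z$ through a single partial isometry $w$.

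Next we extract a braiding relation for $\psi_X,\psi_Z$. The (anti-)commutation sub-test together with Theorem~\ref{thm:dls-braid} for the internal commutation and anti-commutation games gives
\[
\Es{i_X,i_Z}\bigl\|Q^{X,\omega}_{\omega}Q^{Z,\omega}_{\omega} - (-1)^\gamma Q^{Z,\omega}_{\omega}Q^{X,\omega}_{\omega}\bigr\|_\tau^2 \,=\, O(\eps)
\]
for appropriate observables $Q^{W,\omega}$ attached to the special questions of the sub-games. The consistency sub-test, via Lemma~\ref{lem:l1-l2} applied to the induced projective measurements, forces these $Q^{W,\omega}$ to agree (in $L^2$) with $\phi_W(x_{i_W})$ up to error $O(\eps)$. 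Combining with the $\psi_W\approx\phi_W$ bound and using that the map $i\mapsto E_\code e_i$ pushes the uniform distribution on $\{1,\ldots,n\}$ to a measure $\mu$ on $\F_2^k$, we obtain
\[
\Es{a,b\sim\mu}\bigl\|\psi_X(a)\psi_Z(b) - (-1)^{a\cdot b}\psi_Z(b)\psi_X(a)\bigr\|_\tau^2 \,=\, O\bigl(\delta(O(r\eps))\bigr).
\]

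At this point we invoke Theorem~\ref{thm:dls-gap}. The inverse spectral gap $\kappa$ of $\mu$ is controlled by the code distance: for any nonzero $a\in\F_2^k$, the vector $aE_\code\in\code$ is a nonzero codeword and hence has Hamming weight $\geq d$, giving $1-\Es{b\sim\mu}(-1)^{a\cdot b}\geq 2d/n$, i.e.\ $\kappa\leq n/(2d)$. Theorem~\ref{thm:dls-gap} then produces a pair of homomorphisms $U_X,U_Z\colon\Z_2^k\to\mU(\mN')$ satisfying the exact braiding relation $U_X(a)U_Z(b)=(-1)^{a\cdot b}U_Z(b)U_X(a)$, with the appropriate closeness to $\psi_X,\psi_Z$. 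Such a pair generates a representation of the Pauli group $\pauli_k$, so by standard structure theory $\mN'\simeq M_{2^k}(\C)\otimes \mN''$ and $U_W(a) = \sigma^W(a)\otimes\Id_{\mN''}$. Composing the partial isometries (and applying Lemma~\ref{lem:pull-back} to go from unitaries to the corresponding projective measurements on $\mM$) gives the required closeness between $\strategy$ and the canonical Pauli strategy; tallying the errors, which compound as $O(\delta(O(r\eps)))$ with an extra $\sqrt{\cdot}$ lost when converting $L^2$-closeness of observables to closeness of their spectral projections, yields $\delta'=O(\delta^{1/2}(6\eps))$ (up to factors depending on the code parameters).

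The main technical obstacle is the careful bookkeeping across several algebra changes: $\phi_W$ is only an approximate homomorphism on $\mM$, $\psi_W$ is exact on $\mN_W$, and the Pauli representation lives on yet another algebra $\mN'$. Each transfer introduces a partial isometry, and the braiding condition must be propagated from $\mM$ to $\mN'$ without losing control of the (non-uniform) distribution $\mu$ on $\F_2^k$ induced by $E_\code$. The fact that $\mu$ has spectral gap $\Omega(d/n)$ by the code distance is the essential ingredient that lets the final application of Theorem~\ref{thm:dls-gap} close the argument with error depending only on the stability modulus $\delta$.
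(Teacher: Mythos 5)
Your proof takes a genuinely different route for the soundness direction. The paper pulls the exact $\Z_2^k$-representations back to $\mM$ via Lemma~\ref{lem:pull-back}, defines a synchronous strategy for the auxiliary game $G_{\dlS}$, and invokes Theorem~\ref{thm:dls-braid} as a black box. You instead aim to extract an approximate Pauli braiding relation directly for the recovered $\Z_2^k$-homomorphisms and close with Theorem~\ref{thm:dls-gap}, which is essentially the representation-level lemma underlying Theorem~\ref{thm:dls-braid}. Your observation that the code distance controls the spectral gap of the pushforward measure $\mu$ on $\F_2^k$, namely that any nonzero $a$ gives a nonzero codeword $E_\code^T a$ of weight at least $d$ so that $\kappa\leq n/(2d)$, is a nice explicit ingredient that the paper absorbs implicitly via the footnote to Theorem~\ref{thm:dls-braid}. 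If executed correctly, this is arguably a cleaner packaging because it avoids re-introducing the game $G_{\dlS}$.

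That said, your ``common dilation'' step is a real gap. You write that one can ``lift $\phi_Z$ into $\mN_X$ via $w_X$ and then apply stability a second time'' so that $\psi_X$ and $\psi_Z$ live on the same algebra through a single partial isometry. This does not work as stated: $w_X$ is a partial isometry, not a unitary, so conjugation by $w_X$ does not send unitaries to unitaries (nor homomorphisms to homomorphisms), and a second application of stability would introduce yet another partial isometry rather than collapsing to one. The correct mechanism --- and the paper's choice --- is to pull both exact representations \emph{down to} the original algebra $\mM$ via Lemma~\ref{lem:pull-back}, obtaining exact projective measurements $\{Q^X_u\}$, $\{Q^Z_u\}$ on $\mM$ indexed by codewords, hence exact homomorphisms $\widehat{R^X},\widehat{R^Z}\colon\Z_2^k\to\mU(\mM)$ on the common algebra. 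Once you have that, establishing the averaged braiding relation for $\widehat{R^X},\widehat{R^Z}$ from the (anti-)commutation and consistency sub-tests and feeding it into Theorem~\ref{thm:dls-gap} (with your $\kappa$ estimate) does close the argument. A minor further slip: the (anti-)commutation sub-test requires the soundness statements for $G_\cc$ and $G_\ac$ individually (from \cite{de2022spectral}, Sections 3.1--3.2), not Theorem~\ref{thm:dls-braid}, which is the combined result you are trying to replace.
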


We will make use of the following simple fact.

\begin{lemma}[Data-processing]\label{lem:dp}
Let $\{P_a\}$ and $\{Q_a\}$ be two POVMs on $(\mM,\tau)$ with the same outcome set $\mA$. Then for any function $f:\mA\to \mB$ for some finite set $\mB$, 
\begin{equation}\label{eq:dp}
\sum_{b\in \mB} \Big\| \sum_{a\in f^{-1}(b)} (P_a-Q_a) \Big\|_\tau^2 \,\leq\, \sum_{a\in \mA} \big\| P_a-Q_a\big\|_\tau^2\;.
\end{equation}
\end{lemma}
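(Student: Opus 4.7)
The plan is to expand both sides as traces and reduce the inequality to the non-positivity of a sum of cross-terms. Writing $\|X\|_2^2=\tau(X^*X)$ and expanding the LHS as a double sum over ordered pairs, one obtains
\[
\sum_b \Big\|\sum_{a\in f^{-1}(b)} (P_a-Q_a)\Big\|_2^2 \,=\, \sum_a \|P_a-Q_a\|_2^2 \,+\, \sum_b \sum_{\substack{a,a'\in f^{-1}(b)\\ a\neq a'}} \tau\big((P_a-Q_a)(P_{a'}-Q_{a'})\big).
\]
The first term on the right is exactly the right-hand side of the claimed inequality, so the task reduces to showing that the off-diagonal sum is $\leq 0$.

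Next I would expand each cross-term as $\tau(P_a P_{a'}) + \tau(Q_a Q_{a'}) - \tau(P_a Q_{a'}) - \tau(Q_a P_{a'})$ and use that, in the setting of interest, the two measurements are projective, so that $P_a P_{a'} = 0$ and $Q_a Q_{a'} = 0$ whenever $a \neq a'$. Each cross-term then collapses to $-\tau(P_a Q_{a'}) - \tau(Q_a P_{a'})$, and both of these traces are non-negative since they equal traces of products of positive operators --- for instance $\tau(P_a Q_{a'}) = \tau(Q_{a'}^{1/2} P_a Q_{a'}^{1/2}) \geq 0$. Summing over $b$ and over ordered pairs $a \neq a'$ then gives a non-positive contribution, completing the argument.

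The main conceptual point is the use of orthogonality of the measurement operators across distinct outcomes: the clean cancellation $P_a P_{a'}=0$ (and likewise for $Q$) is what converts the cross-term sum into something manifestly non-positive. For POVMs without this orthogonality, the terms $\tau(P_a P_{a'})$ and $\tau(Q_a Q_{a'})$ only satisfy a non-negativity bound and the argument no longer concludes as cleanly, so the proof should really be read as applying to the projective measurements that arise in all invocations of the lemma later in the paper.
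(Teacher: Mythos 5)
Your argument is correct and follows essentially the same route as the paper's terse proof: expand the square and observe that the cross-terms $\tau(P_a Q_{a'})$, $\tau(Q_a P_{a'})$ are non-negative. More importantly, your closing caveat identifies a genuine error in the lemma as stated: inequality \eqref{eq:dp} is in fact \emph{false} for general POVMs. For a scalar counterexample, take $\mM = \C$ with $\tau = \mathrm{id}$, $\mA = \{1,2,3\}$, $\mB = \{1,2\}$ with $f(1) = f(2) = 1$, $f(3) = 2$, and set $P_1 = P_2 = 0$, $P_3 = 1$, $Q_1 = Q_2 = \nicefrac{1}{2}$, $Q_3 = 0$. The left-hand side of \eqref{eq:dp} is then $2$ while the right-hand side is $\nicefrac{3}{2}$. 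This example also shows that it does not suffice for one of the two families to be projective (here $\{P_a\}$ is): both must be projective so that $\tau(P_a P_{a'}) = \tau(Q_a Q_{a'}) = 0$ for $a \neq a'$, exactly as your proof makes explicit. The paper's own one-line proof has the same hidden hypothesis, so you have effectively corrected the lemma. One small refinement to your last sentence: the single invocation of the lemma, in the proof of Theorem~\ref{thm:braiding}, involves one projective family and one non-projective sub-POVM, but it is used there in the rearranged form $\sum_b \tau(S_b T_b) \geq \sum_a \tau(P_a Q_a)$ (with $S_b, T_b$ the coarse-grainings), which \emph{does} hold for arbitrary positive operator families by the very same expand-and-use-non-negativity argument; it is only the norm version \eqref{eq:dp} that requires projectivity.
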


\begin{proof}
This follows by expanding the left-hand side and using $\tau(P_aQ_{a'})\geq 0$ for all $a\neq a'$. 
\end{proof}

\begin{proof}[Proof of Theorem~\ref{thm:braiding}]
\underline{Completeness:} We first verify completeness. For $W\in\{X,Z\}$, $i\in\{1,\ldots,n\}$ and $b\in \F_2$ let $P^{(W,\var,i)}_b = \frac{1}{2}(\Id + (-1)^b\sigma^W(E_\code e_i))$, and for $j\in\{1,\ldots,m\}$, with $m$ the number of rows of $h$, and $a\in \F_2^r$ let $P^{(W,\eq,j)}_a = \prod_{i: h_{ji}=1} P^{(W,\var,i)}_{a_i}$. Writing $(f_0,f_1)$ for the canonical basis of $\C^2$, $P^{(W,\eq,j)}_a$ is the projection on the span of all $\otimes_{i=1}^k f_{u_i}$ where $u=(u_1,\ldots,u_k)$ is such that $(u^T E_\code)_{|S_j}=a$, where $S_j$ denotes the support of $h_{j\cdot}$.   
For $(i_X,i_Z)\in \{1,\ldots,n\}^2 $ let $\omega=(E_\code e_{i_X}, E_\code e_{i_Z})$ and $\gamma =(E_\code e_{i_X}) \cdot(E_\code e_{i_Z}) $. We let $P^{x_{W,\gamma},\omega} = P^{(W,\var,i_W)}$. 

These choices already ensure that the strategy succeeds with probability $1$ in the consistency test. We verify that it succeeds in the code test. Let $j\in\{1,\ldots,m\}$. As observed above, for any $a\in\field^r$ such that $P^{W,\eq,j}_a\neq 0$ there is an $u\in \field^k$ such that $(u^TE_\code)_{|S_j}=a$, which means that $a$ is the restriction of a valid element of $\code$. 
Using the completeness property of $M$ it follows that $M$ must accept any $a$ such that $P^{W,\eq,j}_a\neq 0$, which shows that the strategy succeeds in the code test with probability $1$. 

It remains to verify that the anti-commutation test is passed with probability $1$. For this we observe that 
the binary observables 
\[ U=\widehat{ P}^{x_{X,\gamma},\omega} \quad\text{and}\quad V= \widehat{P}^{x_{Z,\gamma},\omega} \]
commute in case $\gamma=0$ and anti-commute in case $\gamma=1$. This is because by construction $U=\sigma^X(E_\code i_X)$ and $V=\sigma^W(E_\code i_W)$, and because of the definition of $\gamma$. As a result it can be verified that the pair $(U,V)$ can be completed to a perfect strategy for the commutation game (if $\gamma=0)$ or anti-commutation game (if $\gamma=1)$.
\tnote{added:}For the commutation game, this completion is rather trivial and only requires the introduction of one additional projective $4$-outcome measurement, which measures in the joint eigenbasis of $U$ and $V$ --- this can easily be verified from the definition of the game in~\cite[Section 3.1]{de2022spectral}. For the anti-commutation game, the completion is more subtle and requires extending the Hilbert space on which $U,V$ act by tensoring with $\mathbb{C}^2$. The complete strategy involves $9$ binary observables (including $U$, $V$). It can easily be inferred from the description of a perfect strategy for the anti-commutation game in~\cite[Section 3.1 and Figure 10]{coladangelo2017robust}. 

This defines the measurements $P^{(x,\omega)}$ for $x\notin \{x_{W,\gamma}:\,W\in\{X,Z\},\gamma\in\{0,1\}\}$. 

\bigskip 

\underline{Soundness:} Next we show soundness. Let $\strategy$ be a synchronous strategy for the braiding test  in  $(\mM,\tau)$ that succeeds with probability at least $1-\eps$. For $W\in\{X,Z\}$ let $\strategy^W$ be the strategy in $G_{\code,M}$ that is obtained by restricting $\strategy$ to the relevant measurements corresponding to the ``Code test'' part of the braiding test, i.e.\ the $P^{W,\eq,j}$ and $P^{W,\var,i}$.  Then $\strategy^W$ succeeds with probability at least $1-6\eps$ in $G_{\code,M}$. Using Proposition~\ref{prop:rep-game} it follows that $\{\widehat{P}^{W,\var,i}\}$ is an $O(\eps)$-homomorphism of $G(h)$. Under the assumption that $G(h)=\Z_2^k$ is $(\delta,\nu_R,\nu_S,\mC)$-stable, we deduce that there is a $\delta_1 = O({\delta(6\eps)})$ such that for each $W\in\{X,Z\}$, $\strategy^W$ is $(\delta_1,\nu)$-close to a perfect strategy $\tilde{\strategy}^W$ on $(\mN^W,\tau^W)$ for $G_{\code,M}$, where $\nu$ is uniform on $\{1,\ldots,n\}$. This strategy has measurement operators $\{ \tilde{P}^{W,\eq,j}_a\}$ and $\{\tilde{P}^{W,\var,i}_b\}$ which satisfy 
\begin{equation}\label{eq:main-0}
\Es{i\in\{1,\ldots,n\}}\sum_{b \in \F_2} \big\|P^{W,\var,i}_b - (w^W)^* \tilde{P}_b^{W,\var,i} w^W \big\|^2_\tau \,=\, O(\delta_1)\;.
\end{equation}
Furthermore, using that $G(h)=\Z_2^k$ is Abelian, there is a  POVM $\{\tilde{P}^W_u\}_{u\in \field^n}$ such that $\sum_{u\in \code}\tilde{P}^W_u = \Id$ and for each $i\in \{1,\ldots,n\}$, $\tilde{P}^{W,i}_b = \sum_{u:u_i=b} \tilde{P}^W_u$.
 
Applying Lemma~\ref{lem:pull-back}, we obtain projective measurements $\{Q^W_u\}$ on $\mM$ such that 
\begin{equation}\label{eq:main-1}
\sum_u \big\|Q^W_u - (w^W)^* \tilde{P}_u^W w^W \big\|^2_\tau \,=\, O(\delta_1)\;.
\end{equation}
For any $i\in\{1,\ldots,n\}$ let $Q^{W,i}_b = \sum_{u:\,u_i=b}  Q^W_u$. Then by Lemma~\ref{lem:dp},
\begin{align*}
\Es{i}\sum_b \tau\big( Q^{W,i}_b (w^W)^*\tilde{P}^{W,i}_b(w^W) \big)
&\geq \Es{i}\sum_u \tau\big( Q^{W}_u (w^W)^*\tilde{P}^{W}_u(w^W) \big)\\
&\geq 1-O(\delta_1)\;.
\end{align*}
For $W\in\{X,Z\}$ and $b\in \F_2^k$ let $R^W_b = Q^W_{G_\mC^T b}$, where by definition $G_\mC^T b\in \mC$. 

We now define a strategy $\strategy'$ for the game $G_{\dlS}$. On question $W\in \{X,Z\}$ the projective measurement is $\{R^W_b\}$. On question of the form $(x_c,\omega)$ for $x_c$ a question in the commutation game, the projective measurement is $\{P^{x_c,\omega}\}$, i.e.\ the same projective measurement as used in $\strategy$. Similarly, on a question of the form $(x_{ac},\omega)$ for $x_{ac}$ a question in the anti-commutation game, the projective measurement is $\{P^{x_{ac},\omega}\}$.

To conclude we show that this strategy succeeds in the game $G_{\dlS}$ with probability $1-O(\sqrt{\delta_1})$. Assuming that this has been shown, by Theorem~\ref{thm:dls-braid} the strategy $\strategy'$ is $O(\sqrt{\delta_1})$-close to a strategy $\strategy''$ on an algebra of the form $(M_{2^{k}}(\C)\otimes \mN,\tr\otimes \tau')$ such that $P^W_b = \sigma^W_b\otimes I_\mN$. By definition of $R$, 
\begin{equation}\label{eq:main-3}
 Q^{W,i}_b \,=\,  \sum_{a \in \F_2^n:\,a_i=b}  Q^W_a \,=\, \sum_{c \in \F^k_2: (G_\mC^T c)_i=b}  R^W_c \;,
\end{equation}
hence
\begin{equation*}
 \widehat{Q}^{W,i}\,=\, \sum_c (-1)^{c\cdot (G_\mC e_i)} R^W_c \,=\, \widehat{R}^W(G_\mC e_i)\;.
\end{equation*}
Using the definition of the game distribution, closeness of $\strategy'$ and $\strategy''$ thus implies that
\begin{equation*}
\Es{i\in\{1,\ldots,n\}} \big\|\widehat{Q^{W,i}} - (w'')^* {\sigma^W}(G_\mC e_i) (w'') \big\|_\tau^2 \,=\,O(\sqrt{\delta_1})\;.
\end{equation*}
Combining with~\eqref{eq:main-0} and~\eqref{eq:main-1}, this shows the theorem.

It remains to verify that $\strategy'$ succeeds in the game $G_{\dlS}$ with probability $1-O(\sqrt{\delta_1})$. By definition $\strategy'$ succeeds in the (anti)-commutation test with probability $1-O(\eps)$. It remains to check the $W$-consistency test, for $W\in\{X,Z\}$. Because $\strategy$ succeeds with probability $1-O(\eps)$ in the consistency test, 
\begin{equation*}
\Es{i_X,i_Z\in\{1,\ldots,n\}} \sum_b \tau\big( P^{W,\var,i}_b P^{x_{W,\gamma},\omega}_b\big) \,\geq\, 1-O(\eps)\;,\
\end{equation*}
where $\omega$ and $\gamma$ are defined as in Figure~\ref{fig:braiding-test}. Using~\eqref{eq:main-0},~\eqref{eq:main-1}
and Lemma~\ref{lem:close-value} it follows that 
\begin{equation*}
\Es{i_X,i_Z\in\{1,\ldots,n\}} \sum_b \tau\big( Q^{W,i}_b P^{x_{W,\gamma},\omega}_b\big) \,\geq\, 1-O(\sqrt{\delta_1})\;,\
\end{equation*}
Using~\eqref{eq:main-3}, this can be rewritten as 
\begin{equation}\label{eq:main-4}
\Es{i_X,i_Z\in\{1,\ldots,n\}} \sum_{b,c: (G_\mC^T c)_i=b} \tau\big( R^{W}_c P^{x_{W,\gamma},\omega}_b\big) \,\geq\, 1-O(\sqrt{\delta_1})\;.
\end{equation}
Since $\omega_W = G_\mC e_i$, $(G_\mC^T c)_i = c\cdot \omega_W$. Thus~\eqref{eq:main-4} shows that $\strategy'$ succeeds with probability $1-O(\sqrt{\delta_1})$ in the $W$-consistency test, as desired. 
\end{proof}

\subsection{The quantum low-degree test}
\label{sec:pbt}

By instantiating $\code$ using the Reed-Muller code from Section~\ref{sec:eff-z2k} we obtain an efficient qubit test.  Because we do not know if $G(h_\bRM)$ is Abelian, we need to introduce an additional test for the relations $\{R_k^\com\}$ in~\eqref{eq:z2k-eff}. The resulting test is described in Figure~\ref{fig:pbt}. It is a variant of a test first introduced in~\cite{natarajan2018low} (with a flawed analysis). This paper (together with the work on which our analysis relies) corrects this, and Corollary~\ref{cor:qld} below recovers~\cite[Theorem 3.2]{natarajan2018low}, with essentially the same parameters but a slightly different test. (Some of the complexity-theoretic applications of the theorem given in~\cite{natarajan2018low} remain flawed, see~\cite{natarajan2024status} for an explanation.)

In the next section we detail an important application of this test, Proposition~\ref{prop:dim-test}, which is used in the work~\cite{ji2020mip}. The present proof somewhat simplifies the one given in~\cite{ji2020mip}, but it keeps the same overall strategy. Moreover, Corollary~\ref{cor:qld} below is qualitatively stronger than~\cite[Theorem 7.14]{ji2020mip} as the latter applies to a slightly different game which in particular has a pair of questions with long answers, of $O(2^m\log q)$ bits, whereas the tester in Figure~\ref{fig:bRM-tester}, which is analyzed in the corollary, has short answers, $O(d)$ bits long; with a typical parameter setting (see the discussion following Theorem~\ref{thm:z2-stab}) this is exponentially smaller. The presence of long answers makes the analysis easier.

\begin{figure}[!htbp]
  \centering
  \begin{gamespec}
Let $h_\bRM\in \F_2^{M\times N}$ be the parity check matrix for $\code_\bRM$ considered in Section~\ref{sec:eff-z2k}. Here, $N=q^{m+1}$ and $M=q^{m+2}(1+m)$. Let $\mu_R$ be the distribution on relations~\eqref{eq:z2k-eff} described in Section~\ref{sec:eff-z2k}. Execute either of the following tests with probability $1/4$ each. 
    \begin{enumerate}
      \setlength\itemsep{1pt}
    \item \textbf{Code test:} Identical to that in Figure~\ref{fig:braiding-test}.
    \item \textbf{(Anti-)commutation test:} Identical to that in Figure~\ref{fig:braiding-test}.
		 \item \textbf{Consistency test:} Identical to that in Figure~\ref{fig:braiding-test}.
		\item \textbf{Pairwise commutation test:} Sample $(i,i')\in \{1,\ldots,N\}^2$ according to the distribution $\mu_R$, conditioned on choosing a relation from $R_k^\com$, and $W\in \{X,Z\}$ uniformly at random. 	
		Let $\omega=(E_\code e_{i}, E_\code e_{i'})$. 
		Sample a pair of questions $(x_c,y_c)$ as in the commutation game. If either question is $x_{X,0}$ or $x_{Z,0}$, replace it with $(W,\var,i)$ or $(W,\var,i')$ respectively. Otherwise, send the original question together with $\omega$, i.e.\ $(x_c,\omega)$ or $(y_c,\omega)$ respectively. Accept if and only if the players' answers are accepted in the commutation game.  
    \end{enumerate}
  \end{gamespec}
  \caption{The quantum low-degree test, obtained by adapting the braiding test from Figure~\ref{fig:braiding-test} to the $[q^{m+1},t(d+1)^m,D']_2$ code $\code_\bRM$.}
  \label{fig:pbt}
	
\end{figure}

\begin{corollary}[Quantum low-degree test]\label{cor:qld}
Let $\code_\bRM$ be the $[q^{m+1},t(d+1)^m,D']_2$ Reed-Muller code from Section~\ref{sec:eff-z2k}, and $M=(h_\bRM,\nu_\bRM)$ the $(d+2)$-local tester for $\code_\bRM$ described in Figure~\ref{fig:bRM-tester}. Then the associated braiding test (Figure~\ref{fig:pbt}) is a $(k,\delta')$-qubit test with error function $\delta' = \poly(m,d,t)\cdot\poly(\eps,q^{-1})$.
\end{corollary}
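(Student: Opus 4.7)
The plan is to treat this corollary as an instantiation of Theorem \ref{thm:braiding} with $\code = \code_\bRM$, but circumventing the fact that we do not a priori know $G(h_\bRM) = \Z_2^K$. Instead, the target presentation is $\widetilde{G(h_\bRM)} = \Z_2^K$ from \eqref{eq:z2k-eff}, whose stability is provided by Theorem \ref{thm:z2-stab}. The role of the Pairwise commutation subtest in Figure \ref{fig:pbt} is precisely to supply the extra relations $\{R_k^\com\}$ that distinguish $\widetilde{G(h_\bRM)}$ from $G(h_\bRM)$, so that the full presentation approximately holds for the observables extracted from the strategy. The rest of the soundness argument then mirrors Theorem \ref{thm:braiding} line by line, except that the appeal to the abstract stability hypothesis is replaced by an appeal to Theorem \ref{thm:z2-stab}.

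In more detail, suppose $\strategy$ is a synchronous strategy for the test of Figure \ref{fig:pbt} of value at least $1-\eps$; then each of the four subtests is passed with probability at least $1-4\eps$. For each $W \in \{X,Z\}$, restricting $\strategy$ to the code-test questions gives a strategy $\strategy^W$ in the code game $G_{\code_\bRM,M}$, so Proposition \ref{prop:rep-game} shows that $\phi_W: x_i \mapsto \widehat{P}^{W,\var,i}$ approximately satisfies the $R_k^\sq$, $R_k^\ld$, and $R_k^\had$ relations on average, with error $O(d\eps)$ under the associated conditional distributions. The Pairwise commutation subtest, by the certification property of the commutation game used in the proof of Theorem \ref{thm:braiding} (Section \ref{sec:braiding}), forces the remaining relations: on a randomly chosen pair $(i,i')$ drawn from the commutation component of $\mu_R$ (Section \ref{sec:eff-z2k}), passing the subtest implies that $\widehat{P}^{W,\var,i}$ and $\widehat{P}^{W,\var,i'}$ approximately commute in $\|\cdot\|_\tau$. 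Combining the four contributions, $\phi_W$ is an $(\eps',\mu_R)$-almost homomorphism of the presentation \eqref{eq:z2k-eff} of $\widetilde{G(h_\bRM)}$ with $\eps' = O(\eps)$, where $\mu_R$ is exactly the distribution defined in Section \ref{sec:eff-z2k} (this is where the specific weight $\tfrac{1}{4}$ on each of the four relation families, and the explicit mixture of the two commutation distributions, must be matched by the subtest sampling rules).

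Now I invoke Theorem \ref{thm:z2-stab} to obtain, for each $W$, a perfect representation $\tilde{\phi}_W: \Z_2^K \to \mU(\mN^W)$ with $\mN^W = P^W\mM_\infty P^W$, and a partial isometry $w^W \in P^W\mM_\infty \Id_\mM$ such that $\Es{i} \|\widehat{P}^{W,\var,i} - (w^W)^* \tilde{\phi}_W(x_i) w^W\|_\tau^2 = O(\delta)$ with $\delta = \poly(m,d,t)\cdot\poly(\eps,q^{-1})$. Because $\Z_2^K$ is Abelian, $\tilde{\phi}_W$ diagonalises into a projective measurement $\{\tilde{P}^W_u\}_{u \in \code_\bRM}$, from which Lemma \ref{lem:pull-back} produces a projective measurement $\{Q^W_u\}$ on $\mM$ satisfying the analogue of \eqref{eq:main-1}. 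Defining $\{R^W_b\}_{b \in \F_2^K}$ through the generating matrix of $\code_\bRM$ as in the proof of Theorem \ref{thm:braiding}, and using the same measurements as $\strategy$ on the (anti-)commutation-game questions, I obtain a candidate strategy $\strategy'$ for the $G_\dlS$ game on $(\mM,\tau)$.

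Finally, exactly as in the closing argument of Theorem \ref{thm:braiding}, the Consistency subtest together with Lemma \ref{lem:close-value} and the bounds \eqref{eq:main-0}--\eqref{eq:main-1} implies that $\strategy'$ succeeds in $G_\dlS$ with probability $1 - O(\sqrt{\delta})$; invoking Theorem \ref{thm:dls-braid} with the Reed--Muller generator matrix (a good code, hence satisfying the distance hypothesis of that theorem) yields a strategy on $M_{2^K}(\C) \otimes \mN$ whose $W$-observables are $\sigma^W(E_{\code_\bRM}e_i) \otimes \Id_\mN$, which, unwound through the chain of $L^2$-approximations, shows the desired qubit-test structure with $\delta' = O(\sqrt{\delta}) = \poly(m,d,t)\cdot\poly(\eps,q^{-1})$. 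The main obstacle in carrying out this programme is bookkeeping: ensuring that the sampling weights in the four subtests of Figure \ref{fig:pbt} exactly induce the distribution $\mu_R$ of Theorem \ref{thm:z2-stab}, and that the losses along the chain (Proposition \ref{prop:rep-game} $\to$ Theorem \ref{thm:z2-stab} $\to$ Lemma \ref{lem:pull-back} $\to$ Theorem \ref{thm:dls-braid}) are combined correctly in $\eps$ versus $\sqrt{\eps}$ terms so that the final polynomial dependence $\poly(m,d,t)\cdot\poly(\eps,q^{-1})$ comes out.
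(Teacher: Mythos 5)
Your proposal is correct and follows essentially the same path as the paper's own proof: apply Proposition~\ref{prop:rep-game} to each code-game restriction to get the $R^\sq,R^\ld,R^\had$ relations, extract the extra commutation relations $R^\com$ from the pairwise commutation subtest (which is precisely why the test of Figure~\ref{fig:pbt} adds that fourth subtest), invoke Theorem~\ref{thm:z2-stab} in place of the abstract stability hypothesis of Theorem~\ref{thm:braiding}, and then run the remainder of the proof of Theorem~\ref{thm:braiding} unchanged. The only minor slip is writing $\eps'=O(\eps)$ for the quality of the resulting $(\eps',\mu_R)$-almost homomorphism where it should be $O(d\eps)$ (the factor $r=d+2$ from Proposition~\ref{prop:rep-game}), but this is harmless since the $d$-dependence is absorbed into the final $\poly(m,d,t)$ factor.
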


\begin{proof}
Let $\strategy$ be a synchronous strategy that succeeds in the braiding test with probability at least $1-\eps$. Then in particular the strategy succeeds with probability at least $1-4\eps/3$ in the braiding test over $\code_\bRM$. Since $(\code_\bRM,M)$ is not known to be abelian, we cannot apply Theorem~\ref{thm:braiding} directly. However, we can follow its proof. 

The completeness part of the proof follows in a straightforward manner, since the measurement operators $P^{(W,\var,i)}$ and $P^{(W,\var,i')}$ defined in the proof pairwise commute, for any pair $(i,i')\in \{1,\ldots,N)^2$. 

For the soundness part, we first define the same pair of strategies $\strategy^X$ and $\strategy^Z$ for $G_{\code_\bRM,M}$. Applying Proposition~\ref{prop:rep-game}, we deduce an $O(d\eps)$-homomorphism of the presentation~\eqref{eq:def-gh-pres} of $G(h_\bRM)$. However, we are interested in constructing an approximate homomorphism of the presentation~\eqref{eq:z2k-eff}, which in addition contains the relations $R_k^\com$. The fact that $x_i \mapsto \widehat{P}^{(W,\var,i)}$ satisfies these relations, on average and according to the distribution $\mu_R$, follows from success in the pairwise commutation test executed as part of the Pauli braiding test (Figure~\ref{fig:pbt}). Thus we obtain that $x_i \mapsto \widehat{P}^{(W,\var,i)}$ is an $(O(d\eps),\mu_R)$-homomorphism of the presentation~\eqref{eq:z2k-eff}. Applying Theorem~\ref{thm:z2-stab}, and similarly to the proof of Theorem~\ref{thm:braiding} we obtain a pair of PVMs $\{\tilde{P}^W_u\}_{u\in\F_2^N}$, for $W\in\{X,Z\}$, such that defining $\tilde{P}^{W,i}_b = \sum_{u:u_i=b} \tilde{P}^W_u$ these operators satisfy~\eqref{eq:main-0}, with right-hand side $\delta_2 = \delta(O(d\eps))$, with $\delta$ the soundness function from Theorem~\ref{thm:z2-stab}. From here on the proof proceeds exactly as the proof of Theorem~\ref{thm:braiding}. 
\end{proof}

\subsection{Dimension bounds}

The next proposition states a simple consequence of a qubit test, which is that strategies with a high enough success probability must have a large dimension. This consequence is used in~\cite{ji2020mip}. 

\begin{proposition}
\label{prop:dim-test}
Let $G = (\mX,\mu,\mA,D)$ denote a $(k,\delta(\eps))$-qubit test. Then all synchronous strategies $\strategy$ in $(\mM,\tau)$ for $G$ that succeed with probability $1 - \eps$ must satisfy 
\[
\dim(\mM) \geq \Big( 1 + O(\sqrt{\delta(\eps)}) + \frac{\delta(\eps)}{1 - \delta(\eps)}\Big)^{-1} 2^k~.
\]
\end{proposition}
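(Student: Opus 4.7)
The plan is to extract a dimension bound from the closeness condition in Definition~\ref{def:close} by tracking the trace of the projection $P \in \mM_\infty$ that realizes the nearby exact strategy. Specifically, by soundness of the $(k,\delta(\eps))$-qubit test, there is a projection $P \in \mM_\infty$ of finite trace and a partial isometry $w \in P\mM_\infty \Id_\mM$ such that $\mN_c := P\mM_\infty P \cong M_{2^{k}}(\C) \otimes \mN$ with trace $\tau^{\mN_c} = \tau_\infty/\tau_\infty(P)$, and
\[
\max\bigl\{\tau^\mM(\Id_\mM - w^*w),\ \tau^{\mN_c}(P - ww^*)\bigr\} \,\leq\, \delta(\eps).
\]
The first key step is to bound $\tau_\infty(P)$ from above. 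From $\tau^{\mN_c}(P - ww^*) \leq \delta(\eps)$ I get $\tau_\infty(ww^*) \geq (1-\delta(\eps))\,\tau_\infty(P)$. Using $w^*w \leq \Id_\mM$ together with the identification of $\mM$ as a corner of $\mM_\infty$, I have $\tau_\infty(w^*w) = \tau^\mM(w^*w) \leq 1$, and since $\tau_\infty(ww^*) = \tau_\infty(w^*w)$ this yields $\tau_\infty(P) \leq 1/(1-\delta(\eps))$.

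The second step is to translate this trace bound into a dimension bound. Assume $\mM = M_n(\C)$ with $\tau = \tfrac{1}{n}\Tr$ (otherwise the right-hand side is trivially bounded by $\dim\mM$). Then $\mM_\infty = M_n(\C) \otimes B(\ell_2)$ with semifinite trace $\tau_\infty = \tfrac{1}{n}\Tr \otimes \Tr$, so for any finite-trace projection $P$, one has $\tau_\infty(P) = \mathrm{rank}(P)/n$, and the corner $P\mM_\infty P$ is isomorphic to $M_{\mathrm{rank}(P)}(\C)$. Since $\mN_c \cong M_{2^k}(\C) \otimes \mN$ must embed into $P\mM_\infty P$ as its full von Neumann algebra, we get $\mathrm{rank}(P) \geq 2^k$. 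Combining with the trace bound,
\[
n \,\geq\, (1-\delta(\eps))\,\mathrm{rank}(P) \,\geq\, (1-\delta(\eps))\,2^k \,=\, \frac{2^k}{1 + \frac{\delta(\eps)}{1-\delta(\eps)}},
\]
which already implies the stated inequality (the extra $O(\sqrt{\delta(\eps)})$ slack in the proposition leaves room for any looseness when controlling $\tau^{\mN_c}$ versus $\tau_\infty$).

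There is no real obstacle here since all the hard work has been done in establishing the qubit test; the proof is essentially a bookkeeping exercise on the trace of $P$. The only subtle point, which I would spell out carefully, is the passage $\tau_\infty(w^*w)=\tau^\mM(w^*w)$ coming from the definition of the semifinite trace $\tau_\infty$ restricted to the corner $\mM \otimes \Id_1 \subset \mM_\infty$; once this is fixed the entire bound follows from the two inequalities built into Definition~\ref{def:close}.
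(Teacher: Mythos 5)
Your proof is correct, and it takes a genuinely different route from the paper's. The paper constructs the pulled-back projective measurement $\{Q_u\}_{u \in \F_2^k}$ on $\mM$ via the orthonormalization lemma (Lemma~\ref{lem:pull-back}), then applies the $L^1$--$L^2$ comparison (Lemma~\ref{lem:l1-l2}) to control $\sum_u \bigl|\tau(Q_u) - 2^{-k}\bigr|$, and finally argues that some nonzero $Q_u$ has trace $\leq \bigl(1 + O(\sqrt{\delta(\eps)}) + \tfrac{\delta(\eps)}{1-\delta(\eps)}\bigr)2^{-k}$, which forces $d \geq 1/\tau(Q_u)$. You instead work directly with the trace of the projection $P$ appearing in Definition~\ref{def:close-meas}: the inequality $\tau^{\mN_c}(P - ww^*) \leq \delta(\eps)$ combined with $\tau_\infty(w^*w) = \tau^\mM(w^*w) \leq 1$ and Murray--von Neumann equivalence of $w^*w$ and $ww^*$ forces $\tau_\infty(P) \leq 1/(1-\delta(\eps))$; meanwhile $P\mM_\infty P \cong M_{2^k}(\C)\otimes\mN'$ forces $\mathrm{rank}(P)$ to be a multiple of $2^k$ (since the corner $P\mM_\infty P$ of $M_n(\C)\otimes B(\ell_2)$ is a type-I factor $M_{\mathrm{rank}(P)}(\C)$, whence $\mN'$ is also a matrix algebra and $\mathrm{rank}(P)=2^k\dim\mN'^{1/2}$). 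Your route is shorter, avoids both auxiliary lemmas entirely, and yields the cleaner bound $\dim(\mM) \geq (1-\delta(\eps))\,2^k$, which is at least as strong as the paper's bound since $\bigl(1 + O(\sqrt{\delta}) + \tfrac{\delta}{1-\delta}\bigr)^{-1} = \tfrac{1-\delta}{1 + O(\sqrt{\delta})(1-\delta)} \leq 1-\delta$; the $O(\sqrt{\delta(\eps)})$ term in the statement is evidently slack from the paper's $L^1$--$L^2$ step, not intrinsic to the result. The one point worth spelling out is the justification for reducing to $\mM = M_n(\C)$: both you and the paper wave this away with a WLOG; if $\mM$ is a nontrivial direct sum of matrix blocks with a faithful trace, the corner-of-$\mM_\infty$ bookkeeping should be carried out block-by-block, but the bound survives.
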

\begin{proof}
If $\mM$ is infinite-dimensional, then we are done. Suppose instead it were finite-dimensional. Then $\mM$ must be (isomorphic to) a direct sum of finite-dimensional matrix algebras. Without loss of generality we assume that $\mM = M_{d}(\C)$ with the dimension-normalized trace $\tau = \frac{1}{d} \Tr$. 

By the soundness property of qubit tests, the strategy $\strategy$ is $(\delta(\eps),\tilde{\mu})$-close to a strategy $\strategy'$ on an algebra $(M_{2^k}(\C) \otimes \mN,\tr_{2^k} \otimes \tau')$ for some tracial algebra $(\mN,\tau')$ where $\tr_{2^k} = 2^{-k} \Tr$. For notational brevity we write $\mR = M_{2^k}(\C) \otimes \mN$ and $\tau^{\mR} = \tr_{2^k} \otimes \tau'$. By definition there exists a projection $P \in \mM_\infty$ of finite trace and a partial isometry $w \in P \mM_\infty 1_\mM$ satisfying
\begin{enumerate}
	\item $\mR = P \mM_\infty P$. 
	\item $\max \left \{ \tau(1_\mM - w^* w), \tau^{\mR} (P - w w^*) \right \} \leq \delta(\eps)$.
	\item $\tau^{\mR} = \tau_\infty/\tau_\infty(P)$.
\end{enumerate}
For $u \in \F_2^k$ let $\sigma^Z_u$ denote the projection
\[
	\sigma^Z_u = 2^{-k} \sum_{a \in \F_2^k} (-1)^{a \cdot u} \sigma^Z(a)~.
\]
It is easy to verify that $\{\sigma^Z_u \otimes I_\mN \}_{u \in \F_2^k}$ is a projective measurement in $\mR$ and furthermore $\tau^{\mR}(\sigma^Z_u \otimes I_\mN) = 2^{-k}$. Applying \Cref{lem:pull-back} we get that there exists a projective measurement $\{Q_u\}_{u \in \F_2^k}$ on $\mM$ such that
\[
	\sum_u \| Q_u - w^* (\sigma^Z_u \otimes I_\mN) w \|_2^2 \leq 56\, \delta(\eps)~.
\]
Applying \Cref{lem:l1-l2} we get
\[
	\sum_u \tau \Big ( \Big| Q_u - w^* (\sigma^Z_u \otimes I_\mN) w\Big| \Big) \leq O(\sqrt{\delta(\eps)})~.
\]
Then we have
\begin{align*}
	\sum_u \Big| \tau(Q_u) - 2^{-k} \Big| &\leq \sum_u \Big | \tau(w^* (\sigma^Z_u \otimes I_\mN) w) - 2^{-k} \Big| + \tau \Big ( \Big| Q_u - w^* (\sigma^Z_u \otimes I_\mN) w\Big| \Big) \\
	&= O(\sqrt{\delta(\eps)}) + \sum_u \Big | \tau_\infty (w w^* (\sigma^Z_u \otimes I_\mN)) - 2^{-k} \Big| \\
	&= O(\sqrt{\delta(\eps)}) + \sum_u \Big | \tau_\infty (P (\sigma^Z_u \otimes I_\mN)) - 2^{-k} \Big| + \Big | \tau_\infty((P - ww^*)(\sigma^Z_u \otimes I_\mN)) \Big|
\end{align*}
Notice that $\tau_\infty (P (\sigma^Z_u \otimes I_\mN)) = \tau_\infty (\sigma^Z_u \otimes I_\mN) = 2^{-k}$, and that $ww^* \leq P$ and thus $\tau_\infty((P - ww^*)(\sigma^Z_u \otimes I_\mN))$ is a nonnegative real number. Therefore the sum in the last line simplifies to
\[
\sum_u \tau_\infty((P - ww^*)(\sigma^Z_u \otimes I_\mN)) = \tau_\infty((P - ww^*) P) = \tau_\infty(P - ww^*) \leq \tau_\infty(P) \cdot \delta(\eps)~.
\]
On the other hand the proof of \Cref{lem:pull-back} shows that $\tau_\infty(P) \leq \frac{1}{1 - \delta(\eps)}$, and thus 
\[
\sum_u \Big| \tau(Q_u) - 2^{-k} \Big| \leq O(\sqrt{\delta(\eps)}) + \frac{\delta(\eps)}{1 - \delta(\eps)}~.
\]
By averaging, there exists a $u \in \F_2^k$ such that
\[
\tau(Q_u) = \frac{1}{d} \Tr(Q_u) \leq  \Big( 1 + O(\sqrt{\delta(\eps)}) + \frac{\delta(\eps)}{1 - \delta(\eps)}\Big) 2^{-k}~.
\]
Rearranging, this implies that $d$, the dimension of $\mM$, satisfies
\[
	d \geq \Big( 1 + O(\sqrt{\delta(\eps)}) + \frac{\delta(\eps)}{1 - \delta(\eps)}\Big)^{-1} 2^k
\]
as desired.
\end{proof}

\appendix

\section{Approximately commuting projections}\label{sec:proj}

In this appendix we show a variant of Theorem 3.2 from~\cite{chao2017overlapping} for the case of the Hilbert-Schmidt norm, as opposed to the operator norm in~\cite{chao2017overlapping}. The proof is similar in that it identifies a sequential rounding mechanism for the projection ; but the tracking of errors is somewhat different.  The resulting bound is the same. 

The proof presented in this appendix was prepared with the help of ChatGPT after a reader pointed out the discrepancy in norms; the authors verified and edited the proof. 

\begin{theorem}[Hilbert--Schmidt separation of projections]
\label{thm:main2}
Let $N\geq2$, and let $P_1,\ldots,P_N$ be projections on a
$d$-dimensional Hilbert space $\mH$ equipped with the trace $\tau=d^{-1}\Tr_{\mH}$.  Assume
\[
  \Es{1\leq i<j\leq N} \lVert[P_i,P_j]\rVert_\tau^2\leq\eps^2\;.
\]
Then there are pairwise commuting projections $Q_1,\ldots,Q_N$ on
$\mH$ such that
\[
  \Es{1\leq i\leq N} \lVert P_i-Q_i\rVert_\tau
  \leq 8N\eps\;.
\]
\end{theorem}

The proof of the theorems relies on three linear-algebraic lemmas. These are probably well-known, but we include the statements and proofs for completeness. The proof of the theorem is given in the following section. 

\subsection{Auxiliary lemmas}

\subsubsection{A sequential projection estimate}

\begin{lemma}
\label{lem:sequential}
Let $\Pi_1,\ldots,\Pi_m$ be orthogonal projections on a Hilbert space,
let $x$ be a vector, and set
\[
  y=\Pi_m\cdots\Pi_1x,
  \qquad
  a=\sum_{j=1}^m\lVert(\Id-\Pi_j)x\rVert^2.
\]
Then
\[
  \lVert x-y\rVert^2\leq a,
  \qquad
  \lVert x\rVert^2-\lVert y\rVert^2\leq 4a.
\]
\end{lemma}

\begin{proof}
Put $x_0=x$, $x_j=\Pi_jx_{j-1}$, and
$d_j=x_{j-1}-x_j$.  Since $x_j\perp d_j$,
\[
  L:=\lVert x\rVert^2-\lVert y\rVert^2
    =\sum_{j=1}^m\lVert d_j\rVert^2.
\]
Moreover $d_j\in\ker\Pi_j$, and therefore
\[
  \langle x,d_j\rangle
  =\langle(\Id-\Pi_j)x,d_j\rangle.
\]
Since $x-y=\sum_jd_j$, Cauchy--Schwarz gives
\[
\begin{aligned}
  \lVert x-y\rVert^2
   &=2\operatorname{Re}\langle x,x-y\rangle-L \\
   &\leq 2\sqrt{aL}-L
   \leq a.
\end{aligned}
\]
The same calculation, with the nonnegative term
$\lVert x-y\rVert^2$ discarded, yields
$L\leq 2\sqrt{aL}$, and hence $L\leq 4a$.
\end{proof}

\subsubsection{Rounding an almost invariant subspace}

\begin{lemma}
\label{lem:round-subspace}
Let $\mK$ be finite-dimensional Hilbert space and $\sigma$ a faithful finite trace on $\mathcal B(\mK)$.  Let
$R_1,\ldots,R_N$ be pairwise commuting projections on $\mK$, and let $E$ be a projection on
$\mK$.  Then there is a projection $F$ on $\mK$ that commutes with
every $R_i$, satisfies $\sigma(F)=\sigma(E)$, and obeys
\[
  \lVert E-F\rVert_\sigma^2
  \leq
  2\sum_{i=1}^N\lVert[R_i,E]\rVert_\sigma^2.
\]
\end{lemma}

\begin{proof}
For $x=(x_1,\ldots,x_N)\in\{0,1\}^N$, let
\[
  R_x=\prod_{i=1}^N R_i^{x_i}(\Id-R_i)^{1-x_i}.
\]
The nonzero $R_x$ are the joint spectral projections of the commuting
family.  Define the joint pinching of $E$ by
\[
  B=\sum_x R_xER_x.
\]
The map $X\mapsto\sum_x R_xXR_x$ is the orthogonal projection, for
the $\sigma$-inner product, onto the algebra of operators that are
block diagonal with respect to the joint eigenspace decomposition.  In
particular,
\[
  0\leq B\leq\Id,
  \qquad
  \sigma(B)=\sigma(E).
\]
Furthermore,
\[
  \lVert E-B\rVert_\sigma^2
  =\sum_{x\neq y}\sigma(R_yER_xER_y),
\]
whereas
\[
  \sum_{i=1}^N\lVert[R_i,E]\rVert_\sigma^2
  =\sum_{x,y}
       \left(\sum_{i=1}^N|x_i-y_i|^2\right)
       \sigma(R_yER_xER_y).
\]
The Hamming distance between distinct $x$ and $y$ is at least one, so
\begin{equation}
  \lVert E-B\rVert_\sigma^2
  \leq
  \sum_{i=1}^N\lVert[R_i,E]\rVert_\sigma^2.
  \label{eq:pinching-bound}
\end{equation}

Let $r=\operatorname{rank}(E)$.  Choose an orthonormal basis
consisting simultaneously of eigenvectors of $B$ and of all the
$R_i$.  Write the eigenvalues of $B$ in nonincreasing order,
\[
  1\geq\lambda_1\geq\cdots\geq\lambda_D\geq0,
  \qquad
  \sum_{s=1}^D\lambda_s=r.
\]
Let $F$ project onto $r$ basis vectors associated with the $r$ largest
eigenvalues.  This choice ensures that $F$ commutes with every $R_i$;
since every trace on $\mathcal B(\mK)$ is a positive scalar multiple of
$\Tr_{\mK}$, it also gives $\sigma(F)=\sigma(E)$.

We claim
\begin{equation}
  \lVert B-F\rVert_\sigma^2
  \leq \sigma(B-B^2).
  \label{eq:rank-rounding}
\end{equation}
Since $\sigma$ is a scalar multiple of $\Tr_{\mK}$, it suffices to
prove this after replacing $\sigma$ by $\Tr_{\mK}$.  If $r=0$ or
$r=D$, the claim is immediate.  Otherwise set
\[
  a=\lambda_r,
  \qquad
  t=\sum_{s>r}\lambda_s
    =\sum_{s\leq r}(1-\lambda_s).
\]
Since $\lambda_s\leq a$ for $s>r$ and $\lambda_s\geq a$ for
$s\leq r$,
\[
  \sum_{s>r}\lambda_s^2
  \leq at
  \leq\sum_{s\leq r}\lambda_s(1-\lambda_s).
\]
Consequently
\[
  \Tr_{\mK}(B^2)=\sum_s\lambda_s^2
  \leq\sum_{s\leq r}\lambda_s
  =\Tr_{\mK}(BF).
\]
Using $\Tr_{\mK}(F)=\Tr_{\mK}(B)=r$, this is equivalent to
\eqref{eq:rank-rounding}.

Because $B$ is the $\sigma$-orthogonal projection of $E$ onto the
block-diagonal algebra,
\[
  \sigma(B-B^2)=\lVert E-B\rVert_\sigma^2,
\]
and $E-B$ is $\sigma$-orthogonal to $B-F$.  Hence
\[
\begin{aligned}
  \lVert E-F\rVert_\sigma^2
   &=\lVert E-B\rVert_\sigma^2
     +\lVert B-F\rVert_\sigma^2 \\
   &\leq 2\lVert E-B\rVert_\sigma^2.
\end{aligned}
\]
Combining this with \eqref{eq:pinching-bound} proves the lemma.
\end{proof}

\subsubsection{Matching two subspaces by close isometries}

\begin{lemma}
\label{lem:isometries}
Let $V:\mH\to\mK$ be an isometry, let $E=VV^*$, and let $F$ be a
projection on $\mK$ with $\operatorname{rank}(F)=\dim\mH$.  Let $\tau$
and $\sigma$ be faithful finite traces on $\mathcal B(\mH)$ and
$\mathcal B(\mK)$, respectively.  There is an isometry
$W:\mH\to\mK$ with $WW^*=F$ such that
\[
  \lVert V-W\rVert_\tau
  \leq
  \left(\frac{\tau(\Id_\mH)}{\sigma(E)}\right)^{1/2}
  \lVert E-F\rVert_\sigma.
\]
In particular, if the traces are compatible on the range of $V$, i.e.
$\sigma(VXV^*)=\tau(X)$ for all $X\in\mathcal B(\mH)$, then
$\lVert V-W\rVert_\tau\leq\lVert E-F\rVert_\sigma$.
\end{lemma}

\begin{proof}
Take the polar decomposition
\[
  FV=U|FV|,
\]
and extend the partial isometry $U$ to an isometry
$W$ from $\mH$ to the range of $F$.  Because $W^*=W^*F$ and $W$ extends $U$,
we have $W^*V=W^*FV=|FV|$.  Let $d=\dim\mH$, and let
$s_1,\ldots,s_d\in[0,1]$ be the singular values of $FV$.  Since
$\tau$ and $\sigma$ are scalar multiples of the usual traces, with
multipliers $\tau(\Id_\mH)/d$ and $\sigma(E)/d$, respectively,
\[
  \lVert V-W\rVert_\tau^2
  =\frac{\tau(\Id_\mH)}{d}\,2\sum_{r=1}^d(1-s_r),
\]
while
\[
  \lVert E-F\rVert_\sigma^2
  =\frac{\sigma(E)}{d}\,2\sum_{r=1}^d(1-s_r^2).
\]
Since $1-s_r\leq1-s_r^2$, the stated inequality follows.
\end{proof}

\subsection{Proof of Theorem~\ref{thm:main2}}

Write $P_i^{(1)}=P_i$ and $P_i^{(0)}=\Id-P_i$.  For
$x\in\{0,1\}^N$ let
\[
  P_x=P_N^{(x_N)}\cdots P_1^{(x_1)}.
\]
Set $\mK_N=\mH\otimes(\mathbb C^2)^{\otimes N}$.  Since
$\sum_x P_x^*P_x=\Id$, the map
$V:\mH\to\mK_N$ defined by
\[
  V\xi=\sum_x P_x\xi\otimes|x\rangle
\]
is an isometry.  On $\mathcal B(\mK_N)$ we use the compatible trace
$\sigma=d^{-1}\Tr_{\mK_N}$; thus
$\sigma(VXV^*)=\tau(X)$ for all $X\in\mathcal B(\mH)$.
For $1\leq j\leq N$ let $R_j$ be the projection onto
$|1\rangle_j$ in the $j$th ancillary register, and set
$A_j=V^*R_jV$.  The projections $R_1,\ldots,R_N$ commute exactly.
We first show that
\begin{equation}\label{eq:a-p-1}
  \Es{1\leq j\leq N} \lVert A_j-P_j\rVert_\tau^2
  \leq \frac{N-1}{2}\eps^2
\end{equation}
and furthermore
\begin{equation}\label{eq:a-p-2}
  \Es{1\leq j\leq N} \tau(A_j-A_j^2)
  \leq 2(N-1)\eps^2\;.
\end{equation}
Thus the commuting projections $R_j$ in the dilation have compressions
$A_j$ that are close to the original projections, and the $A_j$ are
close to projections on average.

To prove~\eqref{eq:a-p-1} and~\eqref{eq:a-p-2}, define for any $1\leq i \leq N$
\[
  \mE_i(X)=P_iXP_i+(\Id-P_i) X(\Id-P_i),
  \qquad X\in\mathcal B(\mH).
\]
Each $\mE_i$ is a trace-preserving orthogonal projection on
$L_2(\mathcal B(\mH),\tau)$.  For any $1\leq j\leq N$,
\begin{align*}
  V^*R_jV
   &=\sum_{x:x_j=1} P_x^*P_x \\
   &=\mE_1\bigl(\mE_2(\cdots\mE_{j-1}(P_j)\cdots)\bigr),
\end{align*}
where the empty composition is the identity.  Moreover,
\[
  \lVert(\mathrm{id}-\mE_i)(P_j)\rVert_\tau
  =\lVert[P_i,P_j]\rVert_\tau .
\]
Applying Lemma~\ref{lem:sequential} to the successive projections
$\mE_{j-1},\ldots,\mE_1$ gives
\[
  \lVert A_j-P_j\rVert_\tau^2
  \leq \sum_{i<j}\lVert[P_i,P_j]\rVert_\tau^2
\]
and, since $\tau(A_j)=\tau(P_j)$,
\[
  \tau(A_j-A_j^2)
  \leq 4\sum_{i<j}\lVert[P_i,P_j]\rVert_\tau^2 .
\]
Averaging over $j$ and using the hypothesis proves
\eqref{eq:a-p-1} and~\eqref{eq:a-p-2}.

We now identify commuting projections on the original space $\mH$ by
slightly perturbing the isometry $V$.  Let
\[
  E=VV^*
\]
be the rank-$d$ projection onto the range of $V$.  For any
$1\leq j\leq N$,
\begin{align}
  \lVert[R_j,E]\rVert_\sigma^2
   &=2\sigma\bigl(ER_j(\Id-E)R_jE\bigr) \notag\\
   &=2\tau(A_j-A_j^2).
  \label{eq:comm-defect}
\end{align}
Applying~\eqref{eq:a-p-2},
\begin{equation}
  \Es{1\leq j\leq N}\lVert[R_j,E]\rVert_\sigma^2
  \leq4(N-1)\eps^2\;.
  \label{eq:sum-comm}
\end{equation}
Lemma~\ref{lem:round-subspace} now gives a projection $F$ on
$\mK_N$, commuting with all the $R_j$, such that
$\sigma(F)=\sigma(E)$ and
\begin{equation}
  \lVert E-F\rVert_\sigma
  \leq2\sqrt{2N(N-1)}\,\eps.
  \label{eq:E-F}
\end{equation}
Since $\sigma$ is a scalar multiple of $\Tr_{\mK_N}$ and $E$ has rank
$d$, the equality $\sigma(F)=\sigma(E)$ implies that $F$ has rank $d$.
By Lemma~\ref{lem:isometries}, applied with the compatible traces
$\tau$ and $\sigma$, there is an isometry
$W:\mH\to\mK_N$ satisfying
\[
  WW^*=F,
  \qquad
  \lVert V-W\rVert_\tau
  \leq2\sqrt{2N(N-1)}\,\eps.
\]
Define
\[
  Q_i=W^*R_iW.
\]
Since $F=WW^*$ commutes with every $R_i$,
\[
  Q_iQ_j=W^*R_iFR_jW=W^*R_iR_jW=Q_jQ_i,
\]
and the same calculation with $i=j$ shows $Q_i^2=Q_i$.  Thus the
$Q_i$ are pairwise commuting projections.

It remains to estimate the displacement.  From $A_i=V^*R_iV$ and
$\tau(A_i)=\tau(P_i)$,
\begin{align}
  \lVert R_iV-VP_i\rVert_\tau^2
   &=2\tau(P_i)-2\tau(A_iP_i) \notag\\
   &=\lVert A_i-P_i\rVert_\tau^2+\tau(A_i-A_i^2).
  \label{eq:intertwining}
\end{align}
Equations~\eqref{eq:a-p-1} and~\eqref{eq:a-p-2} imply that
\begin{equation}
 \Es{1\leq i\leq N} \lVert R_iV-VP_i\rVert_\tau^2
  \leq \frac{5}{2}(N-1)\eps^2.
  \label{eq:intertwining-bound}
\end{equation}
Moreover $WQ_i=R_iW$, because $F$ commutes with $R_i$.  Since $W$ is
an isometry,
\begin{align*}
  \lVert Q_i-P_i\rVert_\tau
   &=\lVert R_iW-WP_i\rVert_\tau \\
   &\leq
      2\lVert V-W\rVert_\tau
      +\lVert R_iV-VP_i\rVert_\tau.
\end{align*}
Averaging over $i$ and using Cauchy--Schwarz, \eqref{eq:E-F}, and
\eqref{eq:intertwining-bound}, we obtain
\[
\begin{aligned}
  \Es{1\leq i\leq N}\lVert Q_i-P_i\rVert_\tau
  &\leq 4\sqrt{2N(N-1)}\,\eps
     +\sqrt{\frac{5}{2}(N-1)}\,\eps \\
  &\leq \left(4\sqrt2+\sqrt{\frac52}\right)N\eps
   <8N\eps.
\end{aligned}
\]
This completes the proof.

\section{Lower bounds on the modulus of stability of \eqref{eq:z2-efficient}}\label{appendix:lower_bounds}
In this Appendix we prove Lemma \ref{lem:lower_bound_on_stability_rate_standard_presentation_Z_2^k}, and deduce some corollaries from it. 
Before diving into the proof, we first sketch the idea behind it. 
We provide a method for translating every graph $([k],E\subseteq{\binom{[k]}{2}})$ into a collection of order $2$ unitaries $A_1,...,A_k$ (actually, permutations) that satisfy that
\begin{equation}\label{eq:commuting_or_not}
   \forall ij\in \binom{[k]}{2}\ \colon\ \ \|A_iA_j-A_jA_i\|_\tau^2=\begin{cases}
    2 & ij\in E,\\
    0 & ij\notin E.
\end{cases} 
\end{equation}
Therefore, these unitaries correspond to an $(\nicefrac{|E|}{\binom{k}{2}},\mu_R)$-approximate representation of \eqref{eq:z2-efficient}. For any pair $ij\in E$ we have $\|A_iA_j-A_jA_i\|_\tau^2=2$, and thus the unitaries $A_i$ and $A_j$ would need to be changed  by some constant amount to make them commute. Thus, to fix the $A_i$'s into a genuine representation of $\Z_2^k$, we would need to move them by at least a constant times the proportion of the largest   matching which embeds in $E$. Hence, by choosing a graph which is already a  matching with $c$ edges, we deduce the Lemma. 

We begin by describing our construction. Given a graph $([k],E)$, the permutations $\{A_i\}_{i=1}^k$ act on  the set $\field^{[k]\cup E}$, namely bit strings  indexed by the vertices and edges of the graph. Now, each $A_i$ flips the $i^{\rm th}$ bit. Furthermore, it conditionally flips the  $ij^{\rm th}$  bit for $ij\in E$ where $i<j$, if the $j^{\rm th}$ bit is $1$. Namely, the $A_i$'s act as a NOT gate on the $i^{\rm th}$ bit composed with CNOT gates on all the $ij^{\rm th}$ bits, where $i<j$ and $ij\in E$. Now, all of these permutations are of order $2$. Equation \eqref{eq:commuting_or_not} is deduced by noting that  $A_iA_j=A_jA_i$ when $ij$ is not an edge, and that $A_iA_jA_iA_j$ has no fixed points when $ij$ is an edge --- this is because $A_iA_jA_iA_j$ must flip the $ij^{\rm th}$ bit.  
The von Neumann algebra $\mM$ containing the $A_i$'s is the one acting on $\complex^{\field^{[k]\cup E}}$ with its standard normalized trace $\tau^\cM(X)=\frac{1}{2^{k+|E|}}\Tr(X)$.
\begin{remark}
    A version of this construction (viewed in a different way) was used by Kozlov--Meshulam \cite{kozlov2019quantitative} to upper bound the Cheeger constant of the $k$-dimensional hypercube --- see Section 4.1 therein.
\end{remark}
\begin{claim}\label{claim:appendix1}
    The map $x_i\mapsto A_i\in \cM$ is a $(\nicefrac{|E|}{\binom{k}{2}},\mu_R)$-approximate representation of $\Z_2^k$ with respect to the presentation \eqref{eq:z2-efficient}.
\end{claim}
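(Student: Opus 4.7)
The plan is to directly compute the expected squared tracial norm of each type of relation under $\mu_R$ and show it equals $|E|/\binom{k}{2}$.

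First, I would dispatch the involution relations. Since each $A_i$ is a permutation of order $2$ by construction, $A_i^2 = \Id_\cM$ exactly for all $i \in \{1,\ldots,k\}$. Thus $\|\phi(x_i^2) - \Id\|_\tau^2 = 0$ for every involution relation, so this class contributes $0$ to the expected error regardless of which involution is sampled.

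Next, I would handle the commutator relations. Because the $A_i$ are self-adjoint involutions, $\phi([x_i,x_j]) = A_i A_j A_i A_j$. The key observation is that the tracial $2$-norm is invariant under right multiplication by a unitary: for any $z \in \cM$ and unitary $u \in \cM$, $\|zu\|_\tau^2 = \tau(u^* z^* z u) = \tau(z^* z) = \|z\|_\tau^2$ by cyclicity of the trace. Applying this with $z = A_iA_jA_iA_j - \Id$ and $u = A_j A_i$ (which is a unitary), and using $A_i^2 = A_j^2 = \Id$, one obtains
\[
\|A_i A_j A_i A_j - \Id\|_\tau \,=\, \|A_iA_j - A_j A_i\|_\tau\;.
\]
Combining this with~\eqref{eq:commuting_or_not} yields $\|\phi([x_i,x_j]) - \Id\|_\tau^2 = 2$ if $ij \in E$ and $0$ otherwise.

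Finally, I would take the expectation over $\mu_R$. Recall that $\mu_R$ is the equal mixture of the uniform distribution on the $k$ involution relations and the uniform distribution on the $\binom{k}{2}$ commutation relations. Therefore
\[
\Es{r\sim \mu_R}\|\phi(r) - \Id\|_\tau^2 \,=\, \tfrac{1}{2}\cdot 0 \,+\, \tfrac{1}{2}\cdot \Es{ij\in \binom{[k]}{2}}\|[A_i,A_j] - \Id\|_\tau^2 \,=\, \tfrac{1}{2}\cdot \frac{2|E|}{\binom{k}{2}} \,=\, \frac{|E|}{\binom{k}{2}}\;,
\]
which is the desired bound. There is no real obstacle here: the proof is essentially a calculation, with the only slightly non-trivial step being the passage from~\eqref{eq:commuting_or_not} to a bound on the commutator word, which is handled by unitary invariance of the tracial norm.
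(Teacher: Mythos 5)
Your proof is correct and follows essentially the same route as the paper's: involutions contribute zero error, and the commutator contribution reduces to~\eqref{eq:commuting_or_not} and is averaged against the $\nicefrac{1}{2}$--$\nicefrac{1}{2}$ mixture in $\mu_R$. The only difference is that you spell out explicitly, via right-invariance of $\|\cdot\|_\tau$ under unitaries, why $\|A_iA_jA_iA_j-\Id\|_\tau=\|A_iA_j-A_jA_i\|_\tau$, a step the paper leaves implicit when invoking~\eqref{eq:commuting_or_not}.
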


\begin{proof}
    With probability $\frac{1}{2}$, $\mu_R$ samples an involution relation, which is always satisfied by the  $A_i$'s. Furthermore, 
    \[
\Expectation_{i\neq j\in [k]}\left[\Vert A_iA_jA_iA_j-\Id_\mM\Vert_\tau^2\right]=\frac{1}{\binom{k}{2}}\sum_{ij\in E}\Vert A_iA_jA_iA_j-\Id_\mM\Vert_\tau^2=\nicefrac{2|E|}{\binom{k}{2}}.
\]
By combining these two observations, we deduce the claim.
\end{proof}

\begin{claim}\label{claim:appendix2}
Assume the largest matching in $E$ contains $c$ edges. Then, for every collection $\{B_i\}_{i=1}^k$ of order $2$ unitaries which pairwise commute in $\mN=P\mM_\infty P$, and  every partial isometry $w=PU\Id_\mM\in P\mU(\mM_\infty)\Id_\mM$, we have
\[
\Expectation_{i\in [k]} \Vert A_i-w^* B_iw\Vert_\tau^2 \geq  \frac{c}{16k} 
\]
or 
\[
\tau^\mM(\Id_\mM-w^*w)\geq  \frac{c}{16k}.
\]
In particular, there is no genuine representation of $\Z_2^k$ that is $(\frac{c}{16k},\mu_S)$-close to the $A_i$'s.
\end{claim}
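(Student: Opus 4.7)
My plan is a proof by contradiction: assume both alternatives fail and set $\delta_i:=\|A_i-w^*B_iw\|_\tau$, $\eta:=\tau^\mM(\Id_\mM-w^*w)$, so $\Expectation_i\delta_i^2<c/(16k)$ and $\eta<c/(16k)$. I would first use a Markov/averaging step over the matching: since $|V(M)|=2c$, one has $\sum_{(i,j)\in M}(\delta_i^2+\delta_j^2)=\sum_{l\in V(M)}\delta_l^2\le k\,\Expectation_i\delta_i^2<c/16$, so some edge $(i,j)\in M$ has $\delta_i^2+\delta_j^2$ small (at most a constant). For this edge, the construction of the $A_l$'s gives the ``strong non-commutation'' $\|[A_i,A_j]\|_\tau^2=2$, and the triangle inequality yields
\[
\sqrt{2}\;=\;\|[A_i,A_j]\|_\tau\;\le\;\|[w^*B_iw,w^*B_jw]\|_\tau\,+\,2(\delta_i+\delta_j).
\]
The heart of the proof is to bound the commutator of compressions. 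Writing $E=ww^*$ and $Q=P-E$ (so $EQ=QE=0$ and $Qw=0$), commutativity $B_iB_j=B_jB_i$ gives the clean identity
\[
[w^*B_iw,\,w^*B_jw]\;=\;w^*(B_jQB_i-B_iQB_j)w,
\]
since the $B_iB_j$/$B_jB_i$ contributions cancel when expanding $E=P-Q$.

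The key step I would then execute is a one-step ``reduction'' of the $Q$-term to the scalar $\tau(w^*Q_iw)$, where $Q_i:=P-B_iEB_i$ is a positive element. Specifically, $B_iQB_i=Q_i$, so $\tau(w^*Q_iw)=\tau(w^*w)-\|w^*B_iw\|_\tau^2=(1-\eta)-\|w^*B_iw\|_\tau^2$, and the triangle inequality $\|w^*B_iw\|_\tau\ge\|A_i\|_\tau-\delta_i=1-\delta_i$ yields the clean bound $\tau(w^*Q_iw)\le 2\delta_i-\eta\le 2\delta_i$ (and symmetrically $\le 2\delta_j$, via $Q_j$). Then I would bound the $Q$-off-diagonal term by a direct operator inequality: since $B_jEB_j\le P$, we have $Q(B_jEB_j)Q\le Q$, and hence
\[
(w^*B_iQB_jw)(w^*B_iQB_jw)^*\;=\;w^*B_iQ(B_jEB_j)QB_iw\;\le\;w^*B_iQB_iw\;=\;w^*Q_iw,
\]
so $\|w^*B_iQB_jw\|_\tau^2\le\tau(w^*Q_iw)\le 2\delta_i$. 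The adjoint computation gives the same bound in terms of $\delta_j$, hence $\|w^*B_iQB_jw\|_\tau\le\sqrt{2\min(\delta_i,\delta_j)}$, and analogously for $\|w^*B_jQB_iw\|_\tau$.

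Combining these gives the per-edge inequality
\[
\sqrt{2}\;\le\;2\sqrt{2\min(\delta_i,\delta_j)}\,+\,2(\delta_i+\delta_j),
\]
which forces $\min(\delta_i,\delta_j)\ge\Omega(1)$, and hence $\delta_i^2+\delta_j^2\ge\Omega(1)$. Summing this per-edge lower bound over the $c$ edges of the matching gives $\sum_{l\in V(M)}\delta_l^2\ge\Omega(c)$, so $\Expectation_i\delta_i^2\ge\Omega(c/k)$, contradicting the assumption $\Expectation_i\delta_i^2<c/(16k)$ once the constants are tracked. The main obstacle is the last step: obtaining exactly the claimed constant $c/(16k)$ (as opposed to a weaker $c/(Ck)$ for some $C>16$) requires sharpening the triangle-inequality/Cauchy--Schwarz bounds above, e.g.\ by keeping the $-\eta$ and $-\delta_i^2$ terms in $\tau(w^*Q_iw)\le 2\delta_i-\delta_i^2-\eta$ and by using the sharper telescoping bound $\|A_iA_j-(w^*B_iw)(w^*B_jw)\|_\tau\le\delta_i+\delta_j$; crucially, none of these bounds introduce any dependence on $\tau^\mN(P-ww^*)$, because the reduction $(w^*B_iQB_jw)(w^*B_iQB_jw)^*\le w^*Q_iw$ replaces that quantity by $\tau(w^*Q_iw)$, which is controlled purely by $\delta_i$ (and $\eta$).
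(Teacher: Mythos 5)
Your argument is correct in its overall structure, and at the key technical step it takes a genuinely different route from the paper. Both proofs reduce to a per-edge lower bound summed over a maximal matching, starting from $\Vert A_iA_j-A_jA_i\Vert_\tau=\sqrt2$ on edges. The paper telescopes $A_iA_jA_iA_j-\Id_\mM B_iB_jB_iB_j\Id_\mM$ into four terms of the form $\Vert (A_l-B_l)\Id_\mM\Vert_\tau$ and then has to control the part of $B_l\Id_\mM$ that leaves the corner, $\Vert(\Id_\infty-\Id_\mM)B_lA_l\Id_\mM\Vert_\tau$, by a pointwise comparison of diagonal entries that uses the fact that the $A_l$ are permutations of the standard basis. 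Your compression identity $[w^*B_iw,w^*B_jw]=w^*(B_jQB_i-B_iQB_j)w$ with $Q=P-ww^*$, together with $B_iQB_i=P-B_iww^*B_i$ and the reverse triangle inequality $\tau(w^*B_iQB_iw)=\tau(w^*w)-\Vert w^*B_iw\Vert_\tau^2\le 2\delta_i$, replaces that step by a purely algebraic argument (I checked the identity, the positivity argument $QB_jww^*B_jQ\le Q$, and the use of traciality to pass between $XX^*$ and $X^*X$; they are all fine). This buys several things: it uses only that the $A_l$ are unitaries with $\Vert A_l\Vert_\tau=1$ rather than their permutation structure, it works directly with a general $w=PU\Id_\mM$ without the paper's reduction to $U=\Id_\infty$, it never invokes $\tau^\mN(P-ww^*)$ (as the claim requires), and it sidesteps the paper's pointwise step $1-|x|^2\le|1-x|^2$, which as written needs further justification.

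Two caveats. First, a small slip: the per-edge inequality $\sqrt2\le 2\sqrt{2\min(\delta_i,\delta_j)}+2(\delta_i+\delta_j)$ does not force $\min(\delta_i,\delta_j)=\Omega(1)$ (take $\delta_i=0$ and $\delta_j$ of constant size); it forces $\max(\delta_i,\delta_j)=\Omega(1)$, which is what you actually need for $\delta_i^2+\delta_j^2=\Omega(1)$. Second, and more substantively, the constant: minimizing $\delta_i^2+\delta_j^2$ subject to your per-edge inequality — even after keeping the $-\delta_i^2$ and $-\eta$ terms in $\tau(w^*Q_iw)$ and using the sharp telescoping bound $\delta_i+\delta_j$ — yields only $\delta_i^2+\delta_j^2\gtrsim 1/36$ (the minimum sits near $\delta_i=\delta_j\approx 0.12$), not the $1/16$ needed to reproduce the claim verbatim; the sharpenings you list therefore do not close the gap, and your route proves the statement with $c/(Ck)$ for a somewhat larger universal constant $C$. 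This is harmless for everything the claim is used for (Lemma~\ref{lem:lower_bound_on_stability_rate_standard_presentation_Z_2^k} and the $\Omega(k\eps)$ lower bound on the modulus of stability of \eqref{eq:z2-efficient} are only meaningful up to constants), but if the literal $16$ is wanted one needs a different accounting, e.g.\ the paper's arrangement $(\sqrt2-\sqrt{\eps})^2\le 16(\delta_i^2+\delta_j^2)$ combined with the a priori bound $\eps\le 3-2\sqrt2$ available inside the contradiction hypothesis.
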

\begin{proof}
Recall that given our von Neumann algebra $\mM$, the algebra $\mM_\infty$ acts on the Hilbert space $\complex^{\field^{[k]\cup E}}\otimes \complex^\Z$. Let $\{e_v\otimes e_t\mid v\in \field^{[k]\cup E},t\in \Z\}$ be the standard basis of this Hilbert space.
Let $\{B_i\}_{i=1}^k$ be order $2$ unitaries which pairwise commute in $\mN=P\mM_\infty P $, and assume  there is a partial isometry $w=PU\Id_\mM\in P\mU(\mM_\infty)\Id_\mM$ and  $0<\eps\leq 3-2\sqrt{2}\approx 0.17$ such that 
\[
\Expectation_{i\in [k]} \Vert A_i-w^* B_iw\Vert_\tau^2\ ,\qquad  \tau^\mM(\Id_\mM-w^*w)\leq  \eps.
\]
This is a slightly weaker condition than for the $B_i$'s to be $(\eps,\mu_S)$-close to the $A_i$'s, as in Definition \ref{def:close}. Furthermore, we can assume without loss of generality that $U=\Id_\infty$, otherwise we replace $\mN$ by $U^*\mN U$ and $P$ by $U^* PU$. Thus, we are given that 
\[
\Expectation_{i\in [k]} \Vert A_i-\Id_\mM  B_i\Id_\mM\Vert_\tau^2\ ,\qquad  \tau^\mM(\Id_\mM-\Id_\mM P\Id_\mM)\leq  \eps,
\]
and
\[
\begin{split}
    \Vert \Id_\mM-\Id_\mM P\Id_\mM\Vert_\tau^2 &=\tau^{\mM}(\Id_\mM\underbrace{-2\Id_\mM P\Id_\mM+\Id_\mM P^2\Id_\mM}_{=-\Id_\mM P\Id_\mM})\leq \eps.
\end{split}
\]
Let $ij\in E$. By \eqref{eq:commuting_or_not}, we have
\begin{align*}
    \sqrt{2}&=\Vert A_iA_jA_iA_j-\Id_\mM\Vert_\tau\\
    &\leq \Vert A_iA_jA_iA_j-\Id_\mM P\Id_\mM\Vert_\tau+\Vert \Id_\mM P\Id_\mM-\Id_\mM\Vert_\tau\\
    &\leq \Vert A_iA_jA_iA_j-\Id_\mM B_iB_jB_iB_j\Id_\mM\Vert_\tau+\sqrt\eps\\
    &\leq \Vert \Id_\mM(A_i-B_i)A_jA_iA_j\Vert_\tau+\Vert \Id_\mM B_i(A_j-B_j)A_iA_j\Vert_\tau\\
    &+\Vert \Id_\mM B_iB_j(A_i-B_i)A_j\Vert_\tau+\Vert \Id_\mM B_iB_jB_i(A_j-B_j)\Id_\mM\Vert_\tau+\sqrt\eps\\
    &=(\heartsuit)+\sqrt\eps.
\end{align*}
Since the $A$'s are unitaries in $\mM$, and by abusing notation and denoting $\Vert X \Vert_\tau=\tau_\infty(X^*X)$, we have 
\begin{align*}
    (\heartsuit)&= \Vert \Id_\mM(A_i-B_i)\Id_\mM\Vert_\tau+\Vert \Id_\mM B_i(A_j-B_j)\Id_\mM\Vert_\tau\\
    &+\Vert \Id_\mM B_iB_j(A_i-B_i)\Id_\mM\Vert_\tau+\Vert \Id_\mM B_iB_jB_i(A_j-B_j)\Id_\mM\Vert_\tau\\
    &\leq \Vert (A_i-B_i)\Id_\mM\Vert_\tau+\Vert (A_j-B_j)\Id_\mM\Vert_\tau
    +\Vert (A_i-B_i)\Id_\mM\Vert_\tau+\Vert (A_j-B_j)\Id_\mM\Vert_\tau\\
    &=(\spadesuit).
\end{align*}
But,
\begin{align*}
     \Vert (A_i-B_i)\Id_\mM\Vert_\tau^2=  \Vert \Id_\mM(A_i-B_i)\Id_\mM\Vert_\tau^2+\Vert (\Id_\infty-\Id_\mM)(A_i-B_i)\Id_\mM\Vert_\tau^2,
\end{align*}
and since $(\Id_\infty -\Id_\mM)A_i=0$ and $\Id_\mM A_i=A_i\Id_\mM$, we have
\begin{align*}
    \Vert (\Id_\infty-\Id_\mM)(A_i-B_i)\Id_\mM\Vert_\tau^2= \Vert (\Id_\infty-\Id_\mM)B_i\Id_\mM\Vert_\tau^2=\Vert (\Id_\infty-\Id_\mM)B_iA_i\Id_\mM\Vert_\tau^2.
\end{align*}
Here the second equality is because by definition, 
\begin{align*}
    \Vert(\Id_\infty-\Id_\mM)B_i\Id_\mM\Vert_\tau^2=\sum_{\substack{v'\in \field^{[k]\cup E} \\ j\neq 1}}\sum_{v\in \field^{[k]\cup E}} |(e_{v'}\otimes e_j)^*B_1e_v\otimes e_1|^2=(\diamondsuit)\;,
\end{align*}
but since $A_i$ permutes $\{e_v\otimes e_1\}_{v\in \field^{[k]\cup E}}$, we have
\begin{align*}
    (\diamondsuit)=\sum_{\substack{v'\in \field^{[k]\cup E} \\ j\neq 1}}\sum_{v\in \field^{[k]\cup E}} |(e_{v'}\otimes e_j)^*B_iA_ie_v\otimes e_1|^2=\Vert (\Id_\infty-\Id_\mM)B_iA_i\Id_\mM\Vert_\tau^2\;.
\end{align*}
Now, for every $v\in \field^{[k]\cup E}$, we have
\[
1-|(e_v\otimes e_1)^*B_iA_ie_v\otimes e_1|^2\leq |1-(e_v\otimes e_1)^*B_iA_ie_v\otimes e_1|^2\leq \Vert \Id_\mM(\Id_\infty-B_iA_i)e_v\otimes e_1\Vert_2^2.
\]
On the other hand, since $B_iA_i$ is a contraction, 
\begin{align*}
    1-|(e_v\otimes e_1)^*B_iA_ie_v\otimes e_1|^2
    &\geq \Vert B_iA_i e_v\otimes e_1\Vert_2^2-|(e_v\otimes e_1)^*B_iA_ie_v\otimes e_1|^2\\
    &=\sum_{(v',j')\neq (v,1)}|(e_{v'}\otimes e_j)^*B_iA_ie_v\otimes e_1|^2\\
    &\geq \Vert (\Id_\infty-\Id_\mM)B_iA_i e_v\otimes e_1\Vert_2^2\;.
\end{align*}
Therefore, by averaging the combined inequalities over $v\in \field^{[k]\cup E}$, we get
\[
\Vert (\Id_\infty -\Id_\mM)B_iA_i\Id_\mM\Vert_\tau^2\leq  \Vert \Id_\mM(\Id_\infty-B_iA_i)\Id_\mM\Vert_\tau^2=\Vert A_i-\Id_\mM B_i\Id_\mM\Vert_\tau^2\;.
\]
Plugging all of this back to $(\spadesuit)$, we get
\begin{align*}
    (\sqrt{2}-\sqrt\eps)^2&\leq (\spadesuit)^2\\
    &\leq 8\Vert(A_i-B_i)\Id_\mM\Vert_\tau^2+8\Vert(A_j-B_j)\Id_\mM\Vert_\tau^2\\
    &\leq 16 \Vert \Id_\mM (A_i-B_i)\Id_\mM\Vert_\tau^2+16 \Vert \Id_\mM (A_j-B_j)\Id_\mM\Vert_\tau^2.
\end{align*}
and since we assumed $\eps<3-2\sqrt{2}$, we have $\sqrt{2}-\sqrt\eps> 1$ and 
\[
 \Vert \Id_\mM (A_i-B_i)\Id_\mM\Vert_\tau^2+ \Vert \Id_\mM (A_j-B_j)\Id_\mM\Vert_\tau^2> \nicefrac{1}{16}.
\]
Now, let $i_1j_1,...,i_cj_c$ be the edges of a maximal  matching in $E$. Then,
\[
k\eps \geq \sum_{i\in [k]}\Vert \Id_\mM (A_i-B_i)\Id_\mM\Vert_\tau^2\geq \sum_{t=1}^c\Vert \Id_\mM (A_{i_t}-B_{i_t})\Id_\mM\Vert_\tau^2+ \Vert \Id_\mM (A_{j_t}-B_{j_t})\Id_\mM\Vert_\tau^2>\nicefrac{c}{16}.
\]
This finishes the proof.
\end{proof}

By combining Claims \ref{claim:appendix1} and \ref{claim:appendix2} applied to a graph which is a matching with $c$ edges, we deduce Lemma \ref{lem:lower_bound_on_stability_rate_standard_presentation_Z_2^k}.

\subsection{The $L^\infty$ analogue}
As discussed in the introduction (see Remark \ref{rem:L^infty_analogue_defn}), it is more common in stability literature to use a $L^\infty$ analogue of Definition~\ref{def:eff-stab}, where the notion of almost-homomorphism and closeness are both measured by taking a supremum over relations and generators respectively, as opposed to averaging according to distributions $\mu_R,\mu_S$. 
Let us recall the exact definition. We say that a homomorphism $\rho\colon \mF(S)\to \mU(\mM)$ is an $(\eps,\infty)$-approximate representation if 
\[
\forall r\in R\ \colon \ \ \|\rho(r)-\Id_\mM\|_\tau^2\leq \eps.
\]
Furthermore, homomorphisms $\rho\colon \mF(S)\to \mU(\mM),\varphi\colon \mF(S)\to \mU(\mN)$ are $(\delta,\infty)$-close if there exists an isometry $w\in P\mM_\infty \Id_\mM$ such that
\[
\forall s\in S\colon\ \ \|\rho(s)-w^*\varphi(s)w\|_\tau^2\leq \delta.
\]
The goal of this subsection of the appendix is to provide a somewhat general procedure to convert lower bounds on the modulus of stability with respect to $\mu_S,\mu_R$ into a lower bound on the $L^\infty$ modulus of stability.

  Let $\langle S\colon R\rangle$ be a presentation of a group $\Gamma$, and let $\mu_S, \mu_R$ be fully supported distributions over the generators and relations respectively.
  Let $\sigma\in \textrm{Sym}(S)$ be a permutation of the generators. 
  Then, $\sigma$ extends (by the universal property of the free group) to an automorphism of $\mF(S)$ which we still denote by $\sigma$ as well. 
  The \emph{automorphism group} of the presentation  $\langle S\colon R\rangle$ is the subgroup of permutations in $\textrm{Sym}(S)$ that preserve $R$. Namely,  
  \[
\Phi=\textrm{Aut}(\langle S\colon R\rangle )=\left\{ \sigma \in \textrm{Sym}(S)\mid R=\sigma(R) \right\}.
  \]
Assume $R=\bigsqcup R_i$ is the decomposition of $R$ into orbits of $\Phi$, and assume $\mu_R$ is uniform over orbits. Let $\rho\colon \mF(S)\to \mU(\mM)$ be an $(\eps,\mu_R)$-approximate representation of $\Gamma$. Define $\rho'\colon \mF(S)\to \mU(\bigoplus_{\alpha\in \Phi}\mM)$ as follows:
    \[
\forall s\in S\ \colon \ \ \rho'(s)=\bigoplus_{\alpha\in \Phi} \rho(\alpha(s)).
    \]
    We denote by $\mM_\Phi=\bigoplus_{\alpha\in \Phi}\mM\subseteq \mM_\infty$ and by $\mM_\alpha$ the copy of $\mM$ at the $\alpha\in \Phi$ position.Note that $\mM_\Phi$ embeds in $\mM_\infty$, e.g. in the coordinates $1,...,|\Phi|$, and inherits a trace from it by defining 
    \[
    \tau^{\mM_\Phi}(\bigoplus A_\alpha)=\frac{\tau_\infty(\bigoplus A_\alpha)}{\tau_\infty(\Id_{\mM_\Phi})}=\frac{1}{|\Phi|}\sum_{\alpha\in \Phi}\tau^\mM(A_\alpha)=\Expectation_{\alpha\in \Phi}\tau^\mM(A_\alpha).
    \]
    To be consistent, whenever we use $\|X\|_\tau^2$ it means $$\tau_\infty(X^*X)=\tau_\mM(X^*X)=|\Phi|\cdot\tau_{\mM_\Phi}(X^*X).$$
    Let $w_i=\mu_R(R_i)$, namely the probability that $\mu_R$ samples a relation from the orbit $R_i$. Since we assumed $\mu_R$ is uniform over orbits, and since $\Phi$ acts transitively on each orbit, we know that 
    \[
    \forall r\in R_i\ \colon \ \ \Expectation_{\alpha\in \Phi}f(\alpha(r))=\Expectation_{r'\in R_i}f(r)
    \]
    for any function $f\colon \mF(S)\to \complex$. Therefore, 
    \[
    \begin{split}
       \forall r\in R_i\ \colon \ \  \frac{w_i}{|\Phi|}\|\rho'(r)-\Id_{\mM_\Phi}\|_\tau^2&=w_i\Expectation_{\alpha\in \Phi} \|\rho(\alpha(r))-\Id_\mM\|_\tau^2\\
        &=w_i\Expectation_{r'\in R_i} \|\rho(r')-\Id_\mM\|_\tau^2\\
        &\leq \sum_j w_j\Expectation_{r'\in R_j} \|\rho(r')-\Id_\mM\|_\tau^2\\
        &=\Expectation_{r\sim \mu_R}\|\rho(r)-\Id_\mM\|_\tau^2\leq \eps.
    \end{split}
    \]
    Hence, 
    \[
    \forall r\in R\ \colon\ \frac{1}{|\Phi|}\|\rho'(r)-\Id_{\mM_\Phi}\|_\tau^2\leq \max({\nicefrac{1}{w_i}})\cdot\eps,
    \]
which in turn means that $\rho'$ is a $(\max_i\{{\nicefrac{1}{w_i}}\}\cdot\eps,\infty)$-approximate representation --- since $$\frac{1}{|\Phi|}\|\cdot\|_\tau^2=\|\cdot\|_{\tau^{\mM_\Phi}}^2,$$
which is the relevant parameter to consider.
\begin{corollary}
    Any $(\eps,\mu_R)$-approximate representation of $\langle S\colon R\rangle$, where $\mu_R$ is uniform over orbits of $\Phi=\textrm{Aut}(\langle S\colon R\rangle)$, can be transformed into a $(\max_i\{{\nicefrac{1}{w_i}}\}\cdot\eps,\infty)$-approximate representation. 
\end{corollary}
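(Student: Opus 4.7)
The plan is to construct an explicit symmetrization $\rho'$ of $\rho$ that averages over the action of $\Phi$ on the generators, and to verify that it automatically satisfies every relation up to a uniform error governed by $\max_i(1/w_i)\cdot\eps$. First I would set $\mM_\Phi = \bigoplus_{\alpha \in \Phi} \mM$, which embeds as a subalgebra of $\mM_\infty$ by placing the $|\Phi|$ copies of $\mM$ into distinct coordinates; equipping it with the normalized trace $\tau^{\mM_\Phi}(\bigoplus_\alpha A_\alpha) = \Expectation_{\alpha \in \Phi} \tau^\mM(A_\alpha)$ makes it into a tracial von Neumann algebra, and the natural inclusion $\mM_\Phi \hookrightarrow \mM_\infty$ is trace-preserving up to a factor of $|\Phi|$.

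Next I would define a map on generators by $s \mapsto \bigoplus_{\alpha \in \Phi} \rho(\alpha(s))$. Since $\mF(S)$ is free on $S$, this extends uniquely to a homomorphism $\rho'\colon \mF(S) \to \mU(\mM_\Phi)$; and because each $\alpha \in \Phi$ is an automorphism of $\mF(S)$ (extending its permutation of $S$), one checks inductively that $\rho'(u) = \bigoplus_{\alpha \in \Phi} \rho(\alpha(u))$ for every word $u \in \mF(S)$.

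The core computation is a bound on $\|\rho'(r) - \Id_{\mM_\Phi}\|_{\tau^{\mM_\Phi}}^2$ for an arbitrary $r \in R_i$. Directly from the definitions of $\rho'$ and of the trace on $\mM_\Phi$,
\[
\|\rho'(r) - \Id_{\mM_\Phi}\|_{\tau^{\mM_\Phi}}^2 \,=\, \Expectation_{\alpha \in \Phi} \|\rho(\alpha(r)) - \Id_\mM\|_\tau^2\,.
\]
Since $\Phi$ acts transitively on the orbit $R_i$, averaging over $\alpha \in \Phi$ coincides with averaging uniformly over $R_i$. The assumption that $\mu_R$ is uniform over orbits means that each $r' \in R_i$ receives weight $w_i/|R_i|$ under $\mu_R$, and so
\[
\Expectation_{r' \in R_i} \|\rho(r') - \Id_\mM\|_\tau^2 \,\leq\, \frac{1}{w_i}\Expectation_{r\sim\mu_R}\|\rho(r)-\Id_\mM\|_\tau^2 \,\leq\, \frac{\eps}{w_i} \,\leq\, \max_i\big(\nicefrac{1}{w_i}\big)\cdot\eps\,.
\]
Taking the maximum over $r \in R$ then establishes the claim.

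I do not anticipate any serious obstacle. The main subtlety worth flagging is book-keeping the various trace normalizations --- namely $\tau^\mM$, $\tau_\infty$ restricted to $\mM_\Phi$, and the unital $\tau^{\mM_\Phi}$ --- which differ by factors of $|\Phi|$; in particular, $\frac{1}{|\Phi|}\|\cdot\|_\tau^2 = \|\cdot\|_{\tau^{\mM_\Phi}}^2$, and this is the form of the bound that matches the $(\delta,\infty)$-approximate-representation convention used in the statement.
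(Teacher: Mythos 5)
Your proposal is correct and follows essentially the same route as the paper's argument: you construct the same symmetrized homomorphism $\rho'(s)=\bigoplus_{\alpha\in\Phi}\rho(\alpha(s))$ on $\mM_\Phi$, use transitivity of $\Phi$ on each orbit $R_i$ to turn the average over $\alpha\in\Phi$ into the average over $R_i$, and bound that orbit-average by $\eps/w_i$ using the uniformity of $\mu_R$ on orbits. The only cosmetic difference is that the paper tracks the factor $1/|\Phi|$ by working with the non-normalized $\|\cdot\|_\tau^2=\tau_\infty(X^*X)$ and converting at the end, while you work with $\|\cdot\|_{\tau^{\mM_\Phi}}$ directly; the content is identical.
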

\begin{remark}
    By applying this construction on the example from Claim~\ref{claim:appendix1}, the resulting $\rho'$ is a $(\nicefrac{2|E|}{\binom{k}{2}},\infty)$-approximate representation of \eqref{eq:z2-efficient}. This is because $\Phi=\textrm{Sym}(S)$, and there are two  orbits --- the commutation relations and the involutions --- where $\mu_R$ is supprted equally on each of them. Namely, $w_{commutation}=w_{involution}=\nicefrac{1}{2}$.
\end{remark}
Recall that our goal is to translate $L^1$ lower bounds into $L^\infty$ lower bounds. Namely, we would like to deduce that, if every genuine representation of $\Gamma$ is $(\delta,\mu_S)$-far from $\rho$, then every genuine representation of $\Gamma$ is $(\delta',\infty)$-far from $\rho'$, for $\delta'=C\delta$.  Since $\mu_S$ was assumed to be fully supported, the $L^\infty$ distance is lower bounded by the $L^1$-distance, and being $(\delta',\mu_S)$-far from $\rho'$ implies being $(\delta',\infty)$-far from it. Thus, we can forget about the $L^\infty$ notion of distance and lower bound our usual notion of distance.

To that end, assume that every genuine representation of $\Gamma$ is $(\delta,\mu_S)$-far from $\rho$. Namely, for every genuine representation $\varphi\colon \Gamma\to \mU(\mN)$, where $\mN=P\mM_\infty P$, and every isometry $w\in P\mM_\infty \Id_\mM$, we have 
\[
\max\left\{\Expectation_{s\sim \mu_S} \|\rho(s)-w^*\varphi(s)w\|_\tau^2\ ,\ \tau^\mM(\Id_\mM-w^*w)\right\}\geq \delta.
\]
Let $\varphi'\colon \Gamma\to \mU(\mN)$ be a genuine representation of $\Gamma$ which is $(\delta',\mu_S)$-close to $\rho'$, i.e.\ it satisfies 
\[
\frac{1}{|\Phi|}\Expectation_{s\sim \mu_S} \|\rho'(s)-w^*\varphi'(s)w\|_\tau^2\ ,\ \tau^{\mM_\Phi}(\Id_{\mM_\Phi}-w^*w)\leq  \delta',
\]
where  again  $\|X\|_\tau^2=\tau_\infty(X^*X)$.
As before, we can assume $w=PI_{\mM_\Phi}$. Then,
\[
\begin{split}
    \Expectation_{\alpha\in \Phi}\Expectation_{s\sim \mu_S} \|\rho(\alpha(s))-\Id_{\mM_\alpha}\varphi'(s)\Id_{\mM_\alpha}\|_\tau^2&=\frac{1}{|\Phi|}\Expectation_{s\sim \mu_S}\sum_{\alpha\in \Phi}\|\Id_{\mM_\alpha}\rho'(s)\Id_{\mM_\alpha}-\Id_{\mM_\alpha}\varphi'(s)\Id_{\mM_\alpha}\|_\tau^2 \\
    &\leq \frac{1}{|\Phi|}\Expectation_{s\sim \mu_S}\|\rho'(s)-\Id_{\mM_\Phi}\varphi'(s)\Id_{\mM_\Phi}\|_\tau^2\leq \delta'.
\end{split}
\]
In particular, by Markov's inequality, for at least two thirds of the $\alpha\in \Phi$ we have 
\[
\Expectation_{s\sim \mu_S} \|\rho(\alpha(s))-\Id_{\mM_\alpha}\varphi'(s)\Id_{\mM_\alpha}\|_\tau^2\leq 3\delta'.
\]
Similarly,
\[
\begin{split}
    \delta'&\geq \tau^{\mM_\Phi}(\Id_{\mM_\Phi}-\Id_{\mM_\Phi}P\Id_{\mM_\Phi})\\
    &=\Expectation_{\alpha\in\Phi}\tau_\infty(\Id_{\mM_\alpha}-\Id_{\mM_\alpha}P\Id_{\mM_\alpha})
\end{split}\;,
\]
and for at least two thirds of the $\alpha\in \Phi$ we have 
\[
\tau^{\mM_\alpha}(\Id_{\mM_\alpha}-\Id_{\mM_\alpha}P\Id_{\mM_\alpha})\leq 3\delta'.
\]
Hence, $\delta'\geq \frac{\delta}{3}$, and we deduce that every genuine representation of $\Gamma\cong \langle S\colon R\rangle$ is at least $(\nicefrac{\delta}{3},\infty)$-away from $\rho'$.
\begin{corollary}
    The transformation $\rho\mapsto\rho'$ we described  translates a $(\eps,\mu_R)$-approximate representation into a $((\max{\nicefrac{1}{w_i}})\cdot\eps,\infty)$-approximate representaion, and if every genuine representation of $\Gamma$ is $(\delta,\mu_S)$-far from $\rho$, then every genuine representation is $(\nicefrac{\delta}{3},\infty)$-far from $\rho'$.
\end{corollary}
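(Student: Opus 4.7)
\bigskip
\noindent\textbf{Proof proposal.}
The plan is essentially to formalize the derivation sketched in the preceding paragraphs. First I would set up the construction: fix the automorphism group $\Phi=\mathrm{Aut}(\langle S:R\rangle)$ of permutations of $S$ that preserve $R$, decompose $R=\bigsqcup R_i$ into $\Phi$-orbits (so $\Phi$ acts transitively on each $R_i$), and define $\rho':\mF(S)\to \mU(\mM_\Phi)$ with $\mM_\Phi=\bigoplus_{\alpha\in\Phi}\mM$ by $\rho'(s)=\bigoplus_{\alpha\in\Phi}\rho(\alpha(s))$. Identify $\mM_\Phi$ with a subalgebra of $\mM_\infty$ and equip it with the trace $\tau^{\mM_\Phi}(\bigoplus A_\alpha)=\Expectation_{\alpha\in\Phi}\tau^\mM(A_\alpha)$; in particular $\|X\|_{\tau^{\mM_\Phi}}^2=\frac{1}{|\Phi|}\|X\|_\tau^2$ with the $\tau_\infty$-norm convention used in the excerpt.

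For the first (approximate-representation) claim, I would fix an orbit $R_i$ with weight $w_i=\mu_R(R_i)$ and any $r\in R_i$ and compute
\[
\|\rho'(r)-\Id_{\mM_\Phi}\|_{\tau^{\mM_\Phi}}^2 \,=\,\Expectation_{\alpha\in\Phi}\|\rho(\alpha(r))-\Id_\mM\|_\tau^2\,=\,\Expectation_{r'\in R_i}\|\rho(r')-\Id_\mM\|_\tau^2,
\]
where the second equality uses transitivity of $\Phi$ on $R_i$. Since $\mu_R$ is uniform over orbits and weighs $R_i$ by $w_i$, the right-hand side is at most $\eps/w_i\leq\eps\cdot\max_j(1/w_j)$. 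As this holds for every $r\in R$ this exhibits $\rho'$ as an $(\max_j(1/w_j)\cdot\eps,\infty)$-almost homomorphism.

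For the distance part, I would argue by contrapositive. Suppose $\varphi'$ is a genuine representation of $\Gamma$ on some $\mN=P(\mM_\Phi)_\infty P$ together with a partial isometry $w$ realizing $(\delta',\infty)$-closeness to $\rho'$; by the conjugation trick in the text (replacing $\mN$ and $P$ by $U^*\mN U$ and $U^*PU$) we may assume $w=P\Id_{\mM_\Phi}$. Because the $\infty$-notion of closeness dominates the $\mu_S$-averaged notion, and because $\|\rho'(s)-\Id_{\mM_\Phi}\varphi'(s)\Id_{\mM_\Phi}\|_{\tau^{\mM_\Phi}}^2=\Expectation_{\alpha\in\Phi}\|\rho(\alpha(s))-\Id_{\mM_\alpha}\varphi'(s)\Id_{\mM_\alpha}\|_\tau^2$, a Markov argument over $\alpha\in\Phi$ yields an $\alpha_0\in\Phi$ with both
\[
\Expectation_{s\sim\mu_S}\|\rho(\alpha_0(s))-\Id_{\mM_{\alpha_0}}\varphi'(s)\Id_{\mM_{\alpha_0}}\|_\tau^2\leq 3\delta'\quad\text{and}\quad \tau^{\mM_{\alpha_0}}(\Id_{\mM_{\alpha_0}}-\Id_{\mM_{\alpha_0}}P\Id_{\mM_{\alpha_0}})\leq 3\delta'.
\]
Thus $\rho\circ\alpha_0$ is $(3\delta',\mu_S)$-close to a genuine representation (namely the compression of $\varphi'$). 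Composing with $\alpha_0^{-1}$, and using that pre-composition by an automorphism of the presentation maps genuine representations of $\Gamma$ to genuine representations of $\Gamma$, this produces a genuine representation that is close to $\rho$; assuming $\mu_S$ is $\Phi$-invariant (which is the natural companion to the assumption that $\mu_R$ is uniform over orbits, and which holds in all our examples), the $\mu_S$-averaged distance is preserved. The assumption that every genuine representation is $(\delta,\mu_S)$-far from $\rho$ then forces $3\delta'\geq\delta$, i.e.\ $\delta'\geq\delta/3$.

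The main subtle point is to verify $\Phi$-invariance of $\mu_S$ so that the distance from $\rho\circ\alpha_0$ to a given representation, measured under $\mu_S$, coincides with the distance from $\rho$ to its pre-composition with $\alpha_0^{-1}$; this justifies transferring the $(\delta,\mu_S)$-farness hypothesis from $\rho$ to $\rho\circ\alpha_0$. Everything else is bookkeeping with the trace and Markov's inequality, so the two bullets of the corollary follow.
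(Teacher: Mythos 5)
Your proposal is correct and follows essentially the same route as the paper: the same per-orbit computation giving the $(\max_i\{\nicefrac{1}{w_i}\}\cdot\eps,\infty)$-almost-homomorphism bound, and the same reduction of $\infty$-closeness to $\mu_S$-closeness followed by a Markov argument over $\alpha\in\Phi$ to transfer the farness hypothesis. The one point you flag as an extra assumption---$\Phi$-invariance of $\mu_S$, needed to pre-compose by $\alpha_0^{-1}$---is automatic in this setting, since $\mu_S$ is the distribution induced from $\mu_R$ as in Remark~\ref{rk-mus} and $\mu_R$, being uniform over $\Phi$-orbits, is $\Phi$-invariant; the paper leaves this final step implicit, so making it explicit is a (correct) refinement rather than a departure.
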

\begin{remark}
    By applying this corollary to the construction from the beginning of the appendix, we conclude that the $L^\infty$ modulus of stability of the presentation \eqref{eq:z2-efficient} is $\Omega(k\eps)$.
\end{remark}



%
\bibliographystyle{amsplain}
\newcommand{\etalchar}[1]{$^{#1}$}


\begin{dajauthors}
\begin{authorinfo}[mc]
  Michael Chapman\\
	Member of the school of Mathematics\\
  Institute for Advanced Study\\
  Princeton, New Jersey, USA\\
  mchapman\imageat{}ias\imagedot{}edu \\
\end{authorinfo}
\begin{authorinfo}[tv]
  Thomas Vidick\\
  Professor\\
  École Polytechnique Fédérale de Lausanne\\
  Lausanne, Switzerland\\
  thomas\imagedot{}vidick\imageat{}epfl\imagedot{}ch \\
\end{authorinfo}
\begin{authorinfo}[hy]
  Henry Yuen\\
  Associate Professor\\
   Columbia University\\
  New York, New York, USA\\
  hyuen\imageat{}cs\imagedot{}columbia\imagedot{}edu\\
\end{authorinfo}
\end{dajauthors}

\end{document}